\newtheorem{thm}{\textbf{Theorem}}
\newtheorem{prob}{\textbf{Problem}}
\setlist{nolistsep,leftmargin=*}
\title{TIPS: Mining Top-K Locations \\ to Minimize User-Inconvenience \\ for Trajectory-Aware Services}
\author{
\IEEEauthorblockN{
Shubhadip Mitra, 
Priya Saraf, 
Arnab Bhattacharya 
}\\
\IEEEauthorblockA{ Dept. of Computer Science and Engineering, Indian Institute of Technology, Kanpur, India. \\
\{smitr,priyas,arnabb\}@cse.iitk.ac.in}
}
\newcommand{\ds}{\ensuremath{d}\xspace}
\newcommand{\dr}{\ensuremath{d_r}\xspace}
\newcommand{\tops}{TOPS\xspace}
\newcommand{\tips}{TIPS\xspace}
\newcommand{\maxtips}{MAX-TIPS\xspace}
\newcommand{\meantips}{AVG-TIPS\xspace}
\newcommand{\traj}{\ensuremath{\mathcal{T}}\xspace}
\newcommand{\fac}{\ensuremath{\mathcal{F}}\xspace}
\newcommand{\subtraj}{\ensuremath{\mathcal{T}^\prime}\xspace}
\newcommand{\reptraj}{\ensuremath{\mathcal{RT}}\xspace}
\newcommand{\s}{\ensuremath{\mathcal{S}}\xspace}
\newcommand{\subs}{\ensuremath{\mathcal{S}^\prime}\xspace}
\newcommand{\q}{\ensuremath{\mathcal{Q}}\xspace}
\newcommand{\uf}{user-fraction\xspace}
\newcommand{\ufs}{\ensuremath{\gamma}\xspace}
\newcommand{\maxi}{\ensuremath{M \negthinspace I}\xspace}
\newcommand{\totali}{\ensuremath{T \negthinspace I}\xspace}
\newcommand{\ai}{\ensuremath{A \negthinspace I}\xspace}
\newcommand{\mif}{MIF\xspace}
\newcommand{\nc}{NetClus\xspace}
\newcommand{\hcc}{HCC\xspace}
\newcommand{\incg}{GREAT\xspace}
\newcommand{\inc}{GREAT\xspace}
\newcommand{\opt}{Opt\xspace}
\newcommand{\clus}{Clus\xspace}
\newcommand{\nf}{swap-fraction\xspace}
\newcommand{\nfs}{\ensuremath{s \negthinspace f}\xspace}
\newcommand{\nn}{\ensuremath{\mathcal{NF}}\xspace}
\newcommand{\trajthres}{\ensuremath{\ufs  \times |\traj|}\xspace}
\newcommand{\bs}{Beijing-Small\xspace}
\newcommand{\bm}{Beijing-Medium\xspace}
\newcommand{\bl}{Beijing-Large\xspace}
\newcommand{\bms}{Beijing-Medium-Sampled\xspace}
\newcommand{\bls}{Beijing-Large-Sampled\xspace}
\newcommand{\subfigwidth}{0.49\columnwidth}
\newcommand{\figcaption}[1]{\vspace*{-2mm}\caption{#1}\vspace*{-0.5mm}}
\newcommand{\tabcaption}[1]{\vspace*{-2mm}\caption{#1}\vspace*{-0.5mm}}
\newcommand{\moveup}{\vspace*{-2mm}}
\newcommand{\moveups}{\vspace*{-3mm}}
\newcommand{\eps}{\ensuremath{\epsilon}\xspace}
\newcommand{\epsinc}{\ensuremath{(1 + \epsilon)}\xspace}
\begin{document}

\setlength{\abovedisplayskip}{3pt}
\setlength{\belowdisplayskip}{3pt}
\setlength{\abovedisplayshortskip}{3pt}
\setlength{\belowdisplayshortskip}{3pt}

\maketitle

\begin{abstract} 
	\emph{Facility location} problems aim to identify the best locations to set
	up new services. Majority of the existing works typically assume that the
	users are \emph{static}. However, there exists a wide array of services
	such as fuel stations, ATMs, food joints, etc., that are widely accessed by
	mobile users besides the static ones.  Such \emph{trajectory-aware
	services} should, therefore, factor in the trajectories of its users rather
	than simply their static locations.  In this work, we introduce the problem
	of optimal placement of facility locations for such trajectory-aware
	services that \emph{minimize the user inconvenience}.  The inconvenience of
	a user is the \emph{extra} distance traveled by her from her regular path
	to avail a service.  We call this the \tips problem \emph{(Trajectory-aware
	Inconvenience-minimizing Placement of Services)} and consider two variants
	of it.  The goal of the first variant, \maxtips, is to minimize the
	\emph{maximum} inconvenience faced by any user, while that of the second,
	\meantips, is to minimize the \emph{average} inconvenience over all the
	users. We show that both these problems are NP-hard, and propose multiple
	efficient heuristics to solve them.  Empirical evaluation on real
	urban-scale road networks validate the efficiency and effectiveness of the
	proposed heuristics.
\end{abstract}

\begin{IEEEkeywords}
Trajectory-aware service, User inconvenience, \tips.
\end{IEEEkeywords}

\section{Introduction}
\label{sec:intro} 

\emph{Facility location} problems identify the best locations to set up new
facilities (or services) for its users \cite{drezner1995facility,
FacilityLocation}.  This has been also studied as \emph{optimal location}
problems \cite{du2005optimal, zhang2006progressive, ghaemi2010optimal,
xiao2011optimal, chen2014efficient}.  Majority of the existing works, however,
assume that the users of the service are \emph{static}
\cite{qi2012min,qi2014min,liu2016finding}.  However, services such as fuel
stations, automobile service stations, ATMs, food joints, convenience stores,
etc., are widely accessed by mobile users, besides the static users
\cite{tops_paper}. For example, it is common for many users to make their daily
purchases while returning from their  workplaces. This practice is increasingly
becoming common because of rapid expansion of cities due to growing urban
population and consequent longer work commute trips. The placement of such
services should therefore take into account the \emph{mobility patterns} or the
\emph{trajectories} of its users, rather than simply their static home and
office locations.  The necessity for such \emph{mobility-aware location}
selection has also been highlighted in recent studies
\cite{tops_paper,MILERUN,PINOCCHIO}. Moreover, our experimental findings in
Sec.~\ref{sec:baseline comparison}  suggest that there is 10-40 \% cost-savings
if we factor in the user-trajectories instead of their static locations.

A \emph{user trajectory} is a sequence of spatial points that lie on the path
of a user while travelling. It is important to note that trajectories strictly
generalize the static user scenario as static users can always be modeled as
trajectories with a single location.  In general, however, trajectories capture
user location patterns more effectively and realistically.

In this work, we extend two key optimal location problems, namely the
\emph{MinMax Location Query}
\cite{xiao2011optimal,chen2014efficient,liu2016finding,TODS:OLQRN} and the
\emph{Min-Dist Location Query}
\cite{qi2012min,qi2014min,MILERUN,Qi:2012:BBM:2483739.2483747} (also referred
to as the \emph{MinSum Location Query} \cite{VLDB:OLQRN}). Given a set of
customers (or users) $\mathcal{C}$, a set of existing facilities $\mathcal{F}$,
and a set of candidate locations that can host a new facility $\mathcal{S}$,
the goal of the MinMax Location Query (respectively, Min-Dist Location Query)
is to identify a facility location in $\mathcal{S}$ that minimizes the maximum
(respectively, average) distance of any user to its nearest facility.

Majority of the existing works assume that the users are \emph{static} and
ignore their mobile behavior. Further, most of these works also restrict
themselves to reporting a \emph{single} facility location, which is
polynomially solvable. Motivated by these two limitations, in this work, we
introduce two novel optimal location problems, \emph{\maxtips} and
\emph{\meantips}, that factor the user trajectories and report any desired
number of facility locations.

Formally, given a set of trajectories \traj, a set of existing facilities
$\mathcal{F}$, a set of candidate sites \s, an integer $k$, and a \uf $\ufs \in
[0,1]$, the \maxtips problem seeks to report a set $\q \subseteq \s$ of $k$
locations that minimizes the \emph{maximum inconvenience} over any $\ufs$
fraction of the trajectories \traj, while the \meantips problem aims to identify
the $k$ locations that minimize the \emph{average inconvenience} faced by any
user.  The \emph{inconvenience} of a user on a trajectory is defined as the
\emph{extra distance} traveled by her with respect to her normal trajectory to
avail the service.  The proposed problems are NP-hard.

While for critical services such as ambulance or fire stations, it is desirable
to minimize the \emph{maximum inconvenience}, for other services such as ATMs,
fuel stations or convenience stores, it is desirable to minimize the
\emph{average inconvenience}.  Both these \tips problems have applications in
various resource planning scenarios
\cite{Qi:2012:BBM:2483739.2483747,VLDB:OLQRN,Khan:2018:GOAL}.
Some of the direct applications are:

\noindent \emph{Placement of drop-boxes for crowd-sourced taxi shipment
service:} Chen et al. \cite{TaxiExp} presented a novel scheme for city-wide
shipment of items using the regular passenger-carrying taxis in a crowd-sourced
manner. The taxis participating in this service are required to collect and
drop the shipments at a nearby drop-box whenever they are idle, i.e., not
carrying any passenger. The drop-boxes must be located in a
manner such that the maximum inconvenience faced by majority of the taxis is
minimal.

\noindent \emph{Locating multiple ATMs of a given bank:} Suppose a bank plans
to set up $k$ ATMs in a given city.  Since mobile users often access nearby
ATMs, they must be placed such that the average user-inconvenience is
minimized.

\noindent \emph{Placement of food trucks of a given chain:} Extending the
example given in \cite{Khan:2018:GOAL}, consider a restaurant chain that wants
to place its $k$ food trucks on the road network, so as to serve the mobile
customers (who have shared their trips) such that the average inconvenience
faced by any user is minimal.

We illustrate the \tips problems through an example shown in
Fig.~\ref{fig:example}. There are $6$ trajectories $T_1,\dots,T_6$ (shown with
blue dashed lines with arrows indicating the directions of the respective trip),
one existing facility at $s_0$ (marked in green) and $4$ candidate sites
$s_1,\dots,s_4$ (shown in red) to host a new facility.  The road segments are
marked in black with corresponding distances (assumed to be the same on both
ways). The nodes $v_1$ and $v_2$ (shown in black) are general points on the road
network that do not host a facility.  If the user on trajectory $T_1$ (that
passes through $v_1$ and $s_1$) wishes to access the facility at $s_0$, she
needs to detour from $v_1$, visit $s_0$, and join her regular path at $s_1$.  As
a result, her inconvenience (i.e., the extra distance traveled) is $1+2-2=1$
unit. Now if another facility comes up at $s_1$, her inconvenience reduces to
$0$, as there is no detour. For trajectories $T_2,T_3$, the inconvenience w.r.t.
$s_0$ is $7+2+2+7=18$ units (as they need to take a round trip via
$s_2,s_1,s_0,s_1,s_2$, in that order). Similarly, for trajectories $T_4,T_5$ and
$T_6$, it is $30,32$ and $20$ respectively.  Note that the inconvenience is
measured w.r.t. the nearest facility from any point on the trajectory.

Assume that the service-provider wants to set up $2$ new facilities besides the
existing one at $s_0$, with the objective to minimize the \emph{maximum}
inconvenience faced by any user. If the new facilities are hosted at
$\{s_1,s_2\}$, the inconvenience of trajectories $T_1,T_2,T_3$ are $0$ units
each.  For trajectories $T_4$ and $T_5$ the nearest facility is $s_2$ and,
therefore, their inconveniences are $12$ units each.  Similarly, for trajectory
$T_6$, the nearest facility is $s_1$ and, thus, its inconvenience is $16$
units.  Therefore, the maximum inconvenience among all the trajectories due to
the selection $\{s_0,s_1,s_2\}$ is $16$ units.  The maximum inconvenience for
all such selections of $2$ new sites (along with $s_0$) are listed in
Fig.~\ref{fig:example} under the column $\ufs=1$. (We will shortly explain the
meaning of $\ufs$.) The selection $\{s_0,s_3,s_4\}$ offers the optimal maximum
inconvenience of $12$ units.  Importantly, although most number of trajectories
pass through $s_2$, it is \emph{not} part of the optimal solution.

Next, suppose the objective is to minimize the \emph{average} (or equivalently,
the \emph{total}) inconvenience over all the user trajectories.  The last
column in the table in Fig.~\ref{fig:example} lists the total inconvenience for
all possible selections.  The selection $\{s_0,s_2,s_3\}$ offers the optimal
total inconvenience of $21$ units. Thus, the optimal average inconvenience is
$21/6=3.5$ units.  (The optimal maximum inconvenience was $12$ units.)

\begin{figure}[t]
  \centering
  \includegraphics[width=0.30\columnwidth]{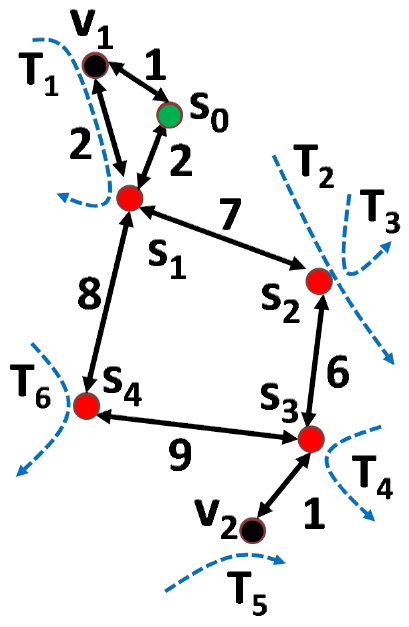}
  \includegraphics[width=0.60\columnwidth]{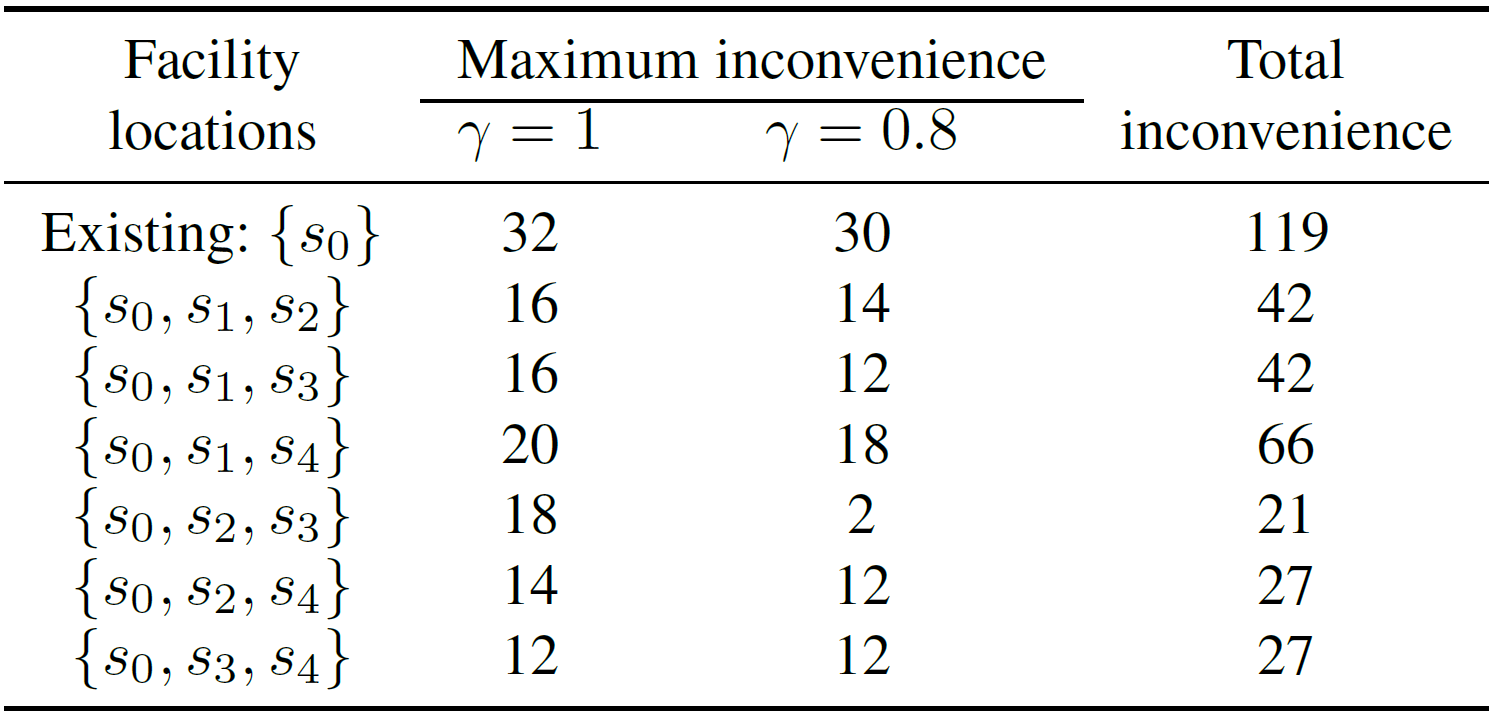}
  \figcaption{Illustration of the need for minimizing user-inconvenience for
  trajectory-aware services.}
  \label{fig:example}
\end{figure}

The maximum inconvenience problem suffers from the issue of outlier
trajectories where a trajectory is very different from all the other ones and,
therefore, accommodating for it becomes harder.  Thus, instead of considering
all the trajectories, the service provider may choose a fraction of the
trajectories, over which the maximum inconvenience will be minimized.  We call
this fraction the \emph{\uf}, \ufs.

Referring to Fig.~\ref{fig:example}, when $\ufs=0.8$, the goal is to minimize
the maximum inconvenience over at least $0.8 \times 6 = 4.8$ or $5$
trajectories. The values of the maximum inconvenience for all the selections
for $\ufs=0.8$ are listed in the table. Note that the optimal selection for
$\ufs=0.8$ is $\{s_0,s_2,s_3\}$ which is different from the optimal selection
for $\ufs=1$. Note that the optimal inconvenience for $\gamma = 0.8$ falls to
$2$ units as compared to $12$ units for $\ufs=1$.

A naive approach to solve either of the \tips problems involves enumerating all
$k$-sized subsets of $\mathcal{S}$, computing the maximum or the average
inconvenience (depending on the objective) for each subset, and returning the
subset with the minimal objective value. This requires an exponential time and
space complexity, which is infeasible for almost all datasets.  Thus, the major
challenges in solving \tips are as follows:

\begin{enumerate}

	\item Quality: Since both the problems are NP-hard (proved later), optimal
		algorithms are impractical.  Thus, we need efficient heuristics that
		offer high quality solutions.

	\item Scalability: Any basic approach to solve the above problems would
		typically need to compute and store pairwise distances between the sets
		of candidate sites and trajectories.  However, for any city-scale
		datasets, this time and storage requirement is prohibitively large (of
		the order of 100s of GB).  Thus, it is necessary to design solutions
		that are practical and scalable.
	
\end{enumerate}

In this paper, we propose efficient heuristics for both \maxtips and \meantips
that overcome the above challenges.  To summarize, our major contributions are:

\begin{enumerate}

	\item To the best of our knowledge, this is the \emph{first} work, that
		factors in user-mobility to identify the best $k$ facility locations
		that minimize the maximum and the average user-inconvenience, in
		presence or absence of \emph{existing facilities}. In particular, we
		introduce two \emph{facility location problems} over user trajectories,
		namely, \maxtips and \meantips (Sec.~\ref{sec:formulation}). 
		
	\item We show that both these problems are \emph{NP-hard}
		(Th.~\ref{thm:NP-hardness}), and propose one exact algorithm and two
		polynomial-time efficient \emph{heuristics} to solve each of them
		(Sec.~\ref{sec:maxtips} and Sec.~\ref{sec:meantips}).  The exact
		algorithms are based on integer linear programming
		(Sec.\ref{sec:opti-max} and \ref{sec:opti-mean}).  For \maxtips, while
		the first heuristic is an index-free greedy algorithm
		(Sec.~\ref{sec:mif}), the second one (Sec.~\ref{sec:nc}) uses an index
		structure based on multi-resolution clustering of the road network.
		For \meantips, the first heuristic is a hybrid of two standard
		clustering algorithms that are based on local search techniques
		(Sec.~\ref{sec:hcc}), while the second one (Sec.~\ref{sec:great}) uses
		a simple greedy approach.

	\item Empirical evaluation on urban scale datasets show that our heuristics
		are \emph{effective} in terms of quality, and \emph{efficient} in terms
		of space and running time (Sec.~\ref{sec:exp}).

\end{enumerate}

\section{Related Work}
\label{sec:related}

\setlength{\tabcolsep}{3pt}	% Usually 6pt

\begin{table*}[t]
	\scriptsize
	\centering
	\begin{tabular}{|c||c|c|c|c|c|c|c|c|c|c|c|c|c|c|c|c|c|c|c|}
		\hline
		{\bf Property} & \cite{du2005optimal} & \cite{zhang2006progressive} & \cite{ghaemi2010optimal} & \cite{xiao2011optimal} & \cite{chen2014efficient} & \cite{qi2012min} & \cite{qi2014min} & \cite{liu2016finding} & \cite{tops_paper} & \cite{MILERUN} & \cite{PINOCCHIO} & \cite{TODS:OLQRN} & \cite{Qi:2012:BBM:2483739.2483747} & \cite{VLDB:OLQRN} & \cite{Khan:2018:GOAL} & \cite{Wong:2009:VLDB:MBRNN} & \cite{OMPQuery2015} & \cite{CollectiveKOptimal} & \cite{li2013trajectory} \\
		\hline
		\hline
		Users              & Stat & Stat & Stat & Stat & Stat & Stat & Stat & Stat & Mob & Mob & Mob & Stat & Stat & Stat & Mob & Stat & Stat & Stat & Mob \\
		\hline
		Output             & Sngl & Sngl & Sngl & Sngl & Sngl+Top-k & Sngl & Sngl & Top-k & Top-k & Sngl & Sngl & Sngl+Top-k & Sngl & Sngl & Sngl & Sngl & Sngl & Top-k & Sngl \\
		\hline
		Objective function & Infl & Dist & Infl & Dist & Dist & Dist & Dist & Dist & Infl & Dist & Infl & Dist & Dist & Infl & Dist & Infl & Dist & Dist & Infl \\
		\hline
		Underlying space   & Eucl & Eucl & Road & Road & Road & Eucl & Eucl & Road & Road & Road & Eucl & Road & Eucl & Road & Eucl & Eucl & Road+Eucl & Eucl & Road \\
		\hline
	\end{tabular}
	\tabcaption{Summary of Related Work. (Stat: static, Mob: mobile; Sngl: single, Top-k: top-k; Dist: distance, Infl: influence; Eucl: Euclidean space, Road: road network.)}
	\label{tab:related}
\end{table*}

\setlength{\tabcolsep}{6pt}	% Back to default

The related work is divided into the following main classes.

\noindent
\textbf{Location Selection (LS) Queries:}
Location Selection (LS) queries identify best locations to set up new
facilities. They are broadly of two types: Optimal Location (OL) queries
\cite{du2005optimal} and Facility Location (FL) Queries \cite{FacilityLocation}.
An OL query typically has three inputs, a set of existing facilities, a set of
users, and a set of candidate sites, and the aim is to find a candidate site to
host a new facility that optimizes an objective function based on the distances
among the facilities and the users.  On the other hand, an FL query considers a
set of users, and a set of candidate sites, and seeks to identify $k$ ($k \ge
1$) facility locations that optimize certain objective function which is usually
based on the distances among the users and the facilities. While OL queries are
polynomially solvable, FL queries are NP-hard. In Table~\ref{tab:related}, we
summarize the key related works in the area of location selection based on the
following attributes.

\noindent \textbf{(1)~Type of Users:} Majority of the early works assumed that
the users are fixed to a single location such as home, and did not consider
their mobile behavior. However, many recent studies including ours factor in
user-mobility.

\noindent \textbf{(2)~Type of Output:}  While many works report a single
location to set up a new facility, others report top-$k$ facility locations
that collectively optimize the desired objective. While the former is
polynomially solvable, the latter is NP-hard. Both the \tips problems return
$k$ locations, and are NP-hard as well.

\noindent \textbf{(3)~Type of Objective:} Based on the objective function, the
related works can be classified into two categories: \emph{distance minimizing}
and \emph{influence maximizing}.  In distance minimizing queries, the goal is to
minimize the maximum or the average distance of any user to its nearest
facility. While the former is referred to as the MinMax Location Query
\cite{xiao2011optimal, chen2014efficient, liu2016finding, TODS:OLQRN},  the
latter is known as the Min-Dist Location Query or MinSum Location Query
\cite{qi2012min, qi2014min, MILERUN, Qi:2012:BBM:2483739.2483747, VLDB:OLQRN}.
\maxtips problem is a generalization of the MinMax Location query, while
\meantips is a generalization of the Min-Dist Location query. These
generalizations are in terms of reporting $k$ locations instead of a single
location, and considering mobile users on a road network instead of static ones.
The aim of influence maximizing queries is to find a candidate site that has
maximal \emph{influence} over its users. The influence of a site $s$ is the
number of users for which $s$ is the nearest facility.  These problems are
usually modeled as \emph{Reverse Nearest Neighbor} queries. The key difference
between these works and ours is that they assume that the new facility is
\emph{competing} with the existing ones; in our model, both the new and the
existing facilities (if any) belong to a given service provider and, hence,
complement each other. This is especially true for public services such as ATMs,
post offices, hospitals, gas stations, parking spots, etc.

\noindent \textbf{(4)~Type of Underlying Space:} Many earlier models assume that
the underlying space is Euclidean. Since user movements are typically restricted
by a road network, and network distances can significantly vary from
corresponding Euclidean distances, recent works base their studies on road
networks. Our \tips formulation is also based on the road network. The massive
distance computations, and absence of geometric properties make the latter
problems more challenging.

Referring to Table~\ref{tab:related}, we note that this is the \emph{first}
work that factors in \emph{user-mobility} to mine the \emph{top-k} facility
locations that minimizes the \emph{inconvenience} (defined in terms of the
distances between the users and facilities) caused to the users traveling on a
\emph{road network}.  Next, we discuss the important LS works that are closely
related to our proposed \tips query.

\noindent $\bullet$ \textbf{Key related works factoring in user-mobility:}
In \cite{PINOCCHIO}, the authors studied the PRIME-LS problem that aims to find
an optimal location which can influence the most number of moving objects. Two
algorithms were proposed based on two pruning techniques and optimization
strategies to filter out unpromising candidate sites. However, since the goal is
to maximize the influence rather than to minimize the distance, the attributes
of this problem are quite different to ours (as explained earlier). Thus, these
algorithms are inapplicable to solve the \tips problems. The studies in
\cite{tops_paper,li2013trajectory} consider the FL problem over user
trajectories moving on a road network. They assume that a user is attracted to a
facility if its trajectory lies within a specified distance threshold. While
both these works aim to maximize the user coverage, \cite{li2013trajectory}
reports a \emph{single} optimal road segment, and \cite{tops_paper} reports the
$k$ best facility locations.

MinMax Location Query and Min-Dist Location Query have been studied in
\cite{zhang2006progressive, xiao2011optimal, chen2014efficient, qi2012min,
qi2014min, liu2016finding, MILERUN, TODS:OLQRN, Qi:2012:BBM:2483739.2483747,
VLDB:OLQRN, Khan:2018:GOAL, OMPQuery2015}.  All other works except
\cite{MILERUN,Khan:2018:GOAL} assume the users to be static.  However, in
contrast to our work, both these works find a \emph{single} optimal location for
the min-dist location problem over user trajectories. Moreover, since
\cite{Khan:2018:GOAL} study the problem over Euclidean space, their techniques
are not applicable to our model which is based on a road network. The study of
\cite{MILERUN} is however based on a road network. Based on reference location
transformation, they propose two groups of algorithms. While the first group
uses spatial locality based index structures, the other group does not use any
index structure but computes from scratch.  In our empirical study, we consider
this algorithm as a baseline. 

Majority of the works in the FL literature also assume that the users are static
\cite{drezner1995facility, FacilityLocation}. Works that consider human mobility
include \cite{hodgson1981location, berman1992optimal,
berman1995locatingDiscretionary, berman1995locatingMain, berman1995locating,
berman2002generalized, berman1998flow}.  These works, however, assume a flow
model to characterize mobility instead of using real trajectories.  The proposed
models are mostly theoretical and are not scalable for real city-scale road
networks. In particular, all these approaches require extensive distance
computations which leads to large memory overhead and are, hence, infeasible
\cite{tops_paper}.  A fairly comprehensive literature survey is available in
\cite{FIFLPSurvey}.

Hodgson \cite{hodgson1981location} posed the first FL problem that minimizes
the average user inconvenience as a generalization of $k$-medians problem. In
\cite{berman1995locatingMain}, it was shown that the problem does not admit
constant factor approximation unless $P=NP$. However, no approximation
algorithm was proposed. In contrast, we propose two heuristics for this
problem.

\noindent
\textbf{Clustering Problems:} The following clustering problems are related to
the current work.

\noindent
\textbf{(1) $k$-Center Problem:} Given a set $\s$ of $n$ points, the
\emph{k-center} problem is to determine a set $\q \subseteq \s$ of size $k$,
referred to as \emph{centers}, such that the maximum distance of any point in \s
to its nearest center is minimized. This problem was introduced and proved to be
NP-hard in \cite{gonzalez1985clustering}. They also proposed a greedy heuristic
that offers a factor of $2$ approximation.
% It was also shown that this is the best possible bound if $P \neq NP$.
Our proposed \maxtips problem is a generalization of the $k$-center problem as
trajectories generalize static users.  We have extended the greedy algorithm in
\cite{gonzalez1985clustering} to design a heuristic (with bounded quality
guarantees) to solve \maxtips (Sec.~\ref{sec:maxtips}).

\noindent
\textbf{(2) $k$-Medoids Problem:} Given a set $\s$ of $n$ points, the
\emph{k-medoids} problem is to determine a set $\q \subseteq \s$ of size $k$,
referred to as \emph{medoids}, such that the sum of distances of each point in
\s to its nearest medoid is minimized \cite{rousseeuw1990finding}.  This is also
referred to as the \emph{$k$-median problem}.  The first constant factor
algorithm for the $k$-median problem in general metric space, with an
approximation ratio of $6\frac{2}{3}$ was proposed in \cite{guha1999kmedian}.
Later, \cite{jain1999} improved this factor to $6$.  Subsequently,
\cite{moses1999} designed a $4$-approximation algorithm that runs in $O(n^3)$
time.  Korupolu et al.  \cite{korupolu2000analysis} proposed a local search
based approximation scheme by allowing a constant factor blow-up in $k$.  Arya
et al.  \cite{arya2004local} improved the approximation bound for the metric
$k$-medians problem to $3+2/p$ where $p$ is the number of medians swapped
simultaneously. Three popular local search based techniques for the $k$-medoids
problem are PAM \cite{rousseeuw1990finding}, CLARA \cite{rousseeuw1990finding}
and CLARANS \cite{ng2002clarans}. These schemes are detailed in
Sec.~\ref{sec:hcc}.  Most of these works, however, fail to scale on large
datasets.  Moreover, all these algorithms require the distance matrix between
each pair of points. This quadratic space and time requirement renders them
infeasible for real datasets.  Our proposed \meantips problem generalizes the
$k$-medoids problem as explained in Sec.~\ref{sec:formulation}.  The \hcc
heuristic to solve \meantips (Sec.~\ref{sec:hcc}) builds on the ideas of CLARA
and CLARANS. More importantly, to address the challenge of high space overhead,
we design sampling techniques to reduce the sizes of the sets.

\section{The \tips Problem}
\label{sec:formulation}

Consider a road network $G = \{V,E\}$ over a geographical area where
$V=\{v_1,\dots,v_N\}$ denotes the set of road intersections, and $E$ denotes
the road segments between two adjacent road intersections.  To model the
direction of the underlying traffic that passes over a road segment, we assume
that the edges are directed.  Assume a set of candidate sites $\s = \{s_1,
\cdots, s_n\}$ where a certain service or facility can be set up.  The set $\s$
can be in addition to the existing facility locations \fac.  Without loss of
generality, we can augment the vertices $V$ to include all the sites. Thus, $\s
\subseteq V$. Further, we also assume that the set of existing facilities $\fac
\subseteq V$.

The set of trajectories is denoted by $\traj = \{T_1, \cdots, T_m\}$
where each trajectory $T_j = \{v_{j_1}, \cdots, v_{j_l}\}$, $v_{j_i}\in V$, is
a sequence of locations that the user passes through.  Usually any service or
facility is used by a mix of static users and mobile users. As the above
definition of trajectory allows both single and multiple locations of a single
user, it captures both static and mobile users simultaneously.  The
trajectories are usually recorded as GPS traces and may contain arbitrary
spatial points on the road network.  For our purpose, each trajectory is
map-matched \cite{map1} to form a sequence of road intersections through which
it passes.

We also assume that each trajectory belongs to a separate user. The
framework can be easily generalized to multiple trajectories belonging to a single
user.  The union of road intersections that the user passes through in any of
her trajectories will be treated as the nodes in her trajectory.  In effect,
this will minimize her inconvenience from any of her trajectories.

Suppose $\ds(v_i,v_j)$ denotes the shortest road network distance along a
directed path from node $v_i$ to $v_j$, and $\dr(v_i,v_j)$ denotes the shortest
distance of a round-trip starting at node $v_i$, visiting $v_j$, and returning
to $v_i$, i.e., $\dr(v_i,v_j) = \ds(v_i,v_j) + \ds(v_j,v_i)$.  In general,
$\ds(v_i,v_j) \neq \ds(v_j,v_i)$, but $\dr(v_i,v_j) = \dr(v_j,v_i)$.  

The extra distance traveled by any user on trajectory $T_j$ to avail a service
at site $s_i \in V$, denoted by $\dr(T_j,s_i)$, is defined as follows:
$\dr(T_j,s_i) = \min_{\forall v_k,v_l \in T_j} \{ \ds(v_k,s_i) + \ds(s_i,v_l) -
\ds(v_k,v_l)\}$, i.e., it deviates from its trajectory at $v_k$, reach site
$s_i$ and then return to $v_l \in T_j$ such that the deviation is minimum. 

The round-trip distance between two trajectories $T_i$ and $T_j$ is defined as
the minimum pairwise distance among its sites: $\dr(T_i,T_j) =$ $\min_{\forall
v_i \in T_i, \ \forall v_j \in T_j} \{ \dr(v_i,v_j) \}$.  Henceforth,
\emph{distance} implies \emph{round-trip distance} unless mentioned otherwise.
 
It is inconvenient for a user to avail a service if the nearest service
location is far off from her trajectory.  We define this inconvenience as
follows.  Given a set of service locations $\q \subseteq \s$, the
\emph{inconvenience} of a user on trajectory $T_j$, denoted by $I_j$, is the
extra distance travelled to avail a service at the nearest service location in
$\q$. Formally, $I_j = \min\{ \dr(T_j,s_i)|s_i \in \q \}$.  

Using the above setting, we introduce two novel problems, namely \maxtips and
\meantips, described as follows. 

The \maxtips problem aims to report a set of $k$ service locations that
minimizes the maximum inconvenience over a given \emph{\uf} of the set of
trajectories.
%It is formally stated next.

\vspace*{-2mm}
\begin{prob}[\maxtips]
	Given a set of trajectories $\traj$, a set of existing facilities \fac, a set of candidate sites $\s$ that can
	host the services, a positive integer $k$, and a \uf \ufs ($0 < \ufs \leq
	1$), the \maxtips problem seeks to report a set $\q \subseteq \s, \ |\q| =
	k$, that minimizes the \emph{maximum inconvenience} over any set $\subtraj
	\subseteq \traj$ such that $|\subtraj| \geq \ufs \times |\traj|$, i.e., it
	minimizes $\maxi(\q) = \max_{T_j \in \subtraj} \{ I_j \}$, where $I_j =
	\min_{s_i \in \q \cup \fac} \{ \dr(T_j,s_i) \}$.
\end{prob}
\vspace*{-2mm}

Intuitively, as the \uf increases, the optimal value of $\maxi(\q)$ increases
because of the need to serve more number of users with the same number of $k$
facilities. When $\ufs=1$, the goal is to serve \emph{all} the trajectories
such that the maximum inconvenience faced by any trajectory is minimized.

The \meantips problem seeks to identify $k$ service locations that minimizes
the expected or average inconvenience across all the trajectories.  Since the
number of trajectories is fixed, minimizing the average inconvenience is
equivalent to minimizing the total inconvenience over all the trajectories.

\vspace*{-2mm}
\begin{prob}[\meantips]
	Given a set of trajectories $\traj$, a set of existing facilities \fac, a set of candidate sites $\s$ that can
	host the services, and a positive integer $k$, the \meantips problem seeks
	to report a set $\q \subseteq \s, \ |\q| = k$, that minimizes the
	\emph{total inconvenience} over \traj, i.e., it minimizes $\totali(\q) =
	\sum_{T_j \in \traj} I_j$, where $I_j = \min_{s_i \in \q \cup \fac} \{ \dr(T_j,s_i)
	\}$.
\end{prob}
\vspace*{-2mm}

We next show that both the \tips problems are NP-hard. 

\begin{thm}[NP-hardness of \tips]
\label{thm:NP-hardness}	
	\maxtips and \meantips are NP-hard problems.
\end{thm}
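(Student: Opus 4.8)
The plan is to prove NP-hardness of each variant by a reduction from a well-known NP-hard clustering problem. Since the paper has already established (in the Related Work section) that \maxtips generalizes the $k$-center problem and \meantips generalizes the $k$-medoids ($k$-median) problem, the natural strategy is to show that these classical problems embed as special cases. The key conceptual observation is that \emph{static users are trajectories with a single location}: if every trajectory $T_j$ consists of exactly one node $v_j$, then the round-trip distance $\dr(T_j, s_i)$ reduces to $\ds(v_j,s_i) + \ds(s_i,v_j) = \dr(v_j,s_i)$, a genuine metric distance between two points. This collapses the trajectory-aware objective onto a point-to-point distance objective.

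First I would handle \maxtips. I would take an arbitrary instance of the $k$-center problem on a set $\s$ of $n$ points with a metric $\dr$, and build a \tips instance as follows: let the road network have one vertex per point, set the candidate sites and the trajectory set both to be the singleton trajectories $\{v\}$ for $v \in \s$, take $\fac = \emptyset$, set the same $k$, and fix $\ufs = 1$ so that \emph{all} trajectories must be covered. Under this mapping, $\maxi(\q) = \max_{v} \min_{s_i \in \q}\dr(v,s_i)$, which is exactly the $k$-center objective. Hence a solution to \maxtips yields an optimal $k$-center solution and vice versa; since the reduction is clearly polynomial, NP-hardness of $k$-center (shown in \cite{gonzalez1985clustering}) transfers to \maxtips.

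Second, for \meantips I would give the analogous reduction from $k$-medoids: using the same singleton-trajectory construction with $\fac = \emptyset$, the total-inconvenience objective becomes $\totali(\q) = \sum_{v} \min_{s_i \in \q}\dr(v,s_i)$, which is precisely the $k$-medoids sum-of-distances objective over the point set $\s$ with distances $\dr$. Because \meantips has no \uf parameter, the reduction is even cleaner — no need to set $\ufs$. Optimality is preserved in both directions, so NP-hardness of $k$-medoids carries over.

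The step I expect to need the most care is verifying that the \emph{round-trip} distance $\dr$ produced by the singleton construction genuinely realizes an arbitrary metric (or at least realizes the hard instances of $k$-center and $k$-medoids). One must check that $\dr$ on single-node trajectories behaves as a legitimate distance and that the directed-edge, shortest-path structure of the road network can encode the input distances without distortion — for instance by placing points so that $\ds(v_i,v_j)=\ds(v_j,v_i)$, making $\dr$ symmetric and metric-like. A subtle point is that $k$-center and $k$-medoids are NP-hard already in general metric spaces, so it suffices to embed a metric into the round-trip distances; I would argue that any finite metric can be realized as shortest-path (hence round-trip) distances on a suitable weighted graph, which makes the embedding routine rather than an obstacle. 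If one instead wants hardness restricted to planar road networks, the argument would require reducing from the planar/geometric versions of these problems, which is where the genuine technical effort would lie.
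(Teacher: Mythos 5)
Your proposal is correct and follows essentially the same route as the paper: the authors likewise prove NP-hardness by observing that $k$-center reduces to \maxtips (singleton trajectories, $\fac=\varnothing$, $\ufs=1$) and $k$-medoids reduces to \meantips (singleton trajectories, $\fac=\varnothing$). The only difference is that you spell out the metric-realization detail (embedding the hard instances' distances as round-trip shortest-path distances), which the paper leaves implicit.
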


\vspace*{-2mm}
\begin{proof}
	Since the $k$-center problem is NP-hard \cite{gonzalez1985clustering} and
	it reduces to the \maxtips problem with each trajectory being a single user
	location, the set of existing facilities $\fac=\varnothing$ and $\ufs=1$,
	\maxtips is also NP-hard.

	Since the $k$-medoids problem is NP-hard \cite{rousseeuw1990finding} and it
	reduces to the \meantips problem with each trajectory being a single
	user-location and $\fac=\varnothing$, \meantips is also NP-hard.
	\hfill{}
\end{proof}

\section{Algorithms for \maxtips}
\label{sec:maxtips}

In this section, we present an optimal algorithm and two heuristics to solve
the \maxtips problem.

\subsection{Optimal Algorithm}
\label{sec:opti-max}

We present an \emph{optimal} solution to the \maxtips problem in the form of an
\emph{integer linear program} (ILP). For the ease of representation of the ILP,
we assume that the set of candidate sites $\s=\{s_i|1 \le i \le n\}$ is
augmented with the set of existing facilities $\fac$.
\begin{small}
\begin{align}
	\text{minimize } Z  &\quad \text{ such that } \label{eq:obj} \\
	\forall 1 \le i \le n, \ Z &\ge z_i   \label{eq:minmax constraint}\\		
	\forall 1 \le i \le n,\ \forall 1 \le j \le m, \ z_i &\ge   \dr(T_j,s_i) \times  y_{ij}  \label{eq:MaxDistanceConstraint}\\
	\sum_{i=1}^n x_i &\leq k +|\fac|, \label{eq:cardinality} \\
	\forall 1 \le j \le m,\ \sum_{i=1}^n y_{ij} &
	\le 1  \label{eq:trajclus1}\\
	\sum_{j=1}^m \sum_{i=1}^n y_{ij} &
	\ge \ufs \times |\traj| \label{eq:total_traj_served}\\
	\forall 1 \le i \le n,\ \forall 1 \le j \le m,\ y_{ij} &\le x_i  \label{eq:trajclus2} \\
	\forall 1 \le i \le n, \ 
	x_i &\in \{0,1\} \label{eq:binary1} \\
	\forall s_i \in \fac, \ x_i &=1 \label{eq:existing} \\
	\forall 1 \le i \le n,\ \forall 1 \le j \le m, \ 
	y_{ij} &\in \{0,1\}
	\label{eq:binary2}	
\end{align}
\end{small}

The Boolean variable $x_i=1$ if and only if the site $s_i$ is selected, or it
is an existing facility. The Boolean variable $y_{ij}=1$ if and only if the
site $s_i$ is a serving facility (either existing or new), and the trajectory
$T_j$ is served by $s_i$. The variable $z_i$ captures the maximum inconvenience
offered to any trajectory served by the site $s_i$
(Ineq.~\eqref{eq:MaxDistanceConstraint}). The constraint in
Ineq~\eqref{eq:minmax constraint}, along with the objective function in
Eq.~\eqref{eq:obj} ensure that the maximum inconvenience is minimized over the
set of selected sites. The constraint in Ineq.~\eqref{eq:cardinality} ensures
that at most $k$ new sites are selected in the answer set. Since $Z$
monotonically decreases with $\sum_{i=1}^n x_i$, $Z$ will attain its optimal
value only when $\sum_{i=1}^n x_i = k+|\fac|$.  The constraint in
Ineq.~\eqref{eq:trajclus1} guarantees that each trajectory is served by at most
one service location. The constraint in Ineq.~\eqref{eq:total_traj_served}
ensures that at least $\ufs . |\traj|$ trajectories are served.  The constraint
in Ineq.~\eqref{eq:trajclus2} guarantees that if $x_i=0$, then $\forall j,\
y_{ij}=0$, i.e., no trajectory is served by the site $s_i$ as it is not a
serving facility location.

The optimal algorithm is impractical except for very small datasets
(demonstrated in Sec.~\ref{sec:maxtips_exp}).  Therefore, we next present a
couple of polynomial-time heuristics to solve \maxtips.

\subsection{\mif Algorithm}
\label{sec:mif}

\maxtips problem being a generalization of the $k$-center problem
(Sec.~\ref{sec:related}), the most natural approach to solve \maxtips is to
extend the greedy heuristic for the $k$-center problem
\cite{gonzalez1985clustering}. We refer to this adaptation as  \emph{\mif
(Most-Inconvenient-First)}.   It iteratively selects a site and the
corresponding most inconvenient trajectory in each of the $k$ iterations. The
details are as follows. 

Initially, we set $ \q=\fac$, i.e., the existing set of facilities. The
algorithm maintains a map, called the \emph{Nearest Facility} map, denoted by
\nn. This map keeps the trajectories in $\traj$ in a sorted order based on
their distance to the nearest facility in \q.  Let $\reptraj$ be an empty set
of representative trajectories. The algorithm runs in iterations. At the
beginning of each iteration, a trajectory $T_i \in \nn$ is chosen whose rank is
$\lfloor \trajthres\rfloor$ in the sorted ordering. The reason behind this
choice is that $T_i$ faces the maximum inconvenience among the first $\lfloor
\trajthres\rfloor$ trajectories in $\nn$ in the sorted ordering. $T_i$ is then
added to the set \reptraj. In case of no existing facilities, \nn is initially
empty. In such a scenario, any random trajectory is added to \reptraj in the
first iteration. Next, we choose a candidate site $s_i \in \s \setminus \q$
that is nearest to $T_i$, and add it to the set \q. If there are multiple such
candidate sites, then the tie is broken arbitrarily.    The above process is
repeated until $k$ new facility locations are selected.

Let us evaluate this algorithm on the example in Fig.~\ref{fig:example} with
$k=2$ and $\ufs=1$. Initially, \q is set to $\{s_0\}$. Since $T_5$ is the
farthest trajectory from $s_0$, it is added to \reptraj. Next, $s_3$ is chosen
and added to \q, because it is the nearest site to $T_5$. Subsequently, in the
next iteration, $T_6$ is added to \reptraj. As a result, $s_4$ is added to the
answer set. Finally, the algorithm concludes with the selection
$\q=\{s_0,s_3,s_4\}$, which is also the optimal answer.

Now, consider the same example with $k=2$ and $\ufs=0.8$. Once again, \q is
initialized with $s_0$. In iteration 1,  $T_4$ and $s_3$ are chosen, and in the
next iteration, $T_3$ and $s_2$ are chosen. The final selection is thus,
$\q=\{s_0,s_2,s_3\}$, which is again the optimal solution.

%We next discuss the quality guarantees and complexity bounds of \mif. 

We observe that for any set of  sites \q, and two sets of trajectories,
$\subtraj,\traj$ such that $\subtraj \subseteq \traj$, the maximum
inconvenience faced by any trajectory in \subtraj due to the set \q is at most
the maximum inconvenience faced by any trajectory in \traj due to the set \q.
Therefore, any approximation bound that holds for the \maxtips problem with \uf
$\ufs=1$ will also hold for $\ufs <1$. Hence, we next discuss the approximation
results only for $\ufs=1$. Further, for ease of analysis, we assume that all
the nodes in the road network are candidate sites, i.e., $\s=V$.

\begin{thm}
	\label{thm:mif approx bound}
	Let $d$ and $d^*$ be the maximum inconvenience offered by the answer sets
	returned by the \mif algorithm, and the optimal algorithm for \maxtips,
	respectively. Then, $d \le 2d^*$ for $k=1$, and $d \le 2d^*+L$ for $k \ge
	2$ where $L$ is the length of the longest trajectory.
\end{thm}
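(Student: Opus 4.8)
The plan is to adapt Gonzalez's farthest-first analysis of $k$-center to the trajectory setting, working (as the excerpt already reduces to) at $\ufs=1$ and tracking the two places where the trajectory geometry departs from a clean metric. First I would record two structural facts. Since $\s=V$, the site nearest to any representative trajectory has zero inconvenience and may be taken to be one of its own nodes; write $R_1,\dots,R_k$ for the trajectories picked into \reptraj and $s_i\in R_i$ for the corresponding selected sites. The point distance $\dr$ is a genuine metric on $V$ (it is symmetric and nonnegative, and $\ds(a,c)+\ds(c,a)\le \ds(a,b)+\ds(b,a)+\ds(b,c)+\ds(c,b)$ gives the triangle inequality), and from it one obtains the \emph{clean} inequality $\dr(T,s)\le \dr(T,s')+\dr(s',s)$ for any trajectory $T$ and sites $s,s'$, simply by routing the detour of $T$ through $s'$. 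The second, \emph{lossy}, fact is that the trajectory-to-site distance is a $\min$-based set distance, so its triangle inequality holds only up to the largest intra-trajectory round trip, which I bound by $L$: if $u,v$ lie on a common trajectory then $\dr(u,v)\le L$, and consequently $\dr(T,s)\le \min_{v\in T}\dr(v,s)\le \dr(T,s)+L$.

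Next I would reproduce the greedy invariant. Because the farthest (here, maximum) inconvenience is nonincreasing across iterations, the representative $R_j$ chosen after $s_i$ was inserted satisfies $\dr(R_j,s_i)\ge d$ for every $i<j$; appending the trajectory $R_{k+1}$ that realizes the final cost $d$, I obtain $k+1$ representatives that are pairwise ``far'' in this one-sided sense. Since the optimal solution uses only $k$ sites, the pigeonhole principle yields two representatives $R_a,R_b$ with $a<b$ that are both served by a common optimal site $o$, so $\dr(R_a,o)\le d^*$ and $\dr(R_b,o)\le d^*$, while $\dr(R_b,s_a)\ge d$.

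The final step is to upper bound $\dr(R_b,s_a)$. Applying the clean triangle inequality through $o$ gives $\dr(R_b,s_a)\le \dr(R_b,o)+\dr(o,s_a)\le d^*+\dr(o,s_a)$. Since $s_a$ is a node of $R_a$ and $\dr(R_a,o)\le d^*$, bounding $\dr(o,s_a)$ forces me to move from the detour witness of $R_a$ to the particular node $s_a$ along $R_a$; this traversal of $R_a$ is exactly the lossy set-distance step and contributes the additive $L$, giving $\dr(o,s_a)\le d^*+L$ and hence $d\le 2d^*+L$. For $k=1$ there is only the pair $R_1,R_2$ and a single optimal site, so no intermediate representative is needed: the estimate $\dr(R_2,s_1)\le \dr(R_2,o)+\dr(o,s_1)$ is controlled purely by the metric triangle inequality on $V$, the intra-trajectory traversal never enters, and the bound sharpens to $2d^*$.

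I expect the delicate part to be pinning the coefficient of $L$ to exactly one rather than two. The naive accounting charges $L$ once when passing from the inconvenience witness of $R_a$ to a single point, and again when walking from that point to $s_a$; the key will be to charge all of this to a single round-trip traversal of $R_a$ (from the witness node to $s_a$ and back), so that the entire intra-trajectory cost is absorbed into one factor of $L$, matching ``the length of the longest trajectory.'' I would also verify the boundary behavior of the greedy when \nn starts empty (the arbitrary first pick) and confirm that the one-sided farness $\dr(R_b,s_a)\ge d$, rather than any symmetric distance, is all the pigeonhole step actually uses.
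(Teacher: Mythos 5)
Your plan is, in substance, the paper's own proof: a Gonzalez-style farthest-first analysis using (i) $s_i\in R_i$ because $\s=V$, (ii) the greedy invariant $\dr(R_j,s_i)\ge d$ for $i<j$, (iii) a pigeonhole against the optimal clusters to obtain two representatives $R_a,R_b$ served by a common optimal site $o$, and (iv) the chain $d\le \dr(R_b,s_a)\le \dr(R_b,o)+\dr(o,\cdot)+(\text{intra-trajectory move to } s_a)\le 2d^*+L$, where the paper's node $s_p'$ plays the role of your ``witness node'' on $R_a$ and the paper's step $\dr(s_p,s_p')\le L$ is exactly your single-$L$ charge. One place where you are more careful than the paper: you pigeonhole $k+1$ trajectories (the $k$ representatives plus the trajectory realizing the final cost $d$) into the $k$ optimal clusters, which is the correct form of the argument; the paper pigeonholes only the $k$ members of $\reptraj$ into $k$ clusters, which as literally stated forces no collision. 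Your explicit separation of the two inequalities --- the clean directed bound $\dr(T,s)\le \dr(T,s')+\dr(s',s)$ (correctly proved by routing the detour through $s'$) versus the lossy trajectory-to-point step --- is also a genuine clarification of what the paper uses implicitly.

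The one genuine gap is your $k=1$ case. You claim the bound $\dr(R_2,s_1)\le \dr(R_2,o)+\dr(o,s_1)$ is ``controlled purely by the metric triangle inequality'' so that no $L$ term enters. But to conclude $d\le 2d^*$ you must bound $\dr(o,s_1)\le d^*$, and nothing gives you that: all you know is $\dr(R_1,o)\le d^*$, i.e., the \emph{trajectory} $R_1$ can detour through $o$ cheaply, whereas $s_1$ is an essentially arbitrary node of $R_1$ (every node of $R_1$ has detour distance $0$, with ties broken arbitrarily by \mif), possibly far from the entry/exit witnesses of that cheap detour. Converting the trajectory-level bound into a point bound on the particular selected node is precisely your ``lossy set-distance step,'' so as written your $k=1$ argument only yields $2d^*+L$, not $2d^*$. (In fairness, the paper's $k=1$ proof has the same soft spot --- it exhibits a well-placed node $s_1'\in T_1$ and then silently identifies it with the selected $s_1$ --- so you have inherited rather than created the difficulty; but since you singled out the coefficient of $L$ as the delicate issue, note that for $k=1$ the delicate issue is whether $L$ can be made to vanish at all.)
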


The proof is given in Appendix~\ref{sec:mif approx bound}.

\begin{thm}
	\label{thm:mif complexity}
	The time and space complexities of \mif algorithm are $O(k.l.n\log n +
	k.m.l^2 + k.m\log m)$ and  $O(l(n+m))$ respectively, where $n=|V|$ is the
	total number of nodes in the road network, $l$ is the maximum number of
	nodes in any trajectory, $m$ is the total number of trajectories in \traj,
	and $k$ is the total number of iterations.
\end{thm}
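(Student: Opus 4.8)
The plan is to bound the work done in a single iteration of \mif and then multiply by the $k$ iterations. Recall that \mif keeps the map \nn storing the trajectories sorted by their round-trip distance to the current nearest facility in \q, and that each iteration performs three tasks: (i) read off the representative trajectory $T_i$ at rank $\lfloor \trajthres \rfloor$ in \nn and append it to \reptraj; (ii) find the candidate site nearest to $T_i$ and add it to \q; and (iii) refresh \nn to account for the newly added facility. I would charge each task separately and then sum.

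For task (i), once \nn is materialized as a sorted array the rank-$\lfloor \trajthres \rfloor$ entry is fetched in $O(1)$ time, so this step is free. For task (ii), I would run a forward and a reverse Dijkstra from each of the at most $l$ nodes of $T_i$; using road-network sparsity ($|E|=O(n)$) each run costs $O(n\log n)$, so this step is $O(l \cdot n\log n)$ per iteration and contributes the first term $O(k \cdot l \cdot n\log n)$. Armed with these single-source vectors, I can scan all $n$ sites and pick the one minimizing the round-trip distance $\dr(v,s)$ to a node $v \in T_i$ in $O(l \cdot n)$ time, which is dominated. For task (iii), one forward/reverse Dijkstra from the newly chosen site $s'$ (a further $O(n\log n)$, absorbed into the first term) supplies $\ds(v_k,s')$ and $\ds(s',v_l)$ for every trajectory node; then for each of the $m$ trajectories I evaluate $\dr(T_j,s') = \min_{v_k,v_l \in T_j}\{\ds(v_k,s') + \ds(s',v_l) - \ds(v_k,v_l)\}$ by minimizing over the at most $l^2$ node pairs, at cost $O(l^2)$ per trajectory and $O(m \cdot l^2)$ in total, giving the second term $O(k \cdot m \cdot l^2)$. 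Taking the minimum of this quantity with each trajectory's stored nearest-facility distance and re-sorting \nn costs $O(m\log m)$, yielding the third term $O(k \cdot m\log m)$. Summing over the $k$ iterations establishes the stated time bound.

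For the space bound I would account for the persistent structures and the transient Dijkstra output. The road network occupies $O(n+|E|)=O(n)$; the $m$ trajectories, each of length at most $l$, occupy $O(m \cdot l)$; and the map \nn together with the stored nearest-facility distances occupies $O(m)$. The only remaining space is the collection of distance vectors produced by the $O(l)$ Dijkstra calls issued while processing the current trajectory, each of size $n$, for $O(l \cdot n)$; these are overwritten from one iteration to the next, so no pairwise distance matrix is ever materialized. Adding these, $O(n + m \cdot l + m + l \cdot n) = O(l(n+m))$, since $n \le l\,n$ and $m \le l\,m$.

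The main obstacle is the bookkeeping in task (iii): I must argue that the intra-trajectory distances $\ds(v_k,v_l)$ in the inconvenience formula are obtained cheaply—precomputed once as prefix sums along each trajectory—so that the per-trajectory evaluation is a genuine $O(l^2)$ and no hidden $O(l \cdot n\log n)$ Dijkstra cost per trajectory sneaks in. The complementary point is to confirm that exactly $O(l)$ shortest-path computations are performed per iteration (the $\le l$ sources of $T_i$ in task (ii) plus the single new site $s'$ in task (iii)), so the distance computations never exceed the first term and the nearest-site search itself stays within $O(l \cdot n)$ rather than requiring the full pairwise formula at every candidate.
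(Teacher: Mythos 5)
Your proposal is correct and takes essentially the same route as the paper's proof: the identical per-iteration decomposition ($O(1)$ rank lookup in the sorted \nn array, $O(l\cdot n\log n)$ for shortest-path computations from the nodes of $T_i$, $O(n\log n + m\cdot l^2)$ to compute the new site's distance to every trajectory via $O(l^2)$ look-ups each, and $O(m\log m)$ to re-sort \nn), summed over $k$ iterations, together with the same space accounting in which the $O(l\cdot n)$ distance vectors are discarded at the end of each iteration. Your explicit treatment of the intra-trajectory terms $\ds(v_k,v_l)$ (precomputing them once per trajectory) addresses a detail the paper silently folds into its ``$O(l^2)$ distance look-ups,'' but it does not change the structure of the argument.
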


The proof is stated in Appendix~\ref{sec:mif complexity proof}.

Although \mif offers bounded quality guarantees, it is quite slow. This is
because it does not use any pre-computed distances. The next scheme, however,
leverages on pre-computed distances, and offers significantly faster response
times.

\subsection{Algorithm using \nc}
\label{sec:nc}

We observe that  \mif takes significant computation time for calculating
 node-to-node distances, and node-to-trajectory distances. This is
because we cannot afford to pre-compute and store all pairs node-to-trajectory
distances, which is overwhelmingly large. However, an indexing scheme can be
used that pre-computes and stores only a \emph{small} set of node-to trajectory
distances.

In this section, we propose a heuristic that uses the \nc indexing framework
\cite{tops_paper} that was originally designed to solve the following
\emph{\tops} problem \cite{tops_paper}: \\
\emph{Given a set of trajectories \traj, a set of candidate sites \s that can
host the services, the \tops problem with query parameters $(k,\tau)$ seeks
to report the best $k$ sites, $\mathcal{Q} \subseteq \mathcal{S}, \
|\mathcal{Q}| = k$, that \emph{cover} maximum number of trajectories.  It
is assumed that a site $s_i$ covers a trajectory $T_j$, if and only if
$\dr(T_j,s_i) \le \tau$, where $\tau$ is referred to as the coverage
threshold.}

\nc performs multi-resolution clustering of the nodes in the road network, $V$.
\nc maintains $t$ instances of index structures
$\mathcal{I}_0,\dots,\mathcal{I}_{t-1}$ of varying cluster radii. A particular
index instance is useful for a particular range of query coverage thresholds.
From one instance to the next, the radius is increased by a factor of $\epsinc$
for some $\eps > 0$.  Assume that the normal range of query coverage threshold
$\tau$ is $[\tau_{min}$, $\tau_{max})$. Then the total number of index
instances  is $t = \lfloor \log_{\epsinc} (\tau_{max} / \tau_{min}) \rfloor
+1$.  For each index instance,  \nc  maps the trajectories to the sequence of
clusters that they pass through.  

Intuitively, as the coverage threshold $\tau$  increases, the number of trajectories
covered by any set of candidate sites \q also increases. This was validated
empirically in \cite{tops_paper}. 
 
Exploiting the general monotonic behavior of the trajectory coverage with
respect to the coverage threshold $\tau$, we propose the following heuristic to
answer the \maxtips problem. Our goal is to  identify the smallest value of
$\tau$ such that there exists a set $\q \subseteq \s$ of size $k$  that covers
at least \trajthres number of trajectories in \traj. To guess this desired
value of $\tau$, we perform a binary search over the range of $\tau$, i.e.,
$[\tau_{min},\tau_{max}]$.

The algorithm proceeds in iterations. In each iteration, it computes the value
of the coverage threshold, $\tau=\frac{\tau^c_{min}+\tau^c_{max}}{2}$ where
$\tau^c_{min}$ and $\tau^c_{max}$ denote the current ranges.  Initially,
$\tau^c_{min} = \tau_{min}$ and $\tau^c_{max} = \tau_{max}$.  Next, the \tops
query with parameters $(k,\tau)$ is computed.  While doing so, the existing
facilities \fac must be taken into consideration \cite{tops_paper}. The
trajectories that lie within the coverage threshold $\tau$ of any existing
facility, are deemed to be \emph{covered}.  If the trajectory coverage value,
i.e., the number of trajectories covered by the set $\q \cup \fac$, is lower
than \trajthres, then $\tau^c_{min}$ is set to $\tau$, else $\tau^c_{max}$ is
set to $\tau$. Consequently, in the next iteration, \tops query is computed
with the revised value of $\tau$.  Since this process can continue forever, it
is stopped when the difference between $\tau^c_{max}$ and $\tau^c_{min}$ falls
below a desired precision.  Suppose the final iteration executed the \tops
query with parameters $(k,\tau^\prime)$ and returned the set \q. Then the
answer to the \maxtips problem is also \q with maximum inconvenience as
$\tau^\prime$.  We call this algorithm simply \nc.

As the monotonicity of the trajectory coverage w.r.t. the coverage
threshold $\tau$ is not guaranteed theoretically, the quality of this heuristic
cannot be bounded. Empirically, however, it performs the best in terms of both
running time and quality (Sec.~\ref{sec:exp}).

\begin{thm}
	\label{thm:netclus complexity}
	The time and space complexities of \nc algorithm are
	$O(\log_2(\tau_{max}/\tau_{min}). t_{\tops})$ and $O(t.(n+m.l))$
	respectively, where $O(t_{\tops})$ is the time required to answer a \tops
	query by \nc and $\tau_{min}$ and $\tau_{max}$ are the ranges of $\tau$
	values indexed by \nc. Further, $t=1+\lfloor
	\log_{1+\eps}(\tau_{max}/\tau_{min})\rfloor$ is the number of index
	instances, $\eps>0$  is the index resolution parameter, $m=|\traj|, n=|V|$
	and $l$ is the maximum number of nodes in any trajectory.
\end{thm}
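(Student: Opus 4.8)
The plan is to establish the two bounds separately, treating the cost $O(t_{\tops})$ of a single \tops query as a black box supplied by the prior indexing framework \cite{tops_paper}; the present analysis only needs to count how many such queries are issued and how much auxiliary storage \nc maintains on top of the index.

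For the running time, I would first argue that the outer loop is a binary search over $[\tau_{min},\tau_{max}]$ that halves the current width $\tau^c_{max}-\tau^c_{min}$ in every iteration, since each step replaces exactly one endpoint by the midpoint $\tau=(\tau^c_{min}+\tau^c_{max})/2$. Starting from width $\tau_{max}-\tau_{min}$ and terminating once the width drops below the desired precision (taken proportional to $\tau_{min}$, the finest resolution the index can distinguish), the number of iterations is $O(\log_2(\tau_{max}/\tau_{min}))$. Each iteration performs exactly one \tops query with parameters $(k,\tau)$ at cost $O(t_{\tops})$, followed by an $O(1)$ comparison of the resulting coverage against the threshold \trajthres to decide which endpoint to update (coverage contributed by the existing facilities \fac is folded into the \tops query itself). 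Multiplying the iteration count by the per-iteration cost yields the claimed time bound $O(\log_2(\tau_{max}/\tau_{min}) \cdot t_{\tops})$.

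For the space, I would count the footprint of the multi-resolution index together with the trajectory maps. \nc keeps $t=1+\lfloor \log_{\epsinc}(\tau_{max}/\tau_{min})\rfloor$ index instances, one per radius level. Within a single instance, the clustering assigns each of the $n=|V|$ nodes to a cluster, costing $O(n)$, and the trajectory-to-cluster map records, for each of the $m$ trajectories, the sequence of clusters it passes through; since a trajectory spans at most $l$ nodes, this map occupies $O(m \cdot l)$ space. Hence a single instance needs $O(n+m \cdot l)$, and summing over all $t$ instances gives $O(t \cdot (n+m \cdot l))$.

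The main obstacle is the iteration-count argument: unlike the clean halving of a fixed interval, the termination criterion is stated only as the difference falling below a desired precision, so the bound $O(\log_2(\tau_{max}/\tau_{min}))$ rests on fixing that precision relative to $\tau_{min}$. Once this convention is made explicit, the remaining steps are routine, and the correctness of the per-iteration \tops cost is inherited directly from \cite{tops_paper}.
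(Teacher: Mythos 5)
Your proposal is correct and follows essentially the same route as the paper's own proof: both treat the cost $O(t_{\tops})$ of a single \tops query as a black box from \cite{tops_paper}, bound the number of binary-search iterations by $O(\log_2(\tau_{max}/\tau_{min}))$, and sum the per-instance index storage of $O(n + m\cdot l)$ over the $t$ index instances. The only differences are minor: you are more explicit than the paper in justifying the iteration count (fixing the stopping precision relative to $\tau_{min}$), while the paper's space accounting additionally lists the precomputed node-to-cluster-center and trajectory-to-cluster-center distances, which cost $O(n)$ and $O(m \cdot l)$ per instance and are therefore absorbed into the same $O(t\cdot(n+m\cdot l))$ bound.
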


The proof is given in Appendix~\ref{sec:netclus complexity proof}.

While the \nc approach is index-based, \mif is non-index based. In case of \nc,
the distance of the trajectories and sites to their respective cluster centres
is pre-computed, thereby making it efficient. For \mif, all the necessary
distance computations are performed online and, hence, it is slower.

\section{Algorithms for \meantips}
\label{sec:meantips}

This section presents an optimal algorithm and two heuristics for the \meantips
problem.  

\subsection{Optimal Algorithm}
\label{sec:opti-mean}

The following \emph{integer linear program} solves the \meantips problem. As in
Sec.~\ref{sec:opti-max}, we assume that the set of candidate sites $\s=\{s_i|1
\le i \le n\}$ is augmented with the set of existing facilities $\fac$.
\begin{small}
\begin{align}
	\text{minimize } Z = \sum_{j=1}^m \sum_{i=1}^n \left[ \dr(T_j,s_i) \times y_{ij} \right] &\quad \text{ such that } \label{eq:obj2} \\	
	\sum_{i=1}^n x_i &\leq k + |\fac| \label{eq:cardinality2} \\
	\forall 1 \le j \le m,\ \sum_{i=1}^n y_{ij} &= 1  \label{eq:clus1} \\
	\forall 1 \le i \le n,\ \forall 1 \le j \le m,\ y_{ij} &\le x_i  \label{eq:clus2} \\		
	\forall 1 \le i \le n, \ 
	x_i &\in \{0,1\} \label{eq:binary3}\\
	\forall s_i \in \fac, \ x_i &=1 \label{eq:existing meantips} \\
		\forall 1 \le i \le n,\ \forall 1 \le j \le m \ 
	y_{ij} &\in \{0,1\}
	\label{eq:binary4}	
\end{align}
\end{small}

The objective function in Eq.~\eqref{eq:obj2} ensures that the total
(equivalently, mean) inconvenience is minimized over the set of selected sites.
The constraint in Ineq.~\eqref{eq:clus1} guarantees that each trajectory is
served by exactly one service location. The semantics of rest of the
constraints are same as those stated in Sec.~\ref{sec:opti-max}.

This optimal algorithm is impractical except for extremely small datasets
(shown in Sec.~\ref{sec:meantips_exp}).  Therefore, we next present the
following heuristics for the \meantips problem.

\subsection{\hcc Algorithm} 
\label{sec:hcc}

Recall that the \meantips problem generalizes the $k$-medoids problem
(Sec.~\ref{sec:related}).  Our first heuristic, \hcc, therefore, builds on the
three popular approaches for the $k$-medoids problem, PAM
\cite{rousseeuw1990finding}, CLARA \cite{rousseeuw1990finding}, and CLARANS
\cite{ng2002clarans}.  We first describe these approaches, and then discuss the
proposed \hcc algorithm.

\noindent
\textbf{PAM:} The basic idea of PAM is as follows. Given a set of $n$ objects,
it starts by choosing $k$ random objects, called as \emph{medoids}. Each
non-medoid object is assigned to its nearest medoid. The cost of a particular
clustering is the sum of the distances of each non-medoid object to its nearest
medoid. The PAM algorithm proceeds in iterations. In each iteration, it swaps
one of the existing medoids with a non-medoid object such that the cost of the
resulting clustering decreases. To realize the swap with minimal cost, it
computes the cost of each possible swap which are as many as $k(n-k)$. This step
is computationally  very expensive.  PAM terminates when it reaches a local
minima, i.e., there is no possible swap resulting in a solution with a lower
cost.

\noindent
\textbf{CLARA:} For large datasets, the PAM algorithm is infeasible owing to
its high running time. To address this limitation, the CLARA algorithm
\cite{rousseeuw1990finding} that relies on sampling was proposed. The idea is
to create random samples of size much smaller than $n$, and execute the PAM
algorithm on each of these samples, and return the clustering that offers the
minimal cost over all the samples. 

\noindent
\textbf{CLARANS:} To improve the quality of clustering of CLARA, another
algorithm called CLARANS was proposed \cite{ng2002clarans}. The basic idea of
CLARANS is to avoid scanning all possible swaps in each iteration of PAM.
Essentially, a small fraction of the total $k(n-k)$ possible swaps are scanned
and the swap that offers the minimal clustering cost is executed. To increase
the robustness, this scheme is repeated a few number of times, and finally the
clustering with the minimal cost is reported.

Inspired by the above approaches, we propose a new algorithm, \emph{\hcc
(Hybrid-CLARA-CLARANS)}, for the \meantips problem that combines the ideas of
sampling (from CLARA) and scanning a small number of swaps (from CLARANS). The
basic idea is to consider a few samples of sufficiently small size, and for
each sample, to scan a sufficiently small number of swaps. The details are
described next.

\subsubsection{Details}
 
First, we describe the basic algorithm, and later, discuss how to make it
scalable.  Given a set \traj of $m$ trajectories, and a set \s of $n$ candidate
sites, initially, we compute the distances between each pair of trajectory and
site. Then we execute the following steps. 
 
Initialize \q to \fac, i.e., the set of existing facilities.  Choose a random
set of $k$ sites in $\s$, referred to as \emph{medoids}, and add it to \q. The
total inconvenience of the set \q is $\totali(\q) = \sum_{T_j \in \traj} I_j$,
where $I_j = \min_{s_i \in \q}\dr(T_j,s_i)$.  To efficiently compute the value
of $\totali(\q)$, we use the \nn map (discussed in Sec.~\ref{sec:mif}) which
tracks the trajectories based on the distance to their nearest facility in \q.

Subsequently, the algorithm proceeds in iterations. In each iteration, it scans
a fraction of the total possible swaps between a medoid and a non-medoid site in
$\s \setminus \q$, as discussed above, and executes the swap that results in the
lowest total inconvenience $\totali(\q)$. Out of the total number of possible
swaps, $k(n-k)$, \hcc scans only a fraction, referred to as the \emph{\nf}, and
denoted by $\nfs$. This process terminates when there is no possible swap that
leads to a solution with a lower total inconvenience, or after a pre-specified
number of maximum iterations $\eta$, whatever is encountered earlier. While
performing the swaps, we ensure that we do not swap any existing facility in \q.
The algorithm makes $t \ (t \ge 1)$ trials of the above steps to account for the
randomly chosen initial set of medoids.  Finally, it returns the set $\q$ with
the lowest total inconvenience achieved over the $t$ trials. 

Let us see the working of this algorithm for the \meantips problem on the
example shown in Fig.~\ref{fig:example} with $k=2$. We will consider a single
trial with \nf  $\nfs=1$. Initially, $\q$ is set to $\{s_0\}$. Next, let us
start with the following set of initial medoids $\{s_1,s_2\}$ that is added to
\q. We note that $\totali(\q)=42$. Since $\nfs=1$, we examine all possible swaps
resulting in the following sets: $\{s_0,s_1,s_3\}, \{s_0,s_1,s_4\},
\{s_0,s_2,s_3\}$ and $\{s_0,s_2,s_4\}$. Since  $\totali$ $(\{s_0,s_2,s_3\})=21$
is minimal, it executes this swap. Since there is no further possible swap that
results in lower total inconvenience, the algorithm terminates, reporting the
set $\{s_2,s_3\}$ as the answer. Incidentally, this happens to be the optimal
solution.

This heuristic requires the distance values between each pair of trajectory $T_j
\in \traj$ and site  $s_i \in \s$. Computing and storing these distances for
large datasets may be infeasible when $|\s|$ or $|\traj|$ is large.  Thus, under
such circumstances, we propose to  sample the set of candidate sites and
trajectories, using the following schemes. 

\subsubsection{Site Sampling}
\label{sec:site sampling}

The sampling technique is based on a simple clustering idea that clusters the
set of nodes $V$ in the road network and samples at most a single candidate site
from each cluster. The details are as follows.  A random node  $v_p \in V$ (that
is not yet clustered) is chosen as the cluster-center of a cluster $\clus(v_p)$
that consists of all nodes $v_q \in V$ that are not yet clustered and are
within some distance threshold $R$ from $v_p$. This process is repeated until
every node in $V$ is clustered. Finally, a sample $\subs \subseteq \s $ is
created by selecting at most a single site from each cluster that is closest to
the cluster-center, which we refer to as its cluster-representative. If a
cluster has no candidate site, then it  has no cluster-representative either.
As the chosen sites belong to different clusters, they are not expected to be
close to each other. Thus, the sampled sites are typically nicely distributed
over the road network. 

\subsubsection{Trajectory Sampling}
\label{sec:trajectory sampling}

Consider any trajectory $T_j \in \traj$. For each node $v_p \in T_j$, let
$v_p^\prime$ be the cluster-center of the cluster that contains $v_p$. Each
trajectory $T_j$ is mapped to a set of cluster-centers
$T_j^\prime=\{v_p^\prime\}$. This transforms the trajectories into a coarser
representation as nodes that are close to each other in the trajectory are
likely to fall in the same cluster. This transformed set of trajectories
$\{T_j^\prime\}$ is denoted by $\subtraj$.

Following this, the set of trajectories \subtraj is clustered based on their
Jaccard similarity measure, as proposed in \cite{harshbhandari2016clustering}.
The high level overview of this method is as follows.  Suppose, we are required
to create a trajectory sample of size $s$. Initially, each trajectory
$T_j^\prime \in \subtraj$ is a cluster by itself with $T_j^\prime$ being the
cluster-representative. For any two trajectories, $T_p^\prime$ and
$T_q^\prime$, their Jaccard similarity is given by
$J(T_p^\prime,T_q^\prime) = {|T_p^\prime\cap T_q^\prime|} / {|T_p^\prime \cup T_q^\prime|}$.

The clustering follows an iterative algorithm where in each iteration it fuses a
pair of clusters with cluster-representatives $T_p^\prime$ and $T_q^\prime$ that
have the maximum Jaccard similarity. After fusing, either of the two
trajectories, $T_p^\prime$ or $T_q^\prime$ is deemed as the
cluster-representative of the new cluster. The algorithm continues fusing a pair
of clusters in each iteration in this manner, until there are exactly $s$
clusters. The cluster-representatives of these $s$ clusters are mapped back to
their original representation as sequence of nodes, and reported as the desired
trajectory sample.

\begin{thm}
	\label{thm:hcc complexity}
	The time and space complexities of \hcc are
	$O(t.\eta.\nfs.k^2(n^\prime-k).m^\prime)$ and $O(m^\prime.(n^\prime + l))$
	respectively, where $n^\prime$ and $m^\prime$ are the number of sites and
	trajectories produced after sampling of the sites and the trajectories,
	respectively. Here, $k$ is the number of medoids, $\nfs$ is the
	swap-fraction, $t$ is the number of trials, $\eta$ is the maximum number of
	iterations per trial, and $l$ is the maximum number of nodes in any
	trajectory.
\end{thm}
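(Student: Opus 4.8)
The plan is to analyze the running time and space by decomposing the \hcc algorithm into its nested loop structure and accounting for the dominant cost incurred at each level. The three nested levels are: the $t$ outer trials, the at most $\eta$ swap iterations within each trial, and the swaps examined within each iteration. I would then isolate the per-swap evaluation cost and multiply through the levels. For the space bound, I would separately tally the data structures that the algorithm maintains throughout its execution.

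For the time complexity, first I would count the number of swaps scanned per iteration. Since the total number of medoid/non-medoid swaps is $k(n^\prime - k)$ and \hcc examines only a \nf $\nfs$ of them, the number of swaps per iteration is $\nfs \cdot k(n^\prime - k)$. The key step is to bound the cost of evaluating the total inconvenience $\totali(\q)$ for each candidate swap. For a given swapped set \q, computing $\totali(\q) = \sum_{T_j \in \traj} I_j$ requires, for each of the $m^\prime$ trajectories, finding the minimum pre-computed distance to any of the $k$ medoids, which takes $O(k)$ time per trajectory and hence $O(k \cdot m^\prime)$ in total. Multiplying the per-swap cost by the number of swaps per iteration yields $O(\nfs \cdot k(n^\prime - k) \cdot k \cdot m^\prime) = O(\nfs \cdot k^2(n^\prime - k) \cdot m^\prime)$ per iteration. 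Accounting for the $\eta$ iterations per trial and the $t$ trials gives the claimed bound of $O(t \cdot \eta \cdot \nfs \cdot k^2(n^\prime - k) \cdot m^\prime)$.

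For the space complexity, I would account for the structures persisting across iterations. The dominant term is the pre-computed distance matrix between each trajectory in \subtraj and each site in \subs, requiring $O(m^\prime \cdot n^\prime)$ space. Storing the trajectories themselves, each with at most $l$ nodes, adds $O(m^\prime \cdot l)$. The \nn map and the answer set \q contribute only lower-order terms. Summing these gives $O(m^\prime n^\prime + m^\prime l) = O(m^\prime(n^\prime + l))$, as claimed.

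The main obstacle I expect is justifying the per-swap evaluation cost cleanly — in particular, confirming that a from-scratch $O(k \cdot m^\prime)$ recomputation of $\totali(\q)$ after each swap is what the stated bound assumes, and ensuring that maintaining the sorted \nn map does not smuggle in additional logarithmic factors. Since the stated complexity carries the full second factor of $k$ arising from a nearest-medoid scan over all $k$ medoids, I would argue that this straightforward recomputation suffices and that incremental maintenance of \nn does not raise the asymptotic order. I would also note that the one-time cost of site and trajectory sampling is a pre-processing step that is either dominated by or counted separately from the iterative search, and therefore does not affect the stated asymptotic complexities.
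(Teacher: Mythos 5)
Your proposal is correct and follows essentially the same route as the paper's proof: the identical decomposition into $t$ trials, $\eta$ iterations, $\nfs\cdot k(n^\prime-k)$ swaps per iteration, and an $O(k\cdot m^\prime)$ from-scratch evaluation of $\totali$ per swap using the pre-computed trajectory--site distances, together with the same space tally dominated by the $O(m^\prime\cdot n^\prime)$ distance matrix and the $O(m^\prime\cdot l)$ trajectory storage. Your added remarks on the \nn map not introducing logarithmic factors and on sampling being a separate pre-processing cost are consistent with (and slightly more explicit than) the paper's treatment.
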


The proof is available in Appendix~\ref{sec:hcc complexity}.

Unfortunately, \hcc does not have any approximation bound, since
the original algorithms, PAM, CLARA and CLARANS do not have any quality
guarantees either.

\subsection{GREAT Algorithm}
\label{sec:great}

We next propose a greedy heuristic called \incg (GREedy Avg-Tips) that offers
bounded quality guarantees.

It is an iterative algorithm that works on the principle of maximizing the
marginal gain in each successive iteration. It starts with the set $\q
\leftarrow \fac$. In each iteration $\theta=\{1,\dots,k\}$, it selects a site
$s_\theta \in \s \setminus \q$ such that the total inconvenience of the
resulting set,  $\totali(\q \cup \{s_\theta\})$, is minimized.  The site
$s_\theta$ is added to the set $\q$. The algorithm terminates after $k$
iterations.

Similar to \hcc, the \incg algorithm also assumes that the distances between
each pair of trajectory $T_j \in \traj$ and site $s_i \in \s$ are pre-computed
and available. We also use the \nn map (as in the case of \hcc) to track the
distance between each trajectory and its nearest facility. Whenever a new site
is chosen to be added into \q, we check whether it would be the nearest facility
for each trajectory $T_j$, and update it accordingly.

Let us see the working of this algorithm for the \meantips problem on the
example shown in Fig.~\ref{fig:example} with $k=2$. Before the algorithm begins,
we set $\q \leftarrow \{s_0\}$. In the first iteration, $s_3$ is selected as the
set $\{s_0,s_3\}$ offers the least total inconvenience of $43$ units. In the
next iteration, $s_2$ is  selected, resulting in optimal total inconvenience of
$21$ units.
 
\subsubsection{Analysis of \incg}

We, next, state an important property of \meantips problem, which in turn, helps us to bound the quality of \incg.

A function $f$ defined on any subset of a set $\mathcal{S}$ is
\emph{sub-modular} if for any pair of subsets $\mathcal{Q,R} \subseteq
\mathcal{S}$, $f(\mathcal{Q}) + f(\mathcal{R}) \geq f(\mathcal{Q} \cup
\mathcal{R}) + f(\mathcal{Q} \cap \mathcal{R})$ \cite{nemhauser1978analysis}. A
function $f$ is \emph{super-modular} if its negative $(-f)$ is sub-modular. The
following result shows that the function $\totali(\q)$ is super-modular.

\begin{thm}
	\label{thm:supermodular}
	For any set of candidate sites $\q \subseteq \s$, the total inconvenience
	$\totali(\q)$ is a non-increasing super-modular function. 
\end{thm}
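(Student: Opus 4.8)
The plan is to reduce the claim to a per-trajectory statement and then exploit the fact that non-negative sums preserve both monotonicity and super-modularity. Since $\totali(\mathcal{Q}) = \sum_{T_j \in \traj} I_j$ with $I_j = \min_{s_i \in \mathcal{Q} \cup \fac}\{\dr(T_j,s_i)\}$, I would first argue that it suffices to show that, for each fixed trajectory $T_j$, the set function $\mathcal{Q} \mapsto I_j(\mathcal{Q})$ is non-increasing and super-modular. Unpacking the definition given above, super-modularity of $\totali$ means $\totali(\mathcal{Q}) + \totali(\mathcal{R}) \le \totali(\mathcal{Q}\cup\mathcal{R}) + \totali(\mathcal{Q}\cap\mathcal{R})$; if each summand $I_j$ satisfies the analogous inequality, then adding over the finitely many trajectories gives it for $\totali$.

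For a fixed $T_j$, I would write $d(s) = \dr(T_j,s) \ge 0$ for each site and fold the existing facilities into a single baseline constant $c = \min_{s \in \fac}\{d(s)\}$ (taking $c = +\infty$ when $\fac = \varnothing$), so that $I_j(\mathcal{Q}) = \min\{c,\ \min_{s \in \mathcal{Q}} d(s)\}$. Monotonicity is then immediate: enlarging $\mathcal{Q}$ only adds terms to the inner minimum, so $I_j$ (and hence $\totali$) can only decrease, which establishes the non-increasing part of the statement.

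For super-modularity, the key observation I would establish is that the minimum distributes over union, $I_j(\mathcal{Q}\cup\mathcal{R}) = \min\{I_j(\mathcal{Q}),\ I_j(\mathcal{R})\}$, since the capped minimum of $d$ over $\mathcal{Q}\cup\mathcal{R}$ equals the smaller of the two capped minima (by commutativity and idempotence of $\min$). Writing $a = I_j(\mathcal{Q})$ and $b = I_j(\mathcal{R})$ and assuming without loss of generality $a \le b$, this gives $I_j(\mathcal{Q}\cup\mathcal{R}) = a$. On the intersection side, monotonicity applied to $\mathcal{Q}\cap\mathcal{R} \subseteq \mathcal{Q}$ and $\mathcal{Q}\cap\mathcal{R} \subseteq \mathcal{R}$ forces $I_j(\mathcal{Q}\cap\mathcal{R}) \ge \max\{a,b\} = b$. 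Combining, $I_j(\mathcal{Q}\cup\mathcal{R}) + I_j(\mathcal{Q}\cap\mathcal{R}) \ge a + b = I_j(\mathcal{Q}) + I_j(\mathcal{R})$, which is exactly the per-trajectory super-modular inequality.

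The only real subtlety — and the step I would treat most carefully — is the bookkeeping around the existing facilities and the empty-set convention: I must ensure the baseline $c$ is handled consistently so that both the distributivity identity and the monotonicity bound remain valid when $\fac = \varnothing$ or when one of the subsets is empty (where $I_j$ takes the value $+\infty$). Once these edge cases are dispatched, the argument is a routine verification, and summing the per-trajectory inequalities over $\traj$ closes the proof.
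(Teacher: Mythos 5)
Your proof is correct, and it coincides with the paper's up to the point of reducing the claim to the per-trajectory function $I_j$; after that, the two arguments diverge in a genuine way. The paper does not prove the union/intersection inequality directly: it invokes the equivalence (citing Nemhauser et al.) between super-modularity and the marginal characterization, and then verifies, for nested sets $\mathcal{Q} \subseteq \mathcal{R}$ and a site $s \notin \mathcal{R}$, that $I_j(\mathcal{Q}\cup\{s\}) - I_j(\mathcal{Q}) \le I_j(\mathcal{R}\cup\{s\}) - I_j(\mathcal{R})$, by a two-case analysis on whether the nearest facility to $T_j$ in $\mathcal{R}\cup\{s\}$ is $s$ itself or some $s^* \in \mathcal{R}$. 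You instead verify the lattice definition as stated, via the identity $I_j(\mathcal{Q}\cup\mathcal{R}) = \min\{I_j(\mathcal{Q}),\, I_j(\mathcal{R})\}$ combined with monotonicity to bound the intersection term below by $\max\{I_j(\mathcal{Q}),\, I_j(\mathcal{R})\}$. Your route is self-contained (it needs no equivalence between the two characterizations of super-modularity), it handles the existing facilities \fac explicitly through the baseline constant $c$ (the paper's proof silently writes $I_j(\q) = \min_{s_i\in\mathcal{Q}}\{\dr(T_j,s_i)\}$, dropping \fac), and, because it involves only sums and no differences, the $+\infty$ conventions when $\fac = \varnothing$ cause no trouble. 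What the paper's route buys is that the diminishing-returns inequality it establishes is exactly the form consumed by the greedy analysis of \incg in Theorem~\ref{thm:bound of incg for meantips}, and its case split on the identity of the nearest facility makes the underlying geometry explicit. Both proofs are valid; yours is arguably the cleaner verification of the definition actually given in the paper.
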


\begin{proof}
	Consider any pair of sets $\mathcal{Q,R} \subseteq \mathcal{S}$ such that
	$\mathcal{Q} \subseteq \mathcal{R}$. 

	First, we show that $\totali(\q)$ is a non-increasing function.	Since
	$I_j(\q) = \min_{s_i \in \mathcal{Q}}\{\dr(T_j,s_i)\}$ is a minimum
	function over the set $\q$, it follows that $I_j(\mathcal{R})\leq
	I_j(\mathcal{Q})$. Thus,	$\totali(\mathcal{R}) = \sum_{j=1}^m
	I_j(\mathcal{R}) \leq \sum_{j=1}^m I_j(\mathcal{Q}) =
	\totali(\mathcal{Q})$. Hence, $\totali(\q)$ is a non-increasing function.
	
	To show that $\totali(\q)$ is super-modular,
	it is sufficient to show that 
	for any site $s \in \mathcal{S} \setminus \mathcal{R}$, the following holds
	\cite{nemhauser1978analysis}:
	\begin{align}
		\totali(\mathcal{Q}\cup \{s\}) - \totali(\mathcal{Q})
			\leq \totali(\mathcal{R}\cup \{s\}) - \totali(\mathcal{R})
	\end{align}
	Since $\totali(\q) = \sum_{j=1}^m I_j$, it is, therefore, enough to prove
	that for any trajectory $T_j \in \mathcal{T}$,
	\begin{align}
		\label{eq:supermodular}
		I_j(\mathcal{Q}\cup \{s\}) - I_j(\mathcal{Q})
			\leq I_j(\mathcal{R}\cup \{s\}) - I_j(\mathcal{R})
	\end{align}

	Suppose the site $s^* \in \mathcal{R} \cup \{s\}$ is the nearest facility to
	trajectory $T_j$ in the set $\mathcal{R} \cup \{s\}$.  There can be
	two cases: \\
	(a) $s^* = s$: $I_j(\mathcal{Q} \cup \{s\})=I_j(\{s\})=I_j(\mathcal{R} \cup
	\{s\})$. Further, since $I_j(\mathcal{Q})\geq I_j(\mathcal{R})$
	(using the non-increasing property), Ineq.~\eqref{eq:supermodular} follows. \\
	(b) $s^* \neq s, s^* \in \mathcal{R}$: Here, $I_j(\mathcal{R} \cup
	\{s\}) - I_j(\mathcal{R}) = 0$.  Using the non-increasing property of $I_j(\q)$,
	$I_j(\mathcal{Q} \cup \{s\}) -
	I_j(\mathcal{Q}) \leq 0$.  Thus, Ineq.~\eqref{eq:supermodular} follows.
	\hfill{}
\end{proof}

The next result bounds the quality of \incg.

\begin{thm}
	\label{thm:bound of incg for meantips}
	Let $OPT \subseteq \s, |OPT|=k$ denotes an optimal solution to the
	\meantips problem. Let $\q \subseteq \s, |\q|=k$ be the solution reported
	by the \incg algorithm. Then,
	\begin{align*}
		\totali(\q)&=\totali(OPT) & \mbox{ for } k=1 \\
		\totali(\q) &\le  (1-1/e)\totali(OPT)+ \totali(\fac)/e  & \mbox{
		for } k \ge 2
	\end{align*}
	where $\totali(\fac)$ refers to the initial total inconvenience offered by
	the existing facilities \fac. We assume $\totali(\fac)$ to be a
	non-negative constant.
\end{thm}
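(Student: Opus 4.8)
The plan is to recast \meantips as the maximization of a \emph{savings} set-function and then invoke the classical greedy guarantee for monotone sub-modular maximization \cite{nemhauser1978analysis}. Define $g(\q) = \totali(\fac) - \totali(\q)$, the reduction in total inconvenience obtained by opening the new sites of \q on top of the existing facilities \fac. Under the convention $I_j = \min_{s_i \in \q \cup \fac}\{\dr(T_j,s_i)\}$ we have $\totali(\varnothing)=\totali(\fac)$, hence $g(\varnothing)=0$ and $g$ is normalized. Because Theorem~\ref{thm:supermodular} shows $\totali$ is non-increasing and super-modular (and these properties are preserved under the restriction $A \mapsto \totali(A\cup\fac)$ that folds the fixed set \fac into the ground term), $g$ is non-decreasing and sub-modular. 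Crucially, the step taken by \incg --- selecting $s_\theta$ to minimize $\totali(\q\cup\{s_\theta\})$ --- is identical to selecting the element of largest marginal gain $g(\q\cup\{s_\theta\})-g(\q)$. Thus \incg is exactly the standard greedy algorithm maximizing the monotone sub-modular $g$ under the cardinality constraint $|\q|\le k$, and the maximizer of $g$ over $k$-subsets coincides with $OPT$, the \meantips optimum.

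The $k=1$ case is immediate: in its single iteration \incg evaluates $\totali(\fac\cup\{s\})$ for every candidate $s$ and keeps the minimizer, which is by definition an optimal single-site solution, so $\totali(\q)=\totali(OPT)$.

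For $k\ge 2$ I would run the textbook Nemhauser--Wolsey--Fisher induction. Let $\q_i$ denote the greedy set after $i$ iterations and write $\delta_i = g(OPT)-g(\q_i)$. The engine of the argument is the inequality $g(OPT)-g(\q_i) \le \sum_{o\in OPT}\big[g(\q_i\cup\{o\})-g(\q_i)\big]$, which follows from monotonicity ($g(OPT)\le g(OPT\cup\q_i)$) together with sub-modularity (the single-element marginal gains of the members of $OPT$ sum to at least the joint gain). Since \incg picks the element of maximum marginal gain, each summand is at most $g(\q_{i+1})-g(\q_i)=\delta_i-\delta_{i+1}$, and there are at most $k$ summands; hence $\delta_i \le k(\delta_i-\delta_{i+1})$, i.e. $\delta_{i+1}\le (1-1/k)\,\delta_i$. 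Iterating from $\delta_0=g(OPT)$ gives $\delta_k \le (1-1/k)^k\,g(OPT)$, and using $1-x\le e^{-x}$ we get $(1-1/k)^k\le 1/e$, so $g(\q)\ge (1-1/e)\,g(OPT)$.

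The final step translates the savings bound back to inconvenience. Substituting $g(\q)=\totali(\fac)-\totali(\q)$ and $g(OPT)=\totali(\fac)-\totali(OPT)$ into $g(\q)\ge(1-1/e)g(OPT)$ and rearranging yields exactly $\totali(\q)\le(1-1/e)\totali(OPT)+\totali(\fac)/e$. I expect the only genuinely delicate point to be the bookkeeping around \fac: one must check that super-modularity and monotonicity survive the restriction $A\mapsto\totali(A\cup\fac)$, and recognize that normalizing by $\totali(\fac)$ rather than by $0$ is precisely what produces the additive $\totali(\fac)/e$ term in place of a clean $(1-1/e)$ factor. Everything else is the routine greedy recurrence once $g$ is established to be normalized, monotone, and sub-modular.
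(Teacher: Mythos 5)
Your proof is correct and follows essentially the same route as the paper's: both define the savings function $g(\q)=\totali(\fac)-\totali(\q)$, use Theorem~\ref{thm:supermodular} to establish that it is normalized, non-decreasing, and sub-modular, identify \incg's site selection with the greedy maximum-marginal-gain rule, obtain $g(\q)\ge(1-1/e)\,g(OPT)$, and rearrange to get the additive $\totali(\fac)/e$ term. The only difference is presentational --- the paper invokes the $(1-1/e)$ greedy guarantee of \cite{nemhauser1978analysis} as a black box, whereas you re-derive it via the standard $\delta_{i+1}\le(1-1/k)\delta_i$ induction --- which makes your argument self-contained but not a genuinely different approach.
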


The proof is stated in Appendix~\ref{sec:great approx bound}.

The next result bounds the complexity of \incg.

\begin{thm}
	\label{thm:complexity of incg for meantips}
	The space and time complexities of \incg are $O(m.(n+l))$ and $O(k.m.n)$
	respectively, where $m=|\traj|$, $n=|\s|$ and $l$ is the maximum length of
	any trajectory.
\end{thm}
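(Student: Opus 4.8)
The plan is to separate the argument into a space-bound analysis and a time-bound analysis, treating the pre-computed distance information as given, since \incg assumes that $\dr(T_j,s_i)$ is available for every trajectory--site pair. For the space bound, I would account for three data structures: the distance table holding $\dr(T_j,s_i)$ for all $m$ trajectories and all $n$ candidate sites, which occupies $O(m.n)$ words; the stored trajectories themselves, where each of the $m$ trajectories lists at most $l$ nodes, contributing $O(m.l)$; and the \nn map, which records for each trajectory only its current nearest-facility distance $I_j$, contributing $O(m)$. Summing these gives $O(m.n + m.l + m) = O(m(n+l))$, matching the claimed space complexity.

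For the time bound, I would analyze the cost of a single greedy iteration and then multiply by $k$. In iteration $\theta$, the algorithm must evaluate $\totali(\q \cup \{s_i\})$ for each of the at most $n$ candidate sites $s_i \in \s \setminus \q$ and retain the minimizer. The key step is to argue that each such evaluation costs only $O(m)$ rather than requiring a fresh recomputation of all trajectory-to-facility distances. Because the \nn map already stores the current inconvenience $I_j$ of every trajectory with respect to the partial solution \q, adding $s_i$ changes the inconvenience of $T_j$ to $\min\{I_j, \dr(T_j,s_i)\}$, and since $\dr(T_j,s_i)$ is pre-computed, the updated total inconvenience is obtained by a single pass over the $m$ trajectories. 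Evaluating all $n$ candidates thus costs $O(n.m)$; once the best site $s_\theta$ is fixed, refreshing the \nn map via $I_j \leftarrow \min\{I_j, \dr(T_j,s_\theta)\}$ costs an additional $O(m)$, which is dominated. Over the $k$ iterations this yields $O(k.n.m)$, as claimed.

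The main obstacle is justifying the $O(m)$ per-candidate evaluation, i.e., that the marginal cost of a site can be scored incrementally rather than by recomputing $\totali$ from scratch over the whole answer set. This rests on the observation that the nearest-facility distance for a trajectory can only decrease (or stay equal) when a site is added, so a single comparison against the stored $I_j$ suffices; no re-sorting or re-scan of the already-selected sites in \q is needed, as their effect is fully summarized in \nn. In particular, for the sum objective of \meantips, the \nn map need not be kept sorted, so even maintenance remains linear in $m$. Once this incremental-scoring claim is in place, both bounds follow by straightforward bookkeeping, and the argument depends on the existing-facility initialization only through the constant-time set-up $\q \leftarrow \fac$, which does not affect the asymptotic order.
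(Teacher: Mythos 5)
Your proposal is correct and follows essentially the same route as the paper's proof: the space bound is obtained by summing the $O(m.n)$ distance table, the $O(m.l)$ trajectory storage, and the $O(m)$ \nn map, and the time bound by charging $O(m)$ per candidate-site evaluation, $O(m.n)$ per greedy iteration, and $O(k.m.n)$ overall. The only difference is that you spell out why each candidate evaluation costs $O(m)$ (incremental scoring via $\min\{I_j, \dr(T_j,s_i)\}$ against the stored \nn values), a step the paper asserts without elaboration, so your write-up is if anything slightly more complete.
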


The proof is available in Appendix~\ref{sec:complexity of incg for meantips}.

The key drawback of the above scheme is its space requirement of $O(m.n)$ which
is prohibitively large for city-scale datasets. To alleviate this problem, we
work with sampled set of trajectories and candidate sites, as described in
Sec.~\ref{sec:site sampling} and Sec.~\ref{sec:trajectory sampling}.

\renewcommand{\subsubsection}[1]{\noindent\textbf{#1:}}

\section{Experimental Evaluation}
\label{sec:exp}

\setlength{\tabcolsep}{3pt}	% Usually 6pt

\begin{table}[t]\small
\centering
\begin{tabular}{llrr}
\toprule
\bf Dataset & \bf Type & \bf \# Trajectories & \bf \# Sites \\
\midrule
\bs (BS) & Real & 8,083 & 30 \\
\bm (BM) & Real & 8,083 & 47,125 \\
\bl (BL) & Real & 123,179 & 269,686 \\
\midrule
\bms & \multirow{2}{*}{Real} & \multirow{2}{*}{1,278} & \multirow{2}{*}{1,883} \\
(BMS) & & & \\
\bls & \multirow{2}{*}{Real} & \multirow{2}{*}{9,701} & \multirow{2}{*}{15,775} \\
(BLS) & & & \\
\midrule
Bengaluru & Synthetic & 9,950 & 61,563 \\
New York & Synthetic & 9,950 & 355,930 \\
Atlanta & Synthetic & 9,950 & 389,680 \\
\bottomrule
\end{tabular}
\tabcaption{Summary of datasets.}
\label{tab:datasets}
\end{table}

\setlength{\tabcolsep}{6pt}	% Back to default

In this section, we perform extensive experiments to assess the quality,
scalability and practicality of the different heuristics.  Since both the \tips
problems, \maxtips and \meantips, are introduced in this work, there is
\emph{no} existing \emph{practical} baseline technique to compare against.
Nevertheless, we do compare with the nearest available baseline techniques for
both \emph{static} and \emph{mobile} users' scenarios.  Further, we compare with
the optimal algorithms for both \maxtips and \meantips, on small datasets.  The
experiments were conducted using Java (version 1.7.0) code on an Intel(R) Core
i7-4770 CPU @3.40GHz machine with 32GB RAM running Ubuntu 14.04.2 LTS OS.

\subsection{Datasets} 
\label{sec:datasets}
 
We conducted experiments on both real and synthetic datasets, whose details are
shown in Table~\ref{tab:datasets}.  For simplicity, we assume that the set
of candidate sites is same as the set of nodes in the road network, unless
otherwise stated.

\noindent
\textbf{Real datasets:} We use GPS traces from Beijing consisting of user
trajectories generated by tracking taxis for a week \cite{cab2,cab1}.  To
generate trajectories as sequences of road intersections, the raw GPS-traces
were map-matched~\cite{map1} to the Beijing road network extracted from
OpenStreetMap (\url{http://www.openstreetmap.org/}).  This dataset, referred to
as \bl (BL) here, is the largest and most widely used publicly available
trajectory datasets.

For \meantips, all the algorithms require distances between each pair of site
and trajectory. For a large dataset such as BL, computing and storing such
pairwise distances is infeasible. Therefore, for thorough evaluation, we use a
medium sized dataset, \bm (BM). 

To assess the quality of sampling techniques, representative datasets, referred
to as \bms (BMS) and \bls (BLS) respectively, are derived from the BM and BL
datasets.  These datasets are generated using the sampling schemes described in
Sec.~\ref{sec:site sampling} and Sec.~\ref{sec:trajectory sampling}.
 
Since both the \tips problems are NP-hard, the optimal algorithm requires
exponential time and, therefore, can be run only on a  small dataset.  Hence,
we evaluate all the algorithms against the optimal on \bs (BS) dataset which
consists of the same set of trajectories as in BM, but has only 30 candidate
sites, which are sampled randomly.  To increase the robustness of the results,
such sampling was performed 5 times, and the experiments were conducted 10
times for each sample.

\noindent
\textbf{Synthetic datasets:} To study the impact of city geographies, we
generated three synthetic datasets that emulate trajectory patterns
followed in New York, Atlanta and Bengaluru.  We use an online traffic
generator tool, MNTG (\url{http://mntg.cs.umn.edu/tg/index.php}) to generate
the traffic traces, and map-match them to generate trajectories in
the desired format.

\subsection{\maxtips Results}
\label{sec:maxtips_exp}

We evaluate the performance of three different algorithms to solve \maxtips,
\opt,  \mif, and \nc, on the two basic parameters: (i)~desired number of
service locations $k$, varied in the range $[1,10]$, and (ii)~\uf $\ufs$,
varied in the range $10\%$ to $100\%$.  The default values of $k$ and $\ufs$
are $5$ and $90\%$ respectively.  The metrics evaluated are (i)~maximum
inconvenience, $\maxi$, and (ii)~running time.  The dataset used is BL, unless
otherwise stated.

\begin{figure}[tb]
	\centering
	\moveup
	\moveup
	\subfloat[Max. Inconvenience.]
	{
		\includegraphics[width=\subfigwidth]{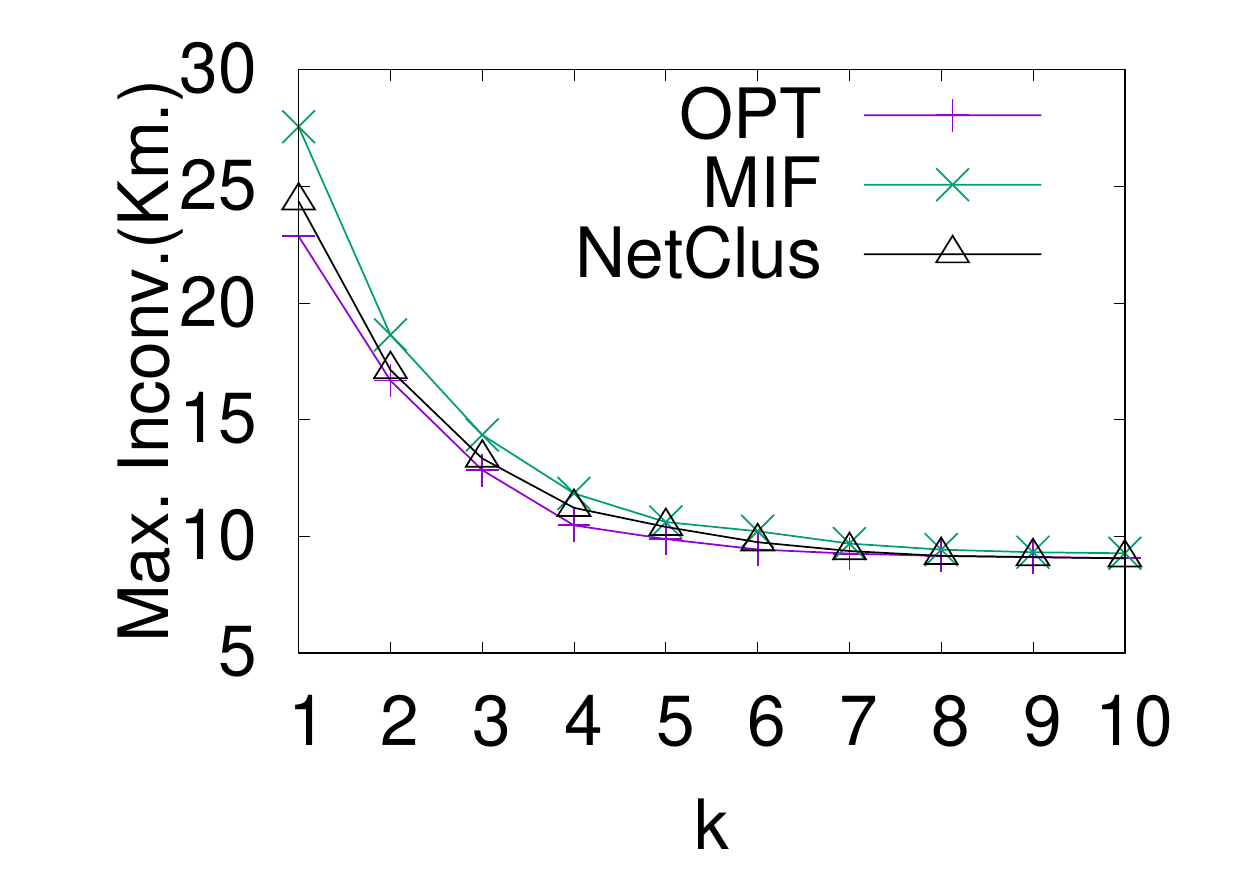}
		\label{subfig:maxtips optdist}
	}
	\subfloat[Running time.]
	{
		\includegraphics[width=\subfigwidth]{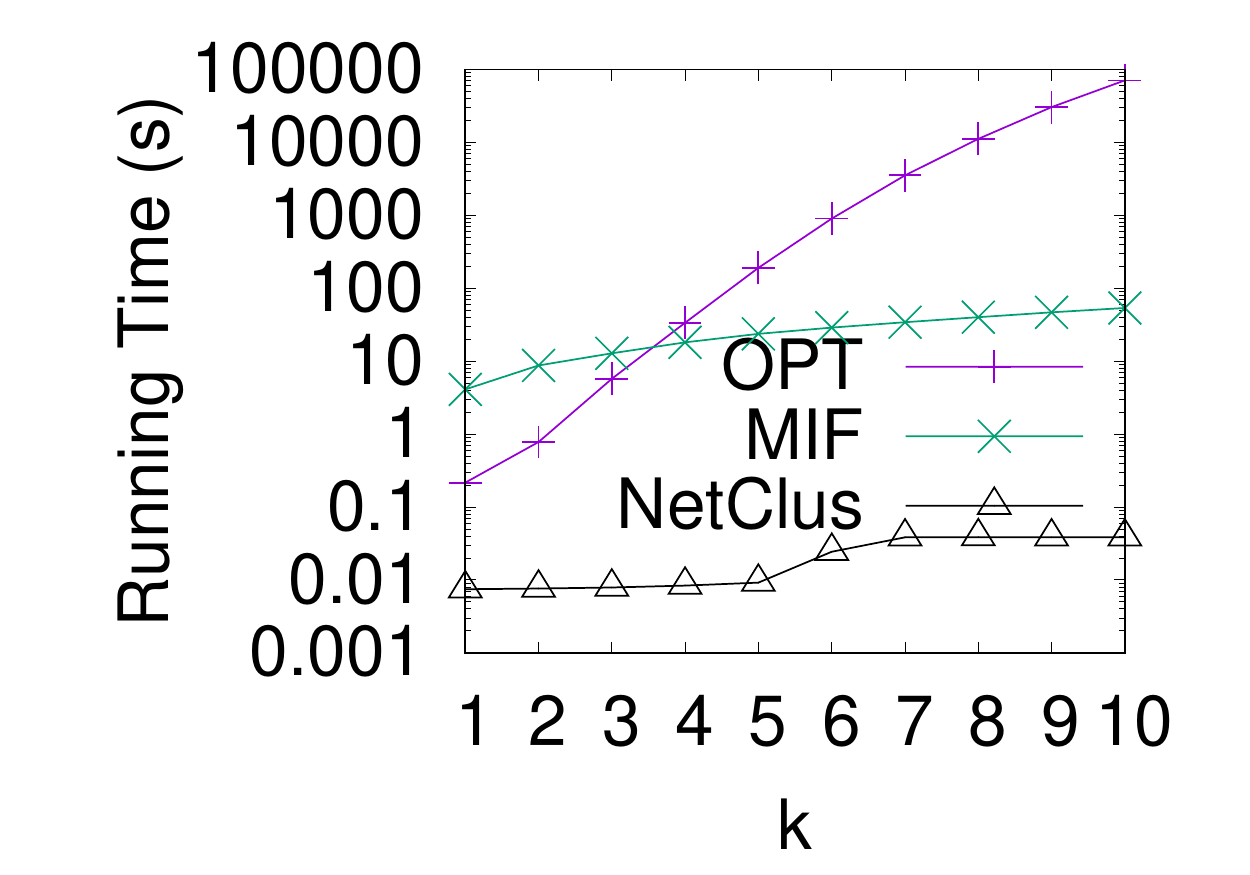}
		\label{subfig:maxtips opttime}
	}
	\figcaption{\maxtips: Comparison with optimal at $\ufs=100 \%$.}
	\label{fig:max_tips optimal}
\end{figure}

\subsubsection{Comparison with Optimal}
The optimal algorithm used, is the integer linear program (ILP), given in
Sec.~\ref{sec:maxtips}.  Since the optimal algorithm requires exponential
running times, we run it only on the BS dataset.  Fig.~\ref{fig:max_tips
optimal} shows that \maxi values offered by \mif and \nc are very close to that
of \opt although the running times are much better. \opt requires several hours
to complete even for this small dataset and, therefore, is not practical at
all.  Consequently, we do not experiment with \opt any further.

\begin{figure}[tb]
	\centering
	\moveup
	\moveup
	\subfloat[Varying $k$ at $\ufs=90\%$]
	{
		\includegraphics[width=\subfigwidth]{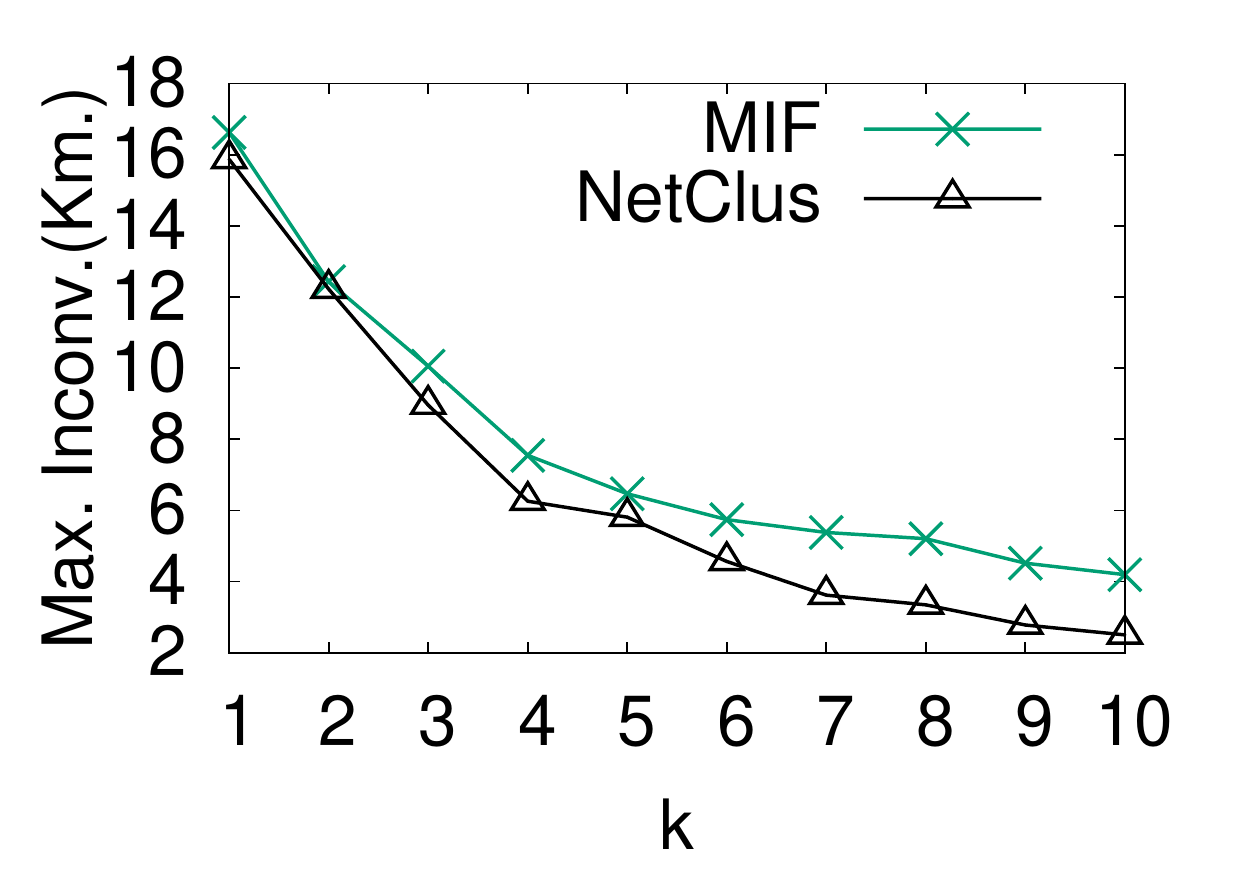}
		\label{subfig:maxtips k_dist}
	}
	\subfloat[Varying $\ufs$ at $k=5$]
	{
		\includegraphics[width=\subfigwidth]{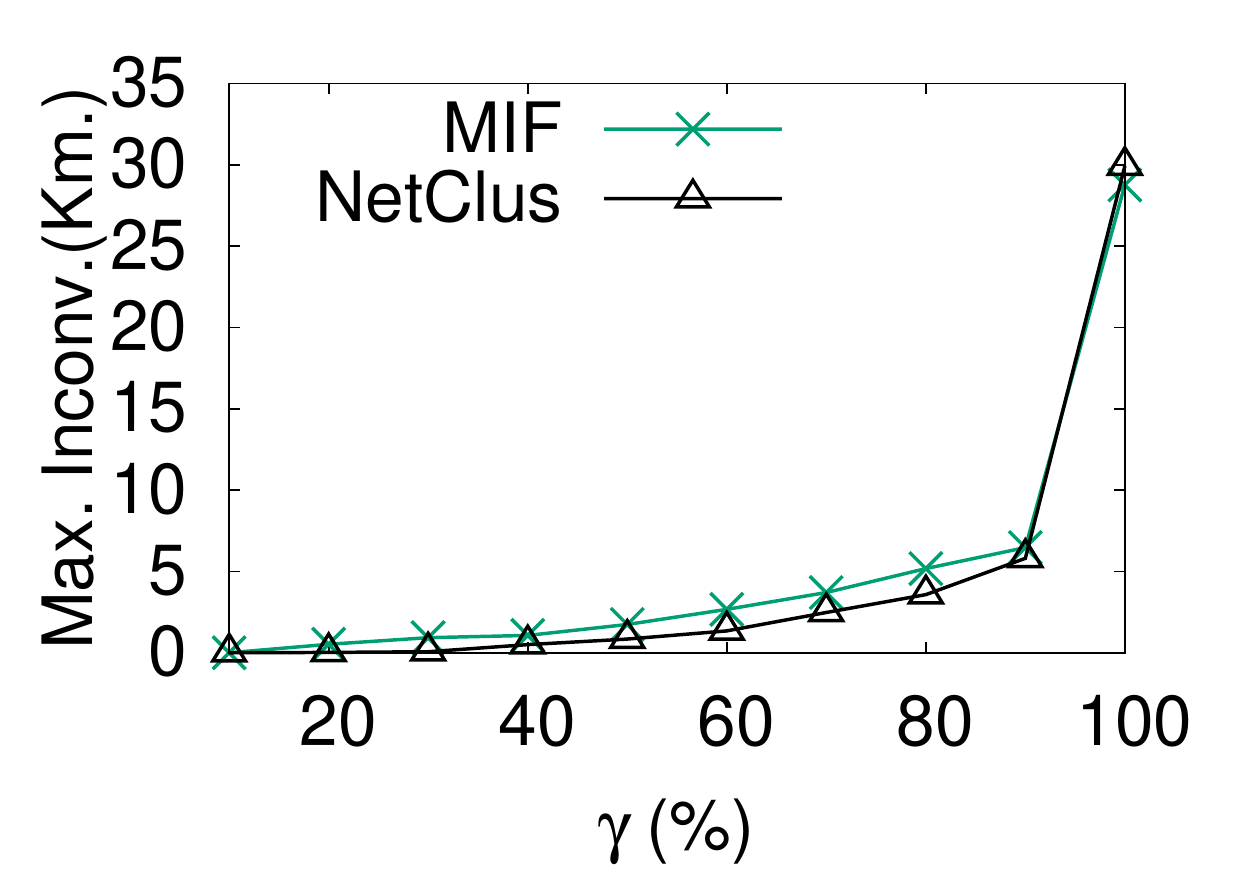}
		\label{subfig:maxtips uf_dist}
	}
	\figcaption{\maxtips: Quality results.}
	\label{fig:max_tips quality}
\end{figure}

\subsubsection{Quality Results}
Fig.~\ref{subfig:maxtips k_dist} shows the \maxi values of \mif and \nc  on the
BL dataset.  \nc offers the least \maxi value, beating  \mif by more than 20\%
on an average.

Fig.~\ref{subfig:maxtips uf_dist} shows that as \uf increases from 90\% to
100\%, there is a sharp rise in \maxi value for both \mif and \nc. This is
because there are generally trajectories that are very hard to satisfy and,
since at $\ufs=100\%$,  all of them need to be served, \maxi values shoot
up for both the algorithms.

\begin{figure}[tb]
	\centering
	\moveup
	\moveup
	\subfloat[Varying $k$ at $\ufs=90\%$]
	{
		\includegraphics[width=\subfigwidth]{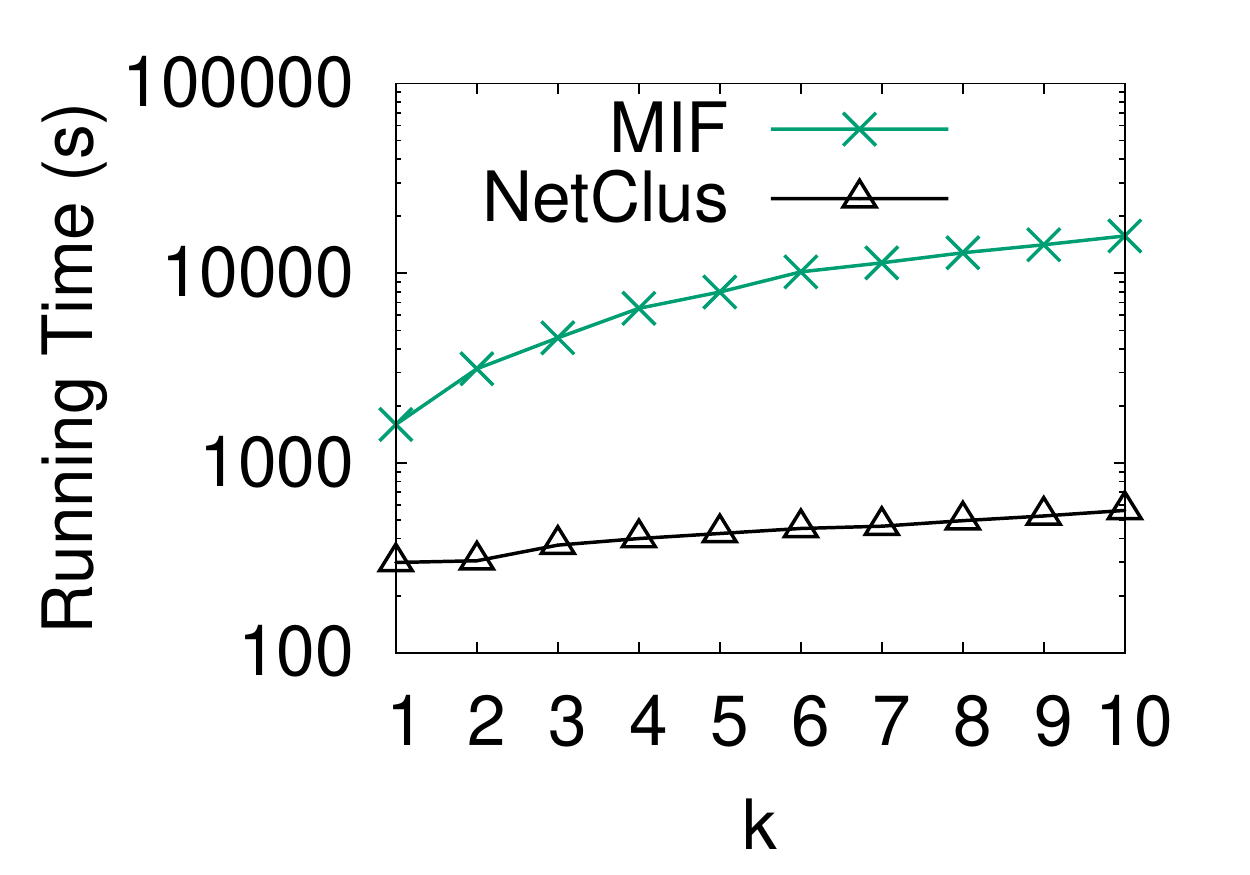}
		\label{subfig:maxtips ktime}
	}
	\subfloat[Varying $\ufs$ at $k=5$]
	{
		\includegraphics[width=\subfigwidth]{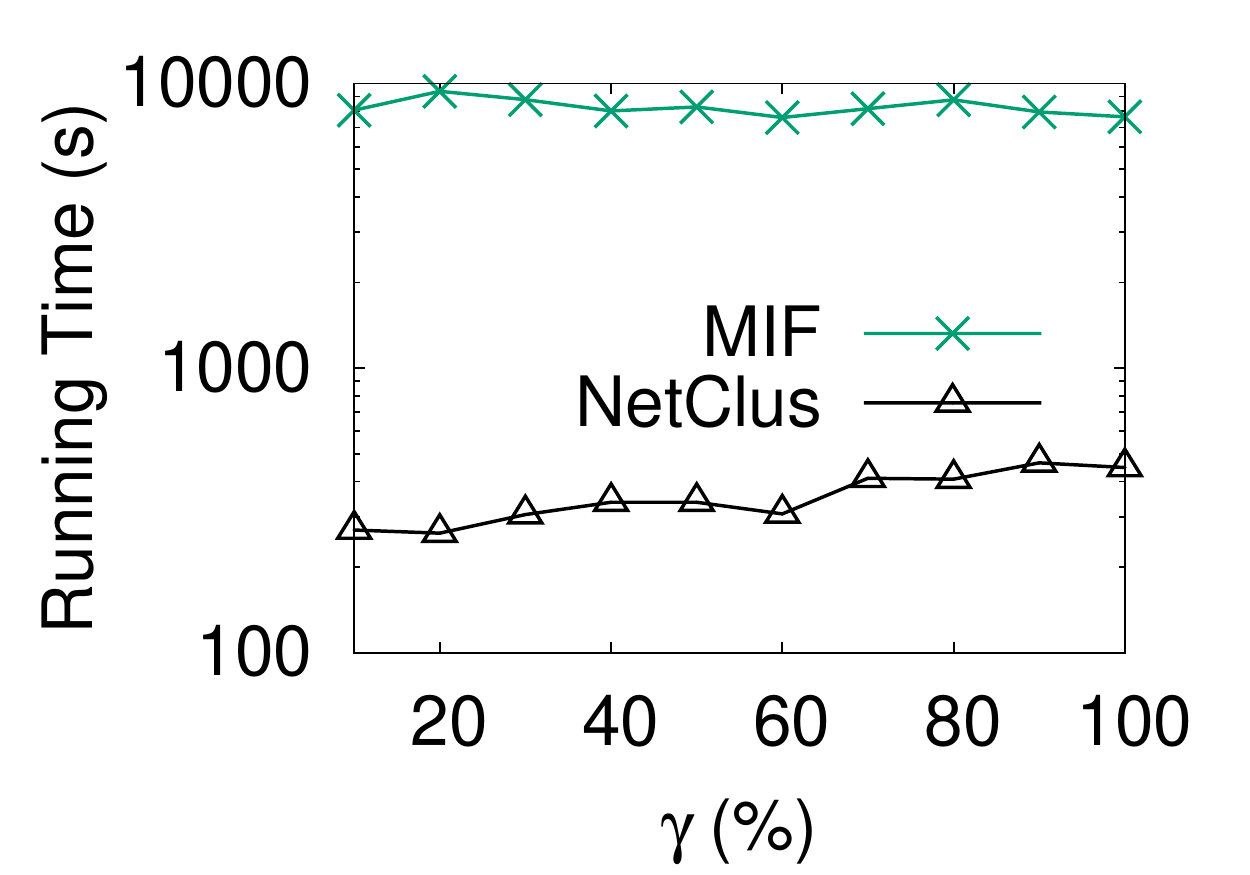}
		\label{subfig:maxtips uftime}
	}
	\figcaption{\maxtips: Running time performance.}
	\label{fig:max_tips time}
\end{figure}

\subsubsection{Performance Results}
The running time results portrayed in Fig.~\ref{fig:max_tips time} show that
\nc is 1-2 orders of magnitude faster than \mif. The high running time of \mif
is due to a large number of calls to the shortest path algorithm (once for each
site on the chosen trajectory in each of the $k$ iterations).  Moreover, to
guard against a poorly selected random initial trajectory, \mif is repeated
thrice.  On the other hand, \nc is fast since it uses pre-computed distances
between trajectories and centers of the clusters that they pass through.

\subsubsection{Memory Footprint}
The memory consumption of \mif and \nc on the BL dataset (at the default values
of $k$ and $\ufs$) are about 8.6GB and 3.2GB respectively. The low memory
footprint of \nc is due to clustering of the site space and consequent
compressed representation of trajectories.

\begin{figure}[tb]
	\centering
	\moveups
	\subfloat[Number of sites.]
	{
		\includegraphics[width=\subfigwidth]{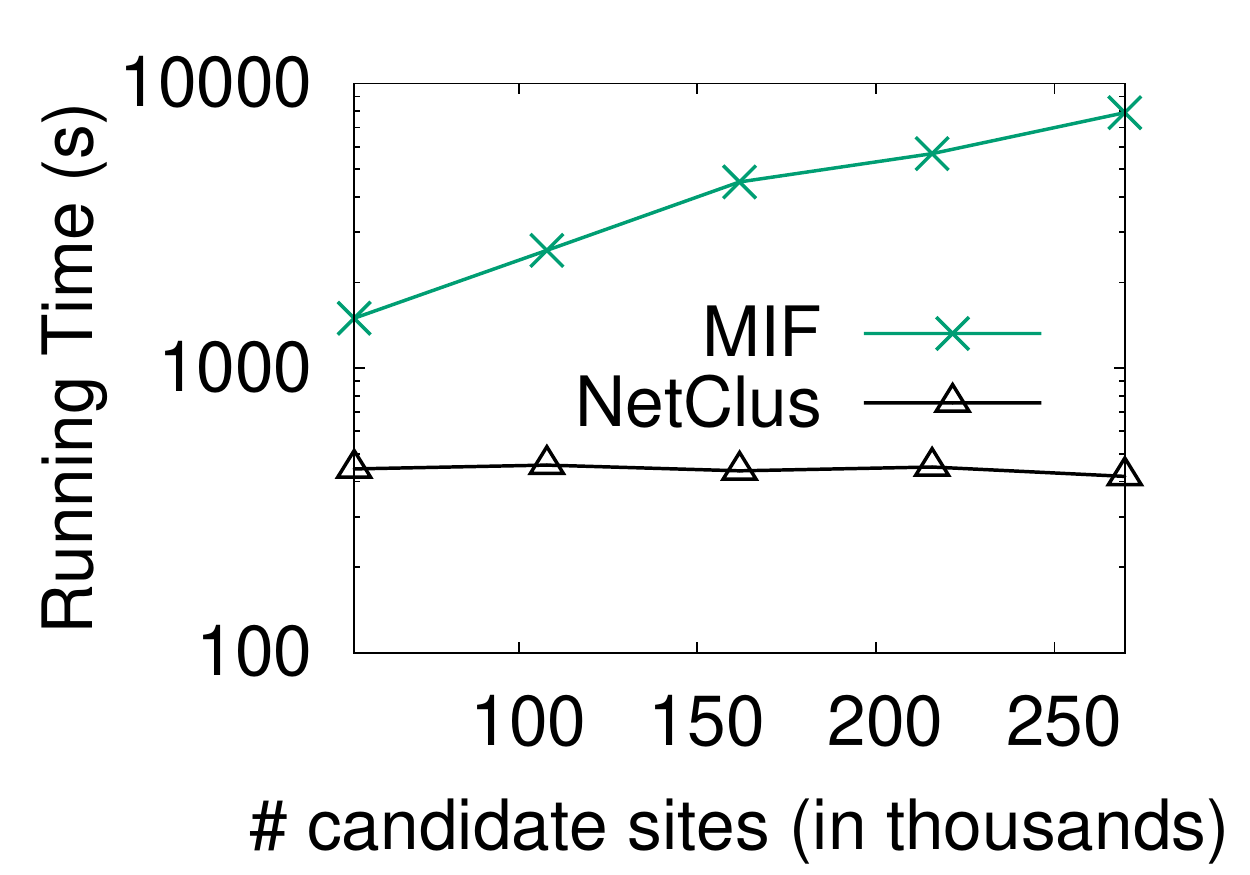}
		\label{subfig:maxtips site}
	}
	\subfloat[Number of trajectories.]
	{
		\includegraphics[width=\subfigwidth]{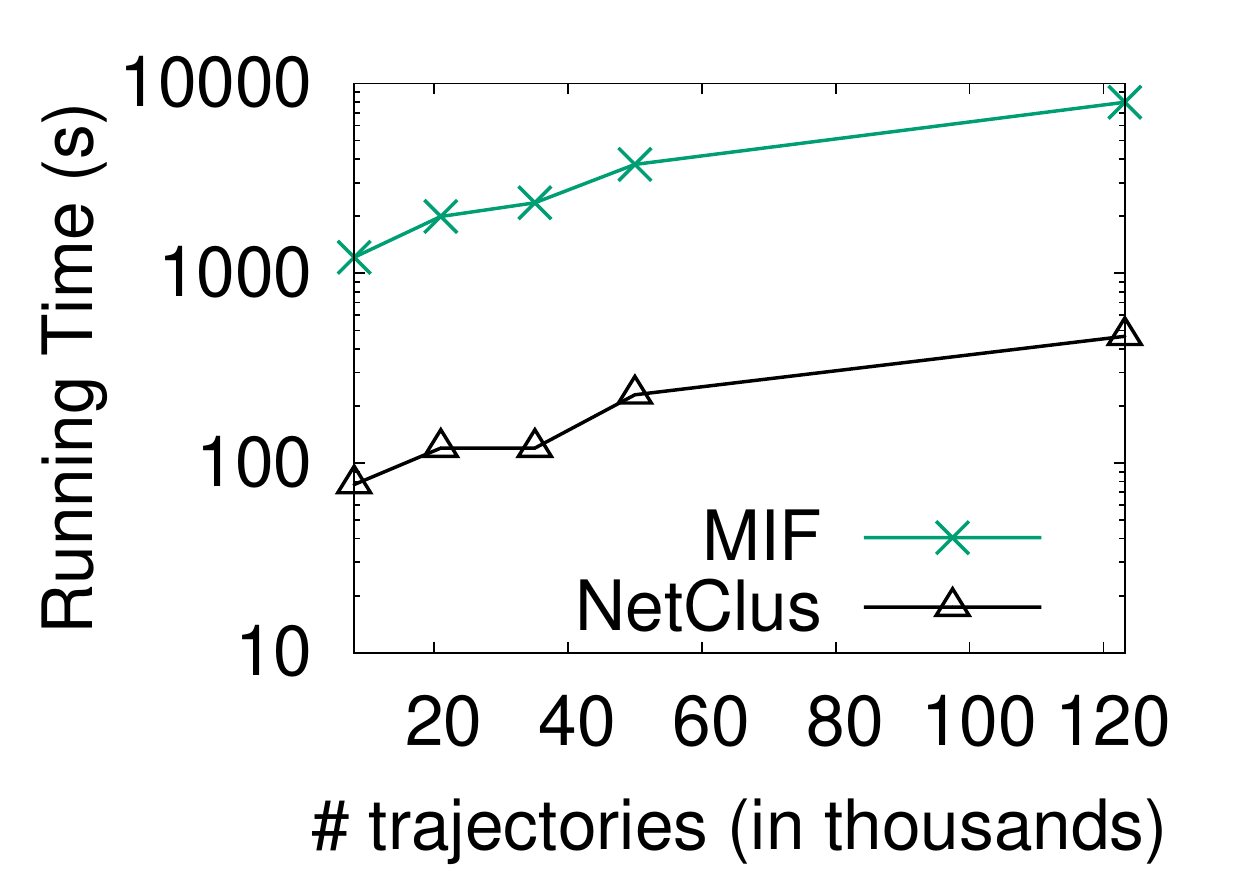}
		\label{subfig:maxtips traj}
	}
	\figcaption{\maxtips: Scalability results ($k=5$ and $\ufs=90\%$).}
	\label{fig:max_tips scale}
\end{figure}

\subsubsection{Scalability}
\label{sec:maxtips scalability}
To ascertain the scalability of \nc and \mif with respect to the number of
candidate sites and trajectories, we next took different sized samples from the
BL dataset.  Fig.~\ref{fig:max_tips scale} shows that \nc scales better with
the number of sites due to its clustered representation.  It is faster than
\mif by at least an order of magnitude for all the situations.

\begin{figure}[tb]
	\centering
	\moveups
	\subfloat[Max. Inconvenience.]
	{
		\includegraphics[width=\subfigwidth]{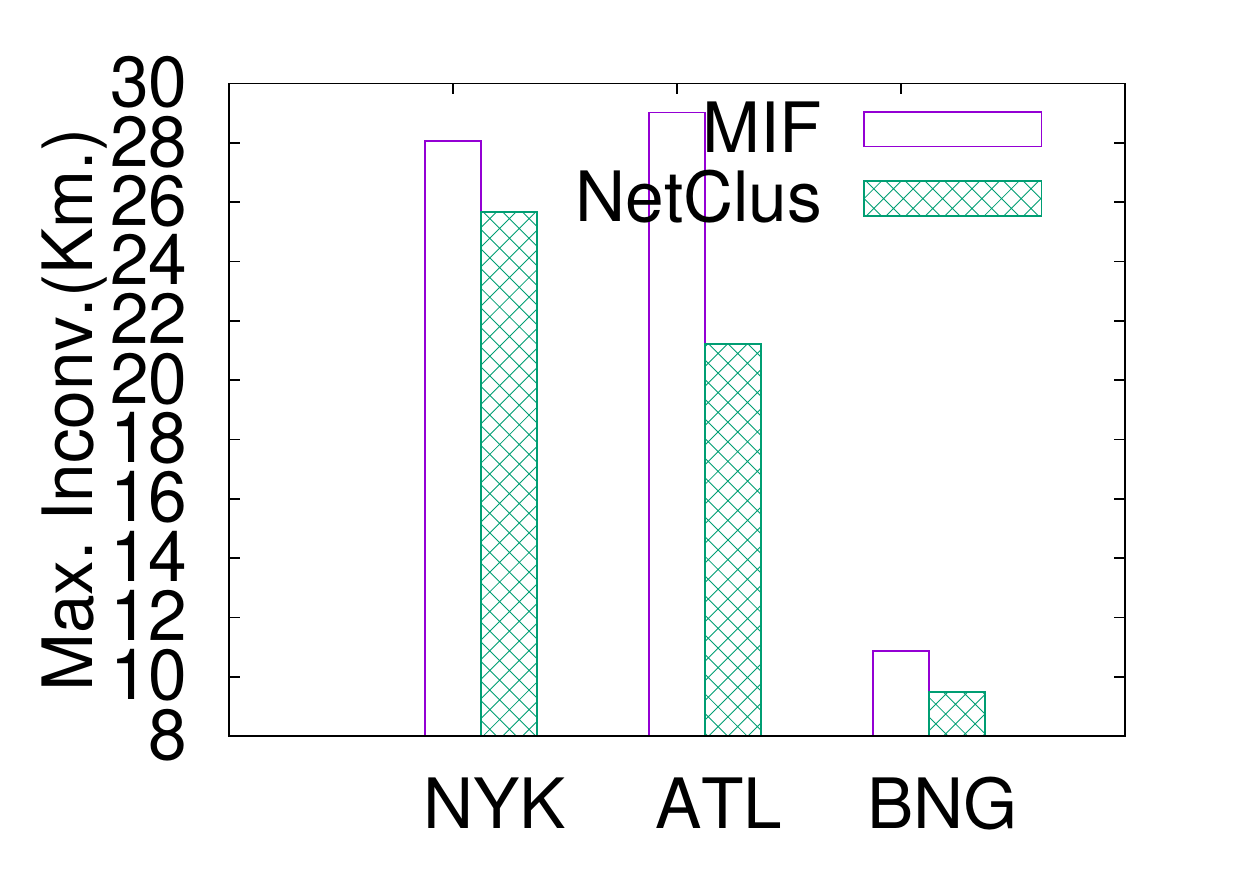}
		\label{subfig:maxtips syndist}
	}
	\subfloat[Running time.]
	{
		\includegraphics[width=\subfigwidth]{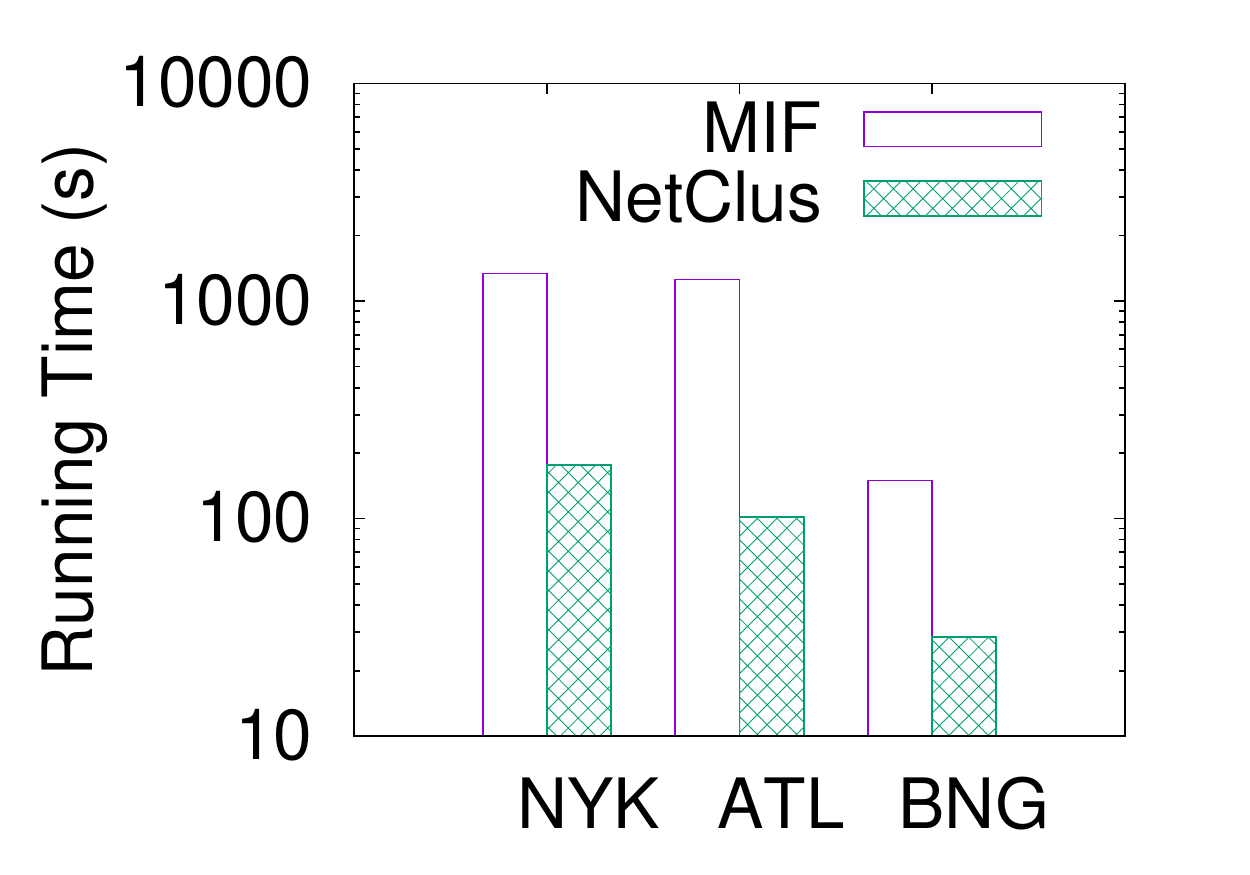}
		\label{subfig:maxtips syntime}
	}
	\figcaption{\maxtips: Synthetic datasets (at $k=5$ and $\ufs=90\%$).}
	\label{fig:max_tips synthetic}
\end{figure}

\subsubsection{Synthetic Datasets}
\label{sec:maxtips synthetic}
Fig.~\ref{fig:max_tips synthetic} shows the evaluation of \maxtips on three
synthetic datasets emulating traffic in Atlanta, New York, and Bengaluru.
Bengaluru has the smallest sized road network.  Therefore, it has the best
\maxi value and running time.  Atlanta and New York have much larger road
network, and the trajectories are distributed all over the network. Thus, they
are harder to be served, and consequently exhibit high \maxi values. Their
running times are high owing to large number of candidate sites to be
processed.

\subsection{\meantips Results}
\label{sec:meantips_exp}

We evaluated the performance of three different algorithms for \meantips, \opt,
\hcc, and \incg on the desired number of service
locations $k$, varied in the range $[1,10]$, with default as $5$. The metrics
evaluated are (i)~average inconvenience, $\ai(\q) = \totali(\q) / |\traj|$, and
(ii)~running time.

\begin{table}[b]\small
\centering
\begin{tabular}{rrr}
\toprule
\multirow{2}{*}{\bf $\nfs$ (in $\%$)} &	\multicolumn{2}{c}{\bf with respect to $\nfs=100\%$}	\\
	\cline{2-3}
& \bf Error & \bf Speed-up factor \\
\midrule
1	&	10.92	&	7.22	\\
2	&	 4.49	&	4.97	\\
5	&	 2.86	&	4.27	\\
10	&	 2.27	&	3.40	\\
25	&	 1.16	&	2.37	\\
50	&	 0.00	&	1.53	\\
\bottomrule
\end{tabular}
\tabcaption{Performance of \hcc with varying \nf.}
\label{tab:nf}
\end{table}

\subsubsection{Choice of \nf in \hcc algorithm}
Referring to Sec.~\ref{sec:hcc}, recall that \hcc scans only a fraction \nfs of
the total number of swaps, referred to as the \nf.  Table~\ref{tab:nf} shows the
performance of \hcc for different values of \nfs (shown as percentage of total
number of swaps). The second column lists the relative error in \ai value w.r.t.
$\nfs=100\%$, and the third column indicates the corresponding speed up in
running time. For our experimentation, we choose $\nfs=5\%$ as this offers a
nice balance with an error of less than 3\% and a speed up of more than 4 times.   

\begin{figure}[t]
	\centering
	\moveup
	\moveup
	\subfloat[Avg. Inconvenience.]
	{
		\includegraphics[width=\subfigwidth]{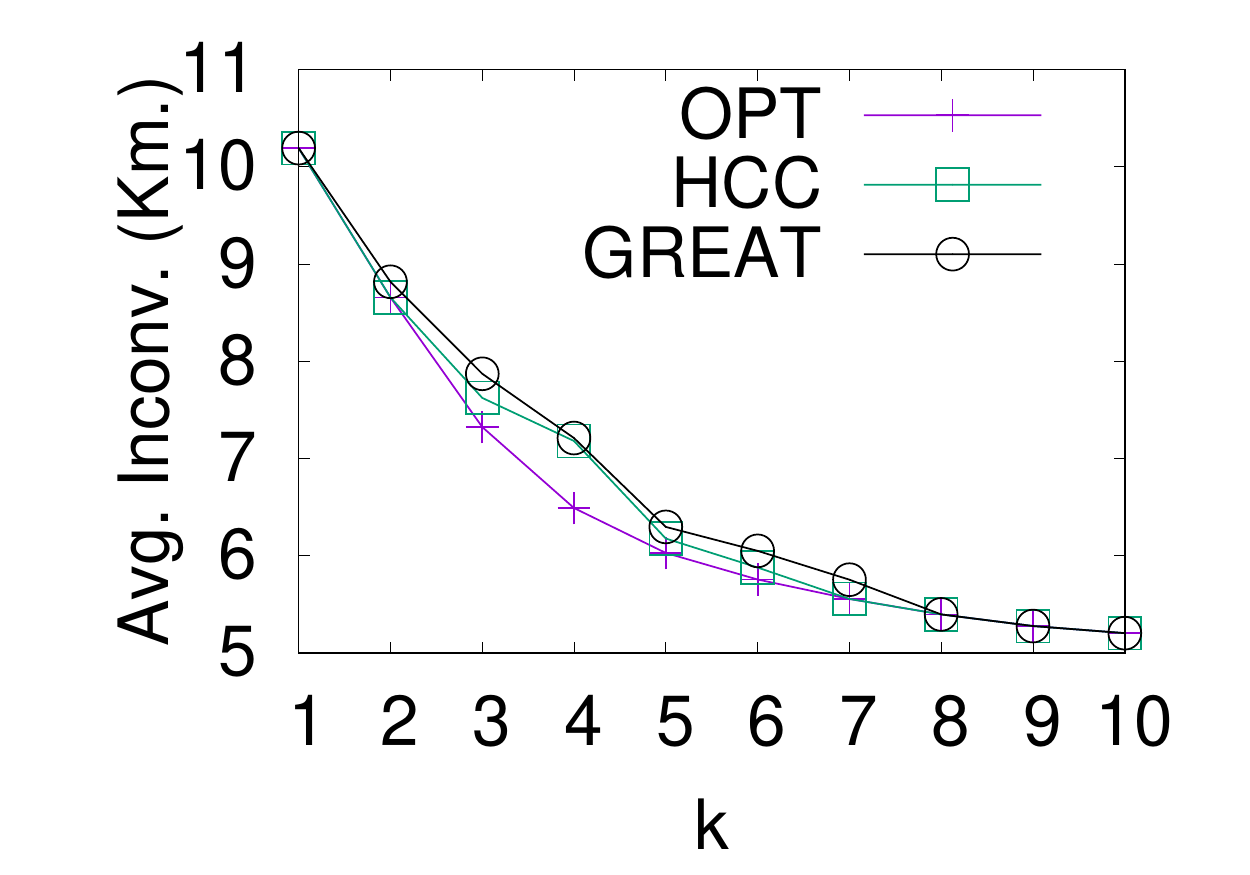}
		\label{subfig:meantips optdist}
	}
	\subfloat[Running time.]
	{
		\includegraphics[width=\subfigwidth]{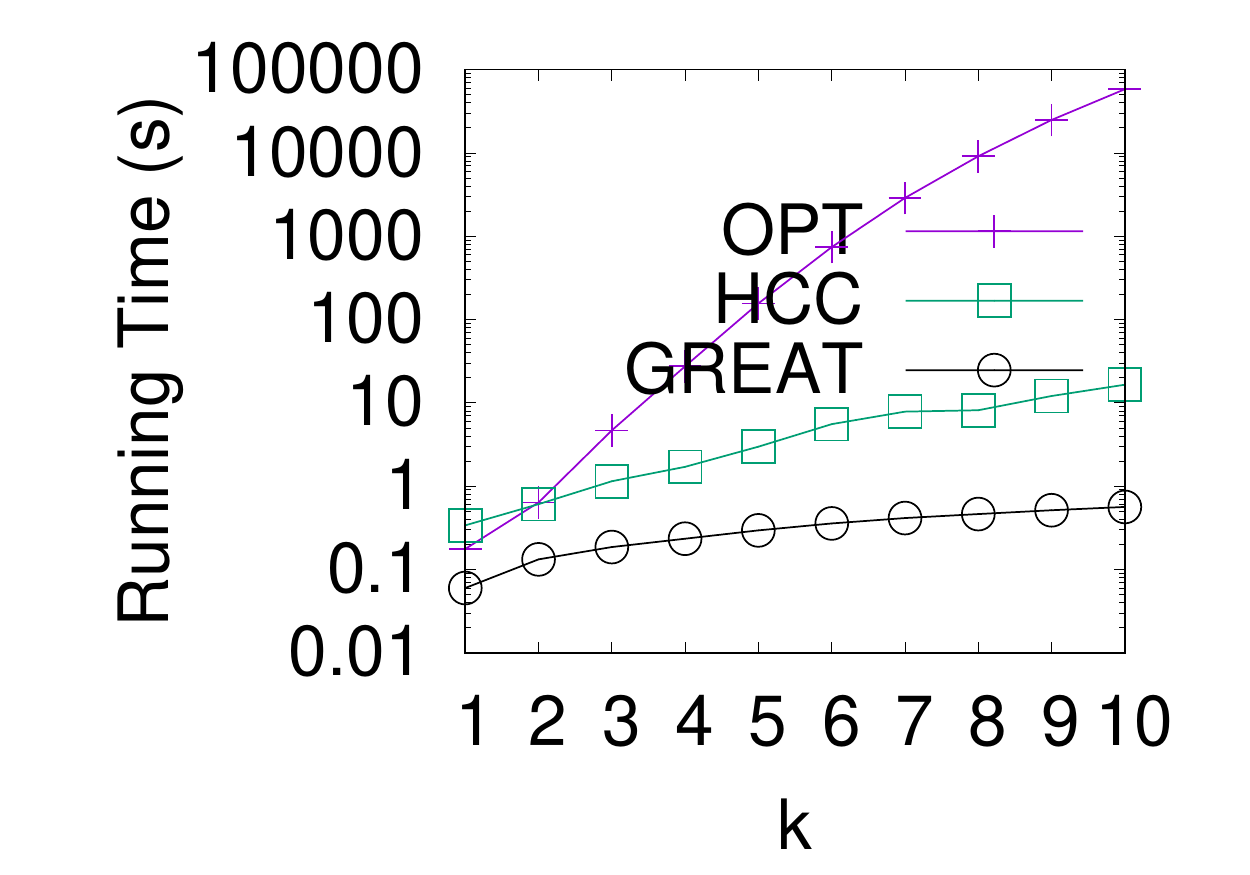}
		\label{subfig:meantips opttime}
	}
	\figcaption{\meantips: Comparison with optimal.}
	\label{fig:mean_tips optimal}
\end{figure}

\subsubsection{Comparison with Optimal}
Since \meantips is NP-Hard, the ILP-based optimal algorithm given in
Sec.~\ref{sec:meantips}, can be run only on small datasets. Therefore, as in the
case of \maxtips, we evaluate it on the \bs dataset.  Fig.~\ref{fig:mean_tips
optimal} shows that both \hcc and \inc offer almost the same quality as \opt.
However, \opt takes hours of query time even on such a small dataset.
Consequently, we drop \opt from further consideration.

\begin{figure}[tb]
	\centering
	\moveup
	\moveup
	\subfloat[Varying $k$ on BM, BMS]
	{
		\includegraphics[width=\subfigwidth]{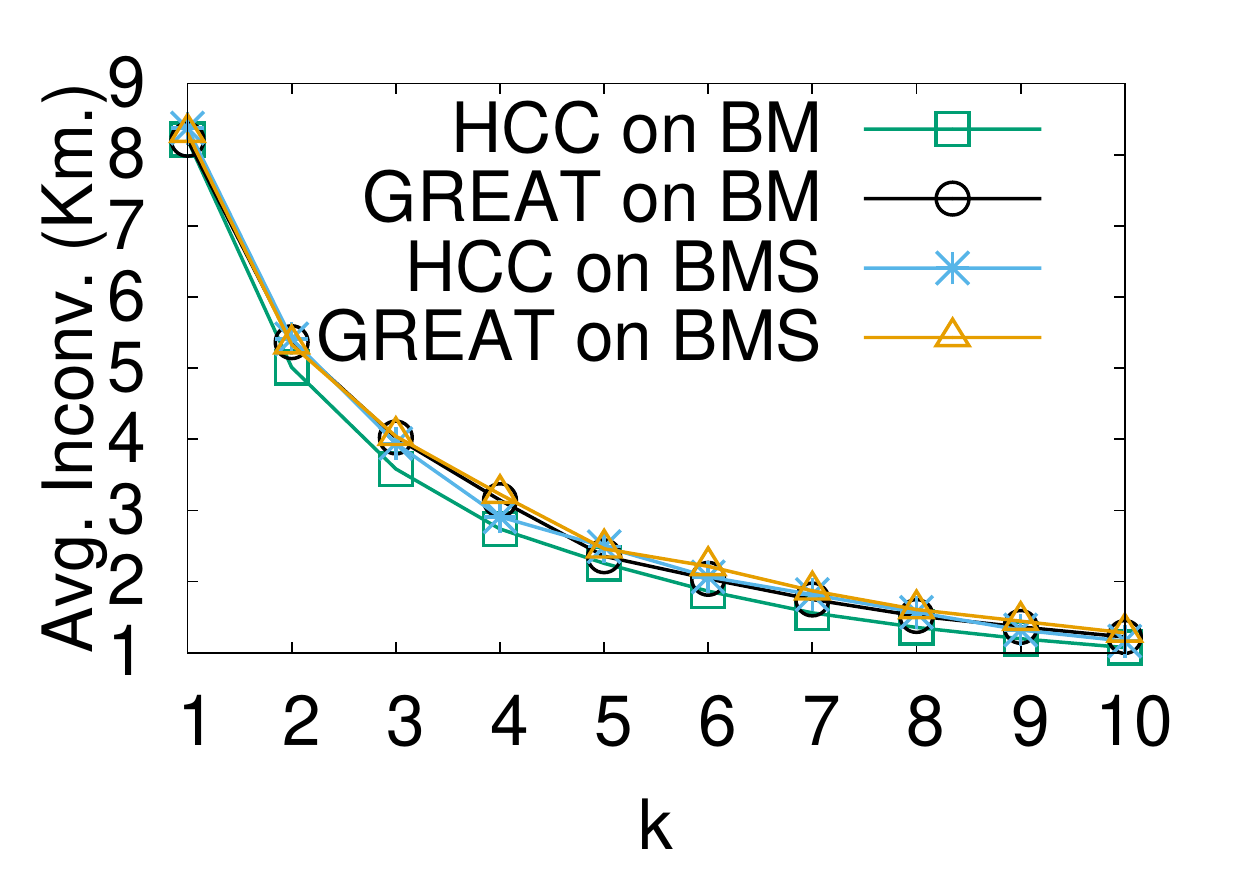}
		\label{subfig:meantips BM k_dist}
	}	
	\subfloat[Varying $k$ on BLS]
	{
		\includegraphics[width=\subfigwidth]{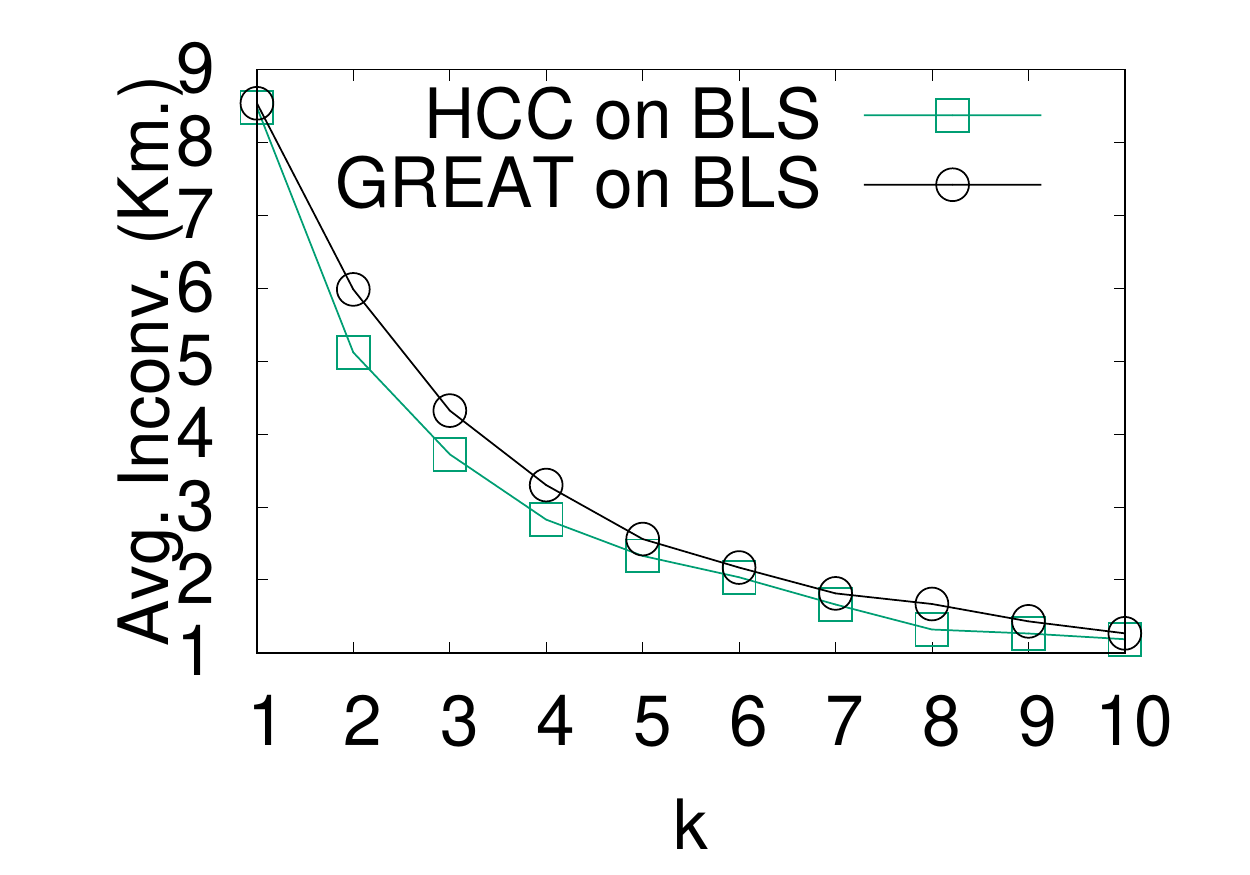}
		\label{subfig:meantips BL k_dist}
	}
	\figcaption{\meantips: Quality results.}
	\label{fig:mean_tips quality}
\end{figure}

\subsubsection{Quality Results}
As explained earlier, \hcc and \inc cannot be executed on the BL dataset due to
high memory overhead.  Hence, to ascertain the effect of site sampling and
trajectory sampling on the quality, we use \bm (BM), and its sampled
counterpart, the \bms (BMS) datasets.  Fig.~\ref{subfig:meantips BM k_dist}
shows that \ai values achieved by both \hcc and \inc on the sampled dataset BMS
is almost as good as the full BM dataset. The average relative error in \ai
values due to sampling is only about 10\% for \hcc and 4\% for \inc.
Fig.~\ref{subfig:meantips BL k_dist} reports the quality on the BLS dataset.
\hcc offers roughly 10\% better quality than \inc on an average.  Since the
error due to sampling  is fairly low, it is expected that the \ai values on the
BL dataset  are likely to be similar to those in the BLS dataset.

\begin{figure}[tb]
	\centering
	\moveup
	\moveup
	\subfloat[Varying $k$ on BM, BMS]
	{
		\includegraphics[width=\subfigwidth]{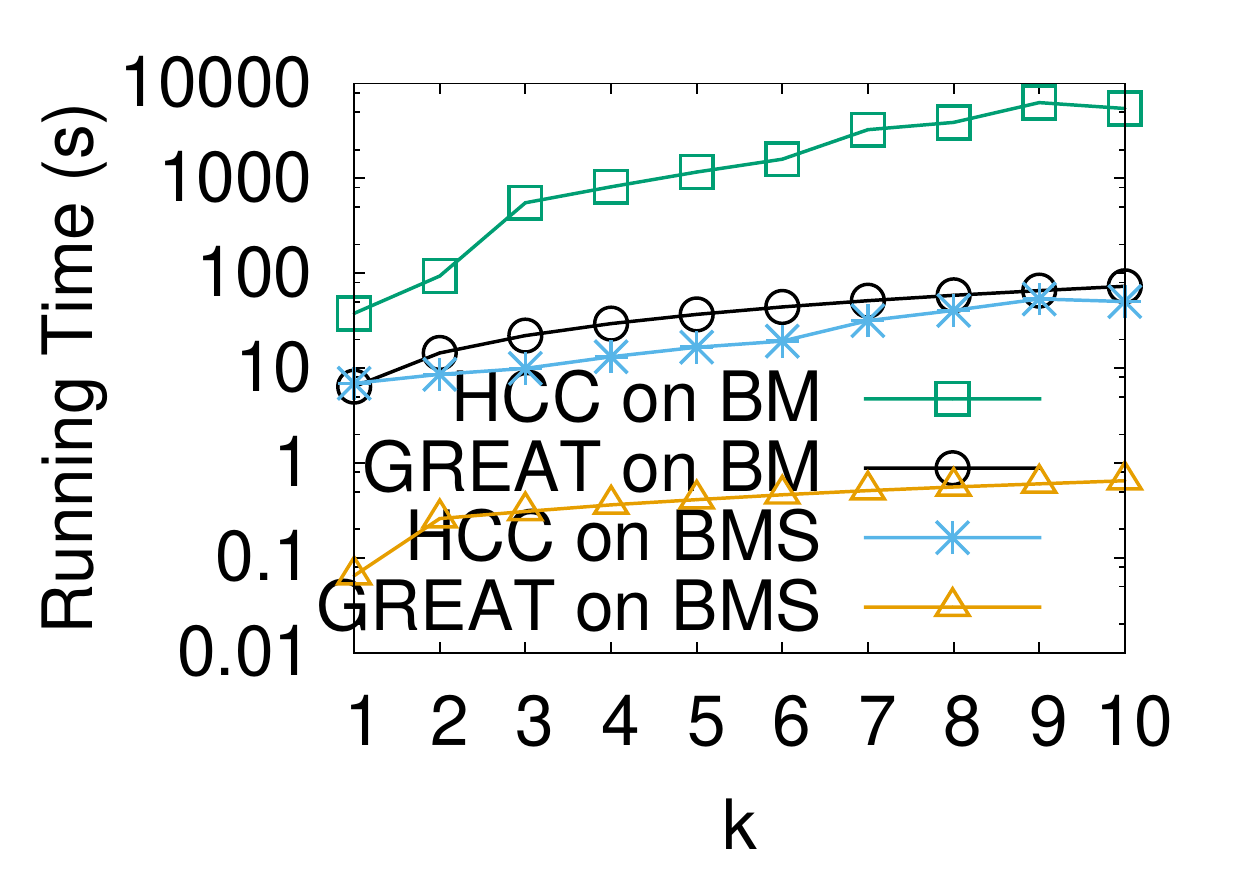}
		\label{subfig:meantips BM ktime}
	}
	\subfloat[Varying $k$ on BLS]
	{
		\includegraphics[width=\subfigwidth]{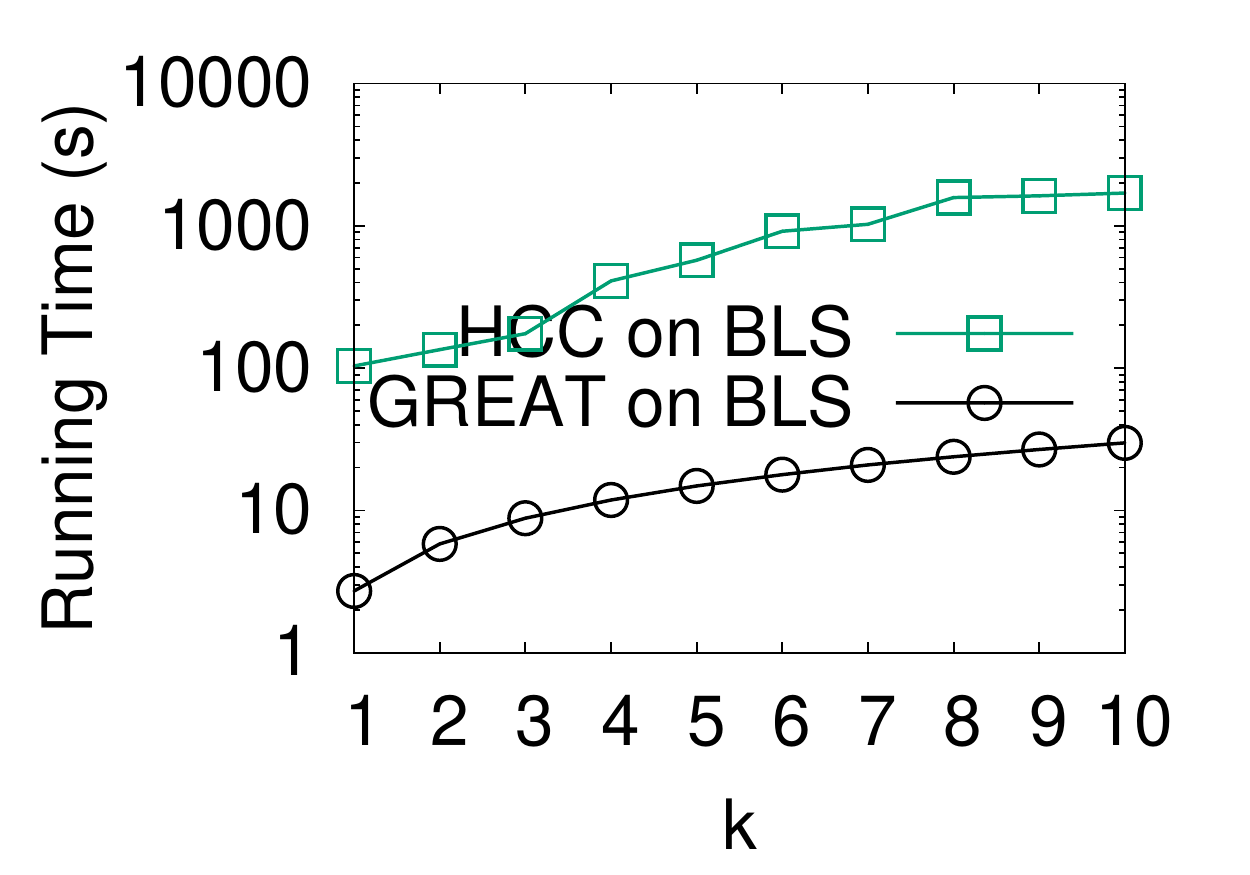}
		\label{subfig:meantips BL k_time}
	}
	\figcaption{\meantips: Running time performance.}
	\label{fig:mean_tips time}
\end{figure}

\subsubsection{Performance Results}
Fig.~\ref{subfig:meantips BM ktime} shows that sampling offers a speed up of
about 2 orders of magnitude on the running times for both \hcc and \inc on BM.
Fig.~\ref{subfig:meantips BL k_time} shows that \inc is about 2 orders of
magnitude faster than \hcc when evaluated on BLS.  \hcc is slower due to its
repeated swaps.  In addition, it is run thrice to avoid a poor random initial
choice.

\subsubsection{Memory Footprint}
The memory consumption of \hcc and \inc on BM at $k=5$ are roughly 18GB and
11.5GB respectively; and, that on BMS are about 3.8 GB and 0.5GB respectively.
The sampling techniques are, thus, highly effective in lowering the memory
footprints of both the algorithms.

\begin{figure}[tb]
	\centering
	\moveups
	\subfloat[Number of sites.]
	{
		\includegraphics[width=\subfigwidth]{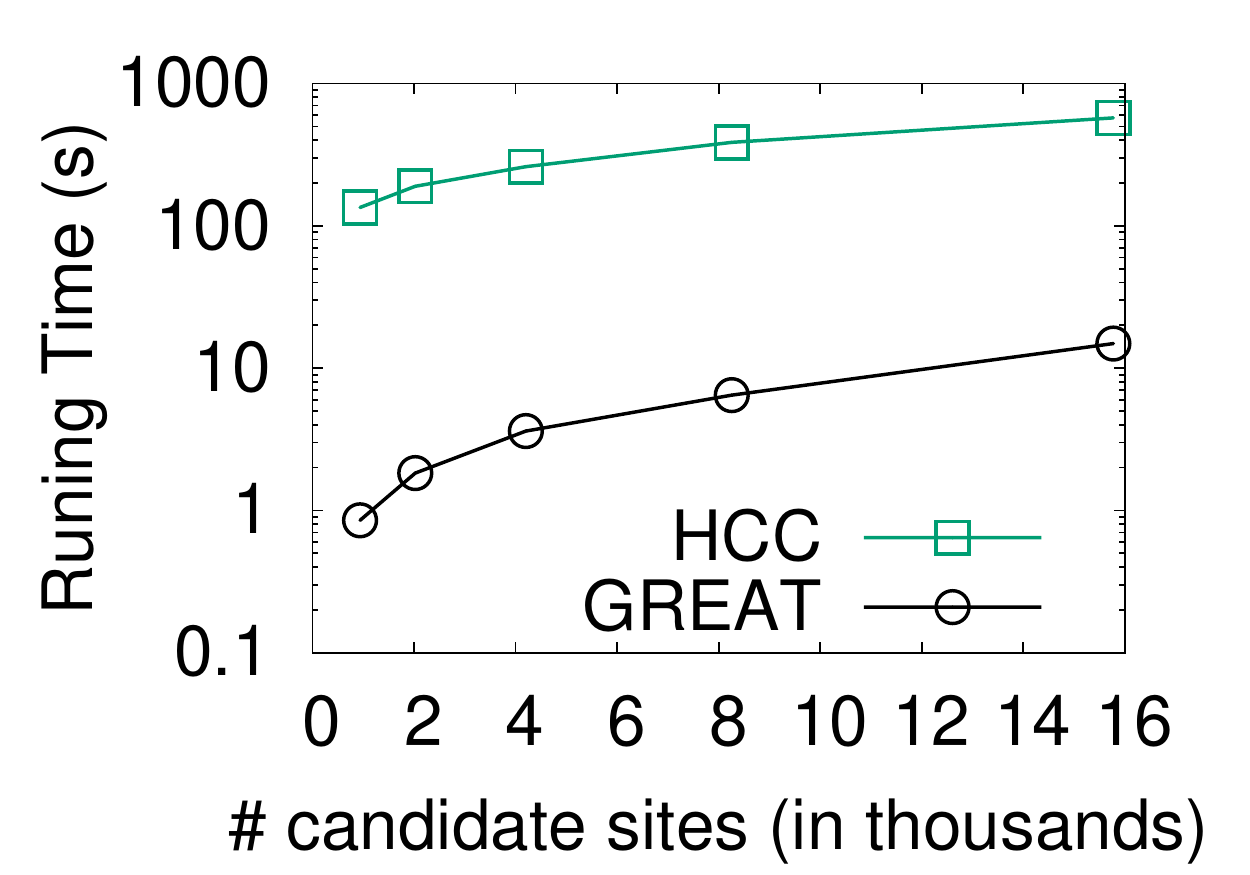}
		\label{subfig:meantips site}
	}
	\subfloat[Number of trajectories.]
	{
		\includegraphics[width=\subfigwidth]{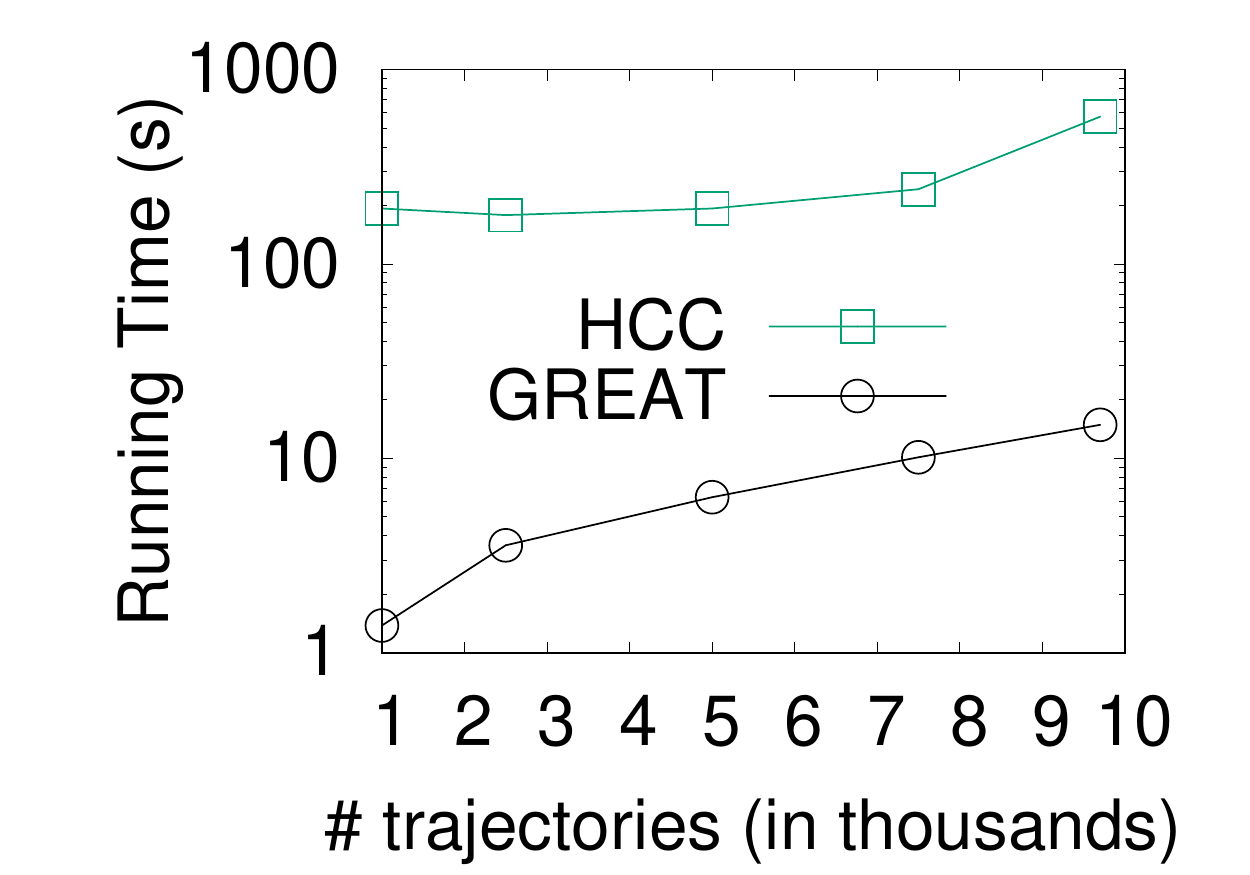}
		\label{subfig:meantips traj}
	}
	\figcaption{\meantips: Scalability results (at $k=5$).}
	\label{fig:nean_tips scale}
\end{figure}

\subsubsection{Scalability}
We next examine the scalability of \hcc and \inc with respect to the number of
candidate sites and trajectories.  We use the same setting as in
Sec.~\ref{sec:maxtips scalability} for \maxtips, although the BM dataset is used
instead of BL.  Fig.~\ref{fig:nean_tips scale} shows that both are scalable with
\inc being faster by about 2 orders of magnitude.

\begin{figure}[bt]
	\centering
	\moveups
	\subfloat[Avg. Inconvenience.]
	{
		\includegraphics[width=\subfigwidth]{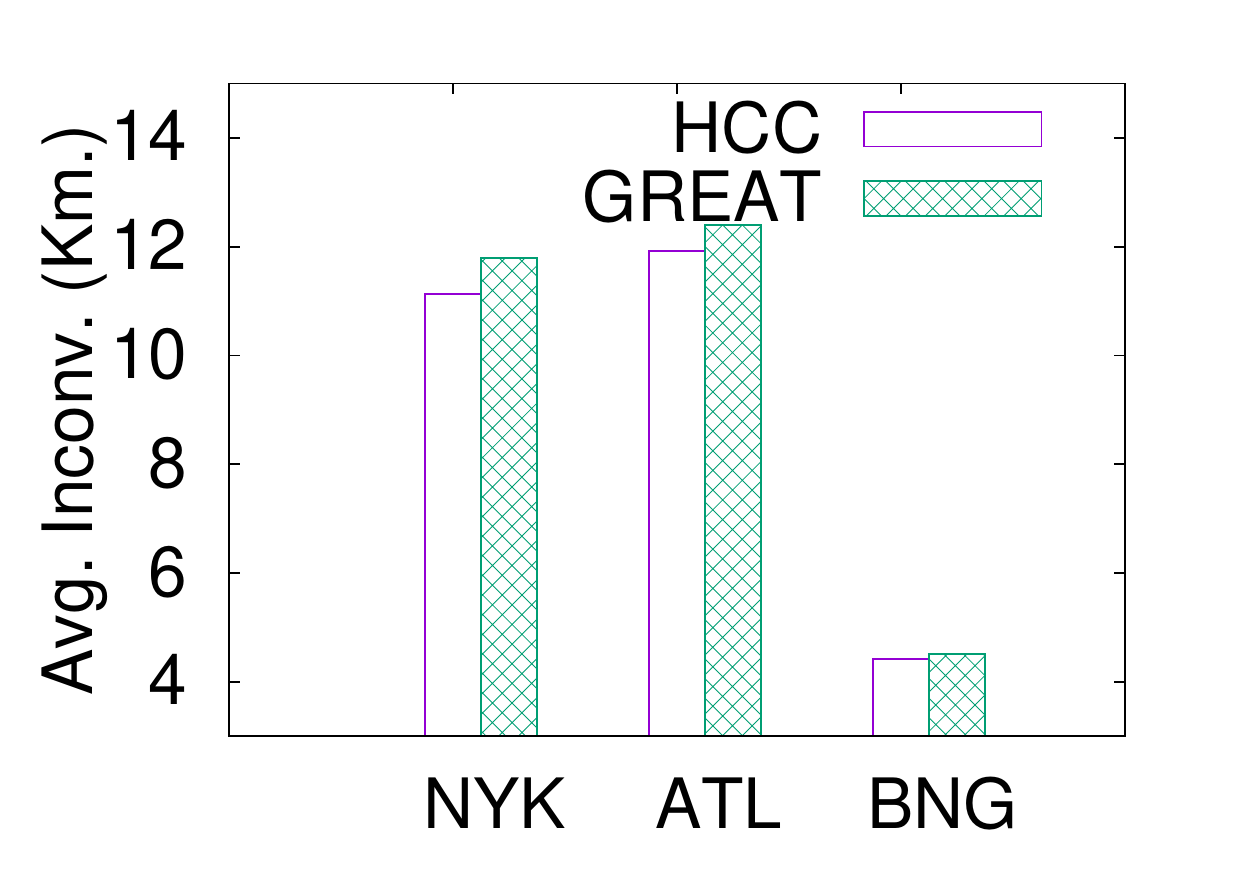}
		\label{subfig:meantips syndist}
	}
	\subfloat[Running time.]
	{
		\includegraphics[width=\subfigwidth]{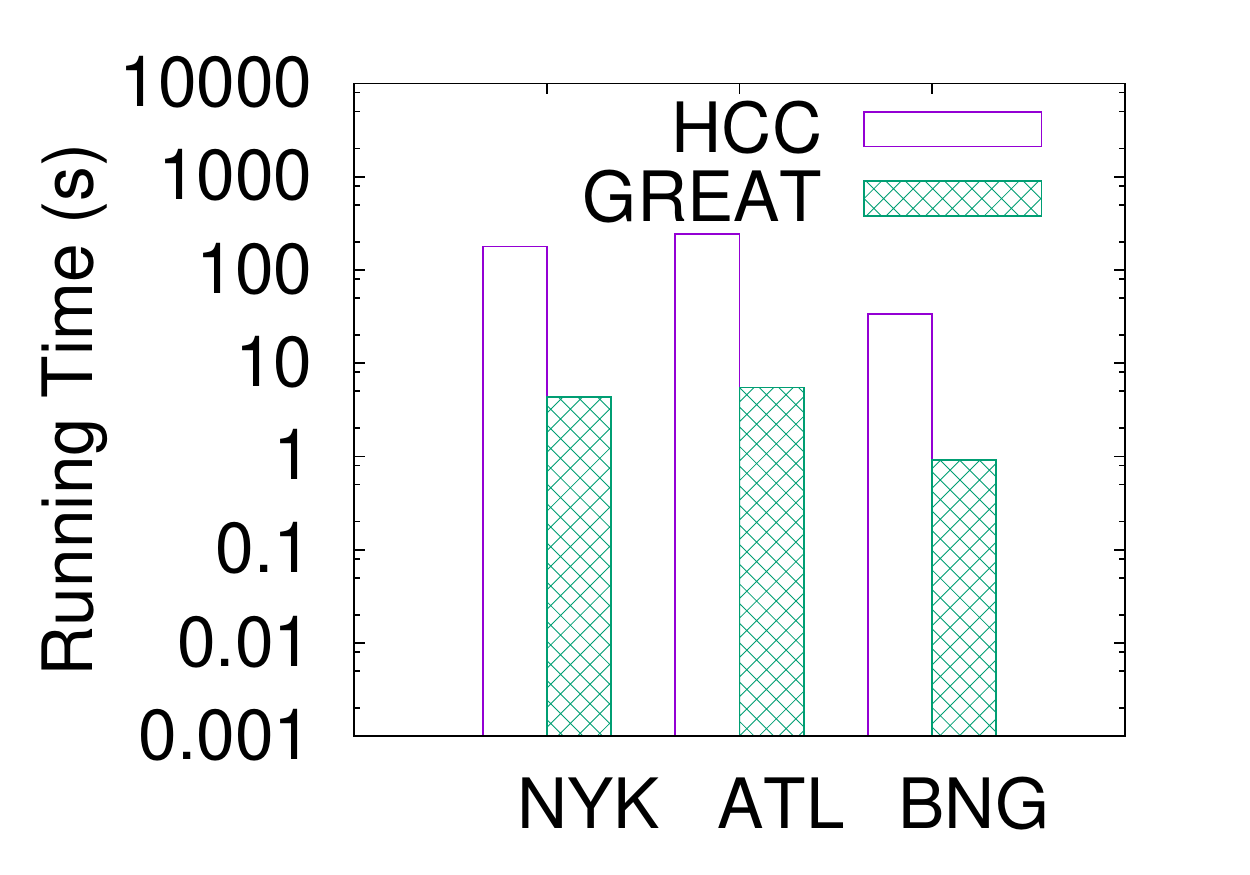}
		\label{subfig:meantips syntime}
	}
	\figcaption{\meantips: Synthetic datasets (at $k=5$).}
	\label{fig:mean_tips synthetic}
\end{figure}

\subsubsection{Synthetic Datasets}
The next result shows the effect on synthetic datasets.  Using the same setup as
discussed in Sec.~\ref{sec:maxtips synthetic}, Fig.~\ref{fig:mean_tips
synthetic} shows that \inc is about 1-2 orders of magnitude faster than \hcc,
while sacrificing no more than 10\% in accuracy over all the three synthetic
datasets. The \ai values for New York and Atlanta are significantly higher than
those of Bengaluru due to their large road network, and even distribution of
trajectories. The running times follow the same trend as in the case of
\maxtips. 

\subsection{Comparison with Baseline} \label{sec:baseline comparison}

\begin{figure}[t]
	\centering
	\moveups
	\subfloat[\maxtips.]
	{
		\includegraphics[width=\subfigwidth]{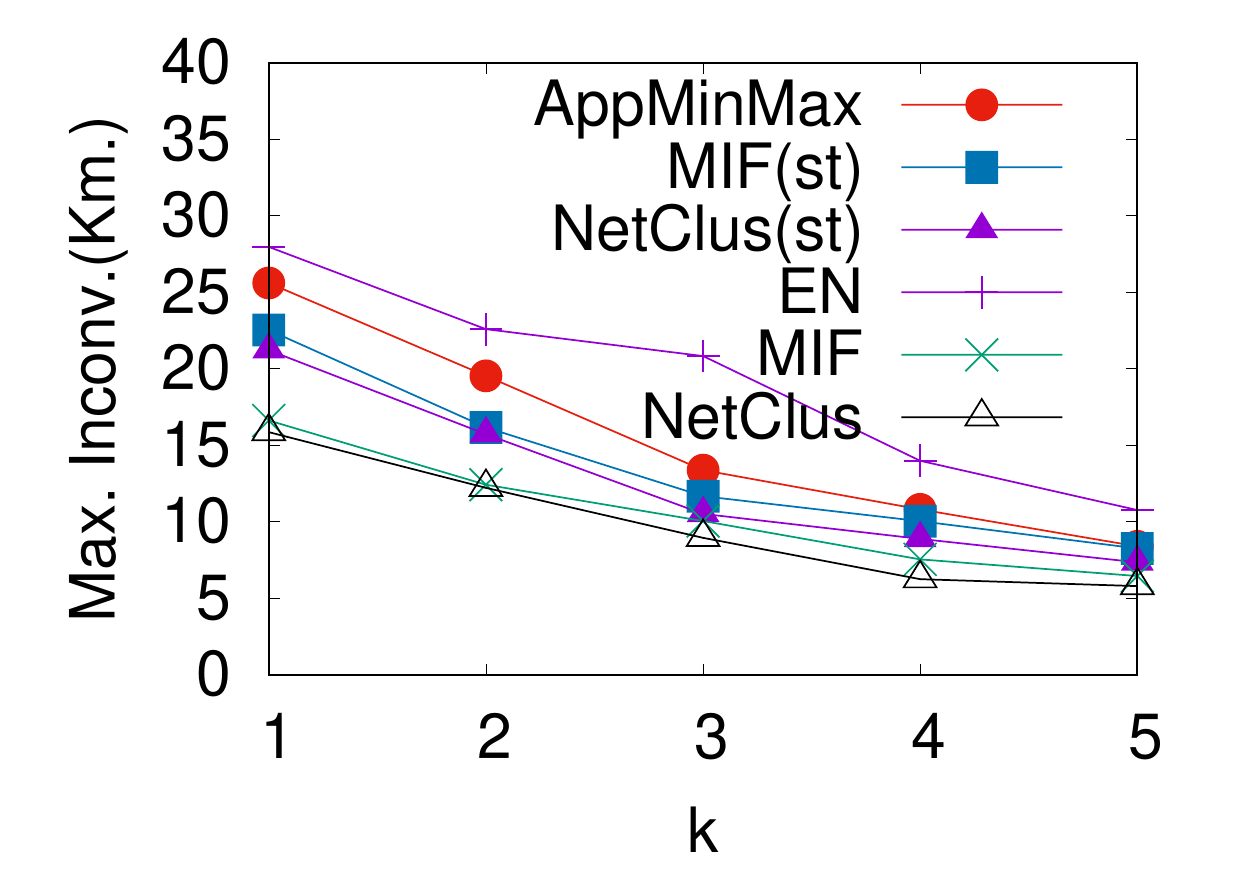}
		\label{subfig:maxtips baseline}
	}
	\subfloat[\meantips.]
	{
		\includegraphics[width=\subfigwidth]{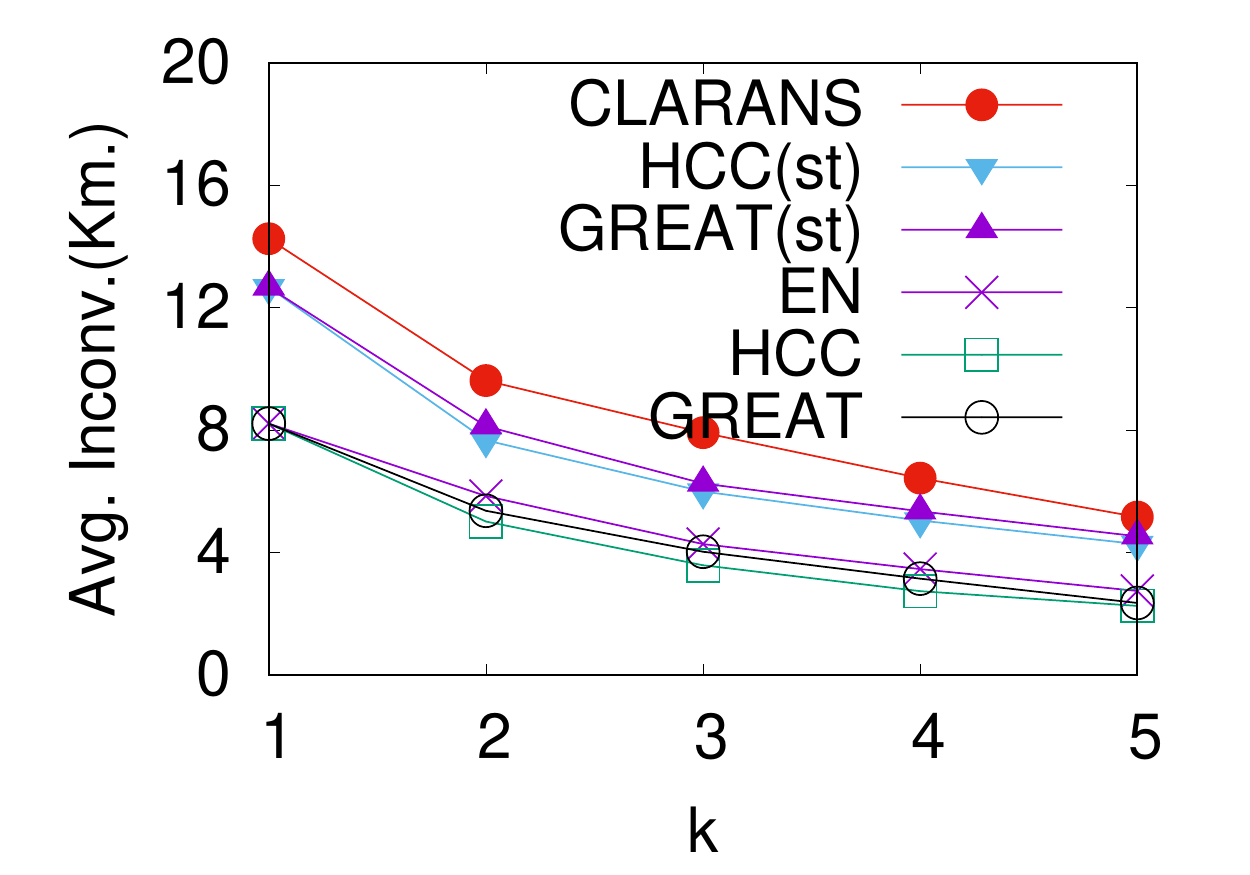}
		\label{subfig:meantips baseline}
	}
	\figcaption{Comparison with Baseline.}
	\label{fig:tips baseline}
\end{figure}

In this section, we compare the performance of our algorithms with the baseline
techniques for static and mobile users. For \maxtips, the baseline techniques
used for static and mobile users are AppMinMax \cite{liu2016finding} and EN
(Extended Network method) \cite{MILERUN} respectively; and for \meantips, the
baseline techniques for static and mobile users are CLARANS
\cite{ng2002clarans} and EN \cite{MILERUN} respectively. Further, we also show
the impact of considering trajectories as opposed to one or more static user
locations such as homes and offices.  Fig.~\ref{fig:tips baseline} shows the
quality comparisons for \maxtips and \meantips.  For \maxtips, we use the BL
dataset, while for \meantips, the BM dataset was used.

To model static users, we use either of the two end-points of a trajectory as
the representative static location of its user. We run AppMinMax over these
static user locations. This algorithm reports $k$ facility locations that
minimize the maximum distance of any user to its nearest facility. To study the
effect of factoring in two static locations of each user, we consider only the
two end-points of each trajectory and ignore the intermediate points. The
corresponding \mif and \nc versions are referred to as MIF(st) and NetClus(st)
respectively.

For mobile users, there is no existing work that aims to minimize the maximum
user-inconvenience. The nearest baseline technique is the EN method that
reports a single facility location that minimizes the average distance of any
user to its nearest facility. To generate $k$ facility locations, we repeat the
algorithm $k$ times, assuming $k-1$ existing facilities. This algorithm along
with \mif and \nc are evaluated over \emph{full} trajectories, i.e., without
skipping any intermediate point in any of the trajectories.  

Fig.~\ref{subfig:maxtips baseline} shows that considering full trajectories is
always much better than considering a single or two static locations per user.
While \nc performs the best, EN performs the worst. This is because EN does not
address the objective of \maxtips, but minimizes the average inconvenience. 

Next, we evaluate \meantips over the BM dataset. Here, the CLARANS algorithm
(discussed in Sec.~\ref{sec:hcc}) serves as the baseline technique for static
users' scenario. To assess the effect of considering only two static locations
per user, we run versions of \hcc and \incg as HCC(st) and GREAT(st)
respectively. The EN method works as the baseline techniques for mobile users. 

As in the case of \maxtips, Fig.~\ref{subfig:meantips baseline} once again
validates our claim that using trajectories is always more beneficial than
considering one or two static locations per user. While CLARANS running over
single static location per user performs the worst, \hcc, running on
\emph{full} trajectories performs the best. The performance of EN is also
good, but marginally lesser than that of \hcc. 

\subsection{Existing Facilities}

\begin{figure}[t]
	\centering
	\moveups
	\subfloat[\maxtips.]
	{
		\includegraphics[width=\subfigwidth]{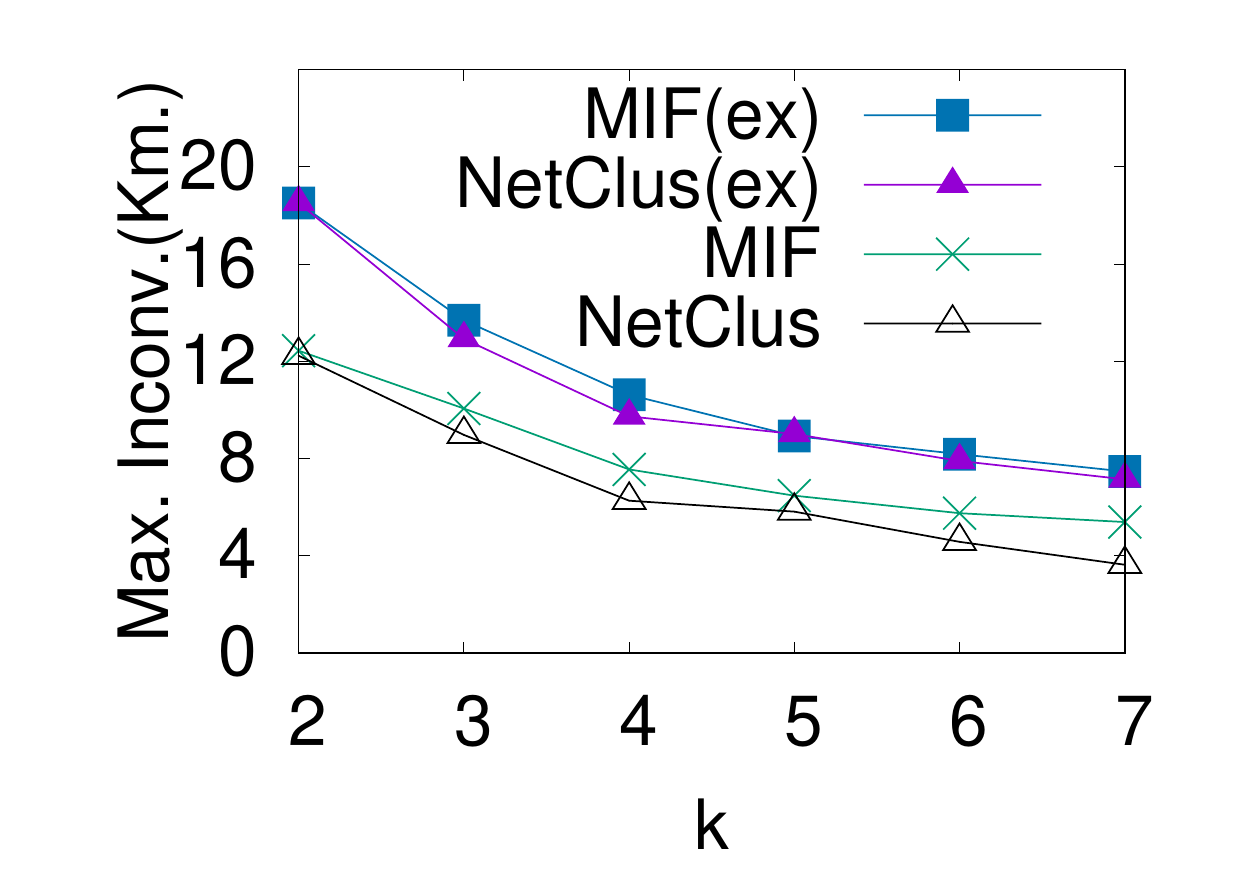}
		\label{subfig:maxtips existing}
	}
	\subfloat[\meantips.]
	{
		\includegraphics[width=\subfigwidth]{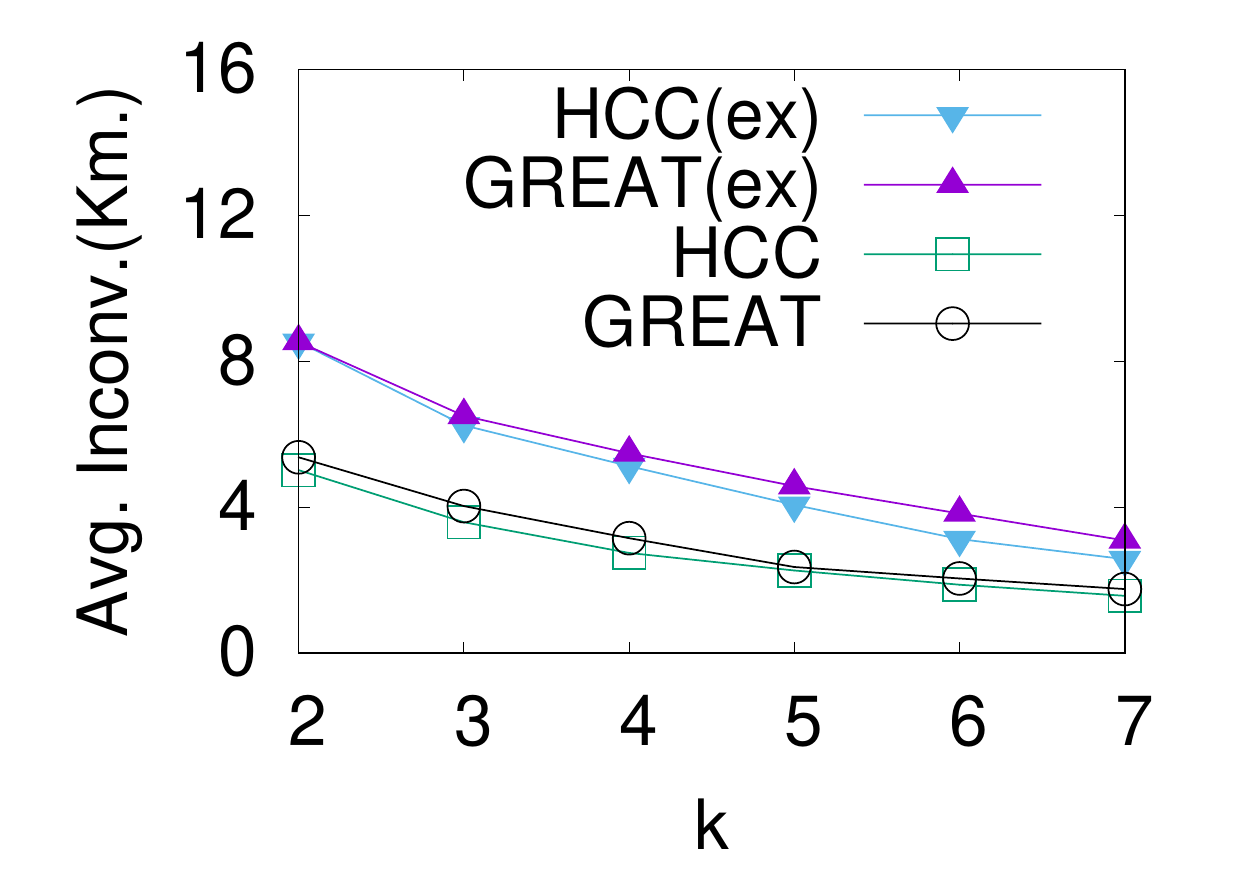}
		\label{subfig:meantips existing}
	}
	\figcaption{Effect of Existing Facilities.}
	\label{fig:tips existing}
\end{figure}

Fig.~\ref{fig:tips existing} shows the effect of existing facilities on \maxtips
and \meantips. We choose $2$ existing facilities randomly. The corresponding
maximum and average inconveniences are $18.51$ Km and $8.53$ Km respectively.
Using the proposed algorithms, we locate $k'$ new facilities for
$k'=1,\cdots,5$. The $k$ values ($2 + k'$) shown in the figure denotes the total
number of facilities including the existing ones. For \maxtips, MIF(ex) and
NetClus(ex) represent the \maxi values as recorded by \mif and \nc respectively,
while factoring in the \emph{existing facilities}. Similarly, for \meantips,
HCC(ex) and GREAT(ex) represent the \ai values as recorded by \hcc and \incg
respectively, while considering the \emph{existing facilities}. The \maxi
(respectively, \ai) values shown by \mif and \nc (respectively, \hcc and \incg)
correspond to the scenario when these algorithms are used to locate new
facilities in the \emph{absence} of any existing facilities.  Since the existing
facilities are chosen randomly, the initial \maxi and \ai values (as mentioned
above) are quite high. As a result, the \maxi values of NetClus(ex) and MIF(ex)
continue to be much higher than those of \nc and \mif respectively, for all
values of $k$. Following the same argument, the \ai values of HCC(ex) and
GREAT(ex) are much higher than those of \hcc and \incg respectively.

\subsection{Summary of Experiments}

We summarize our experimental findings as follows.  \nc offers the best
performance for \maxtips on multiple real and synthetic datasets, both in terms
of efficiency and quality.  It outperforms \mif by more than 20\% in quality and
is faster by an order of magnitude. For \meantips, firstly, we observe that it
is better to apply the sampling techniques when the dataset is very large, such
as \bl. The error in quality due to sampling is reasonably low for both \hcc and
\inc. While \inc is about 2 orders of magnitude faster than \hcc, its quality is
about 10\% lower than \hcc.  Thus, if the goal is to achieve low average
inconvenience regardless of high but practical running times, one may choose
\hcc. On the other hand, if a fast query time is desired with reasonably high
accuracy in quality, \inc may be chosen.

\section{Conclusions}
\label{sec:conc}

In this paper, we introduced two facility location problems over
user-trajectories, namely, \maxtips and \meantips, that aim to minimize the
maximum and the average user-inconvenience, respectively. We showed that both
these problems are NP-hard and proposed one optimal algorithm and two efficient
heuristics for each of them.  The heuristics can work both in the presence or
absence of existing facilities.  Empirical evaluation over large-scale real and
synthetic datasets show that the proposed solutions are effective in terms of
quality and efficient in terms of space and running time.

In future, we will explore other trajectory-based facility location problems.

\pagebreak

\bibliographystyle{IEEEtran}
\balance
\bibliography{../papers}

% Generated by IEEEtran.bst, version: 1.14 (2015/08/26)
\begin{thebibliography}{10}
\providecommand{\url}[1]{#1}
\csname url@samestyle\endcsname
\providecommand{\newblock}{\relax}
\providecommand{\bibinfo}[2]{#2}
\providecommand{\BIBentrySTDinterwordspacing}{\spaceskip=0pt\relax}
\providecommand{\BIBentryALTinterwordstretchfactor}{4}
\providecommand{\BIBentryALTinterwordspacing}{\spaceskip=\fontdimen2\font plus
\BIBentryALTinterwordstretchfactor\fontdimen3\font minus
  \fontdimen4\font\relax}
\providecommand{\BIBforeignlanguage}[2]{{%
\expandafter\ifx\csname l@#1\endcsname\relax
\typeout{** WARNING: IEEEtran.bst: No hyphenation pattern has been}%
\typeout{** loaded for the language `#1'. Using the pattern for}%
\typeout{** the default language instead.}%
\else
\language=\csname l@#1\endcsname
\fi
#2}}
\providecommand{\BIBdecl}{\relax}
\BIBdecl

\bibitem{drezner1995facility}
Z.~Drezner, \emph{Facility location: A survey of applications and
  methods}.\hskip 1em plus 0.5em minus 0.4em\relax Springer, 1995.

\bibitem{FacilityLocation}
H.~W. Hamacher and Z.~Drezner, \emph{Facility location: applications and
  theory}.\hskip 1em plus 0.5em minus 0.4em\relax Springer, 2002.

\bibitem{du2005optimal}
Y.~Du, D.~Zhang, and T.~Xia, ``The optimal-location query,'' in \emph{STD},
  2005, pp. 163--180.

\bibitem{zhang2006progressive}
D.~Zhang, Y.~Du, T.~Xia, and Y.~Tao, ``Progressive computation of the min-dist
  optimal-location query,'' in \emph{PVLDB}, 2006, pp. 643--654.

\bibitem{ghaemi2010optimal}
P.~Ghaemi, K.~Shahabi, J.~P. Wilson, and F.~Banaei-Kashani, ``Optimal network
  location queries,'' in \emph{SIGSPATIAL}, 2010, pp. 478--481.

\bibitem{xiao2011optimal}
X.~Xiao, B.~Yao, and F.~Li, ``Optimal location queries in road network
  databases,'' in \emph{ICDE}, 2011, pp. 804--815.

\bibitem{chen2014efficient}
Z.~Chen, Y.~Liu, R.~C.-W. Wong, J.~Xiong, G.~Mai, and C.~Long, ``Efficient
  algorithms for optimal location queries in road networks,'' in \emph{SIGMOD},
  2014, pp. 123--134.

\bibitem{qi2012min}
J.~Qi, R.~Zhang, L.~Kulik, D.~Lin, and Y.~Xue, ``The min-dist location
  selection query,'' in \emph{ICDE}, 2012, pp. 366--377.

\bibitem{qi2014min}
J.~Qi, R.~Zhang, Y.~Wang, A.~Y. Xue, G.~Yu, and L.~Kulik, ``The min-dist
  location selection and facility replacement queries,'' \emph{World Wide Web},
  vol.~17, no.~6, pp. 1261--1293, 2014.

\bibitem{liu2016finding}
R.~Liu, A.~W.-C. Fu, Z.~Chen, S.~Huang, and Y.~Liu, ``Finding multiple new
  optimal locations in a road network,'' in \emph{Advances in Geographic
  Information Systems}, 2016, p.~36.

\bibitem{tops_paper}
S.~Mitra, P.~Saraf, R.~Sharma, A.~Bhattacharya, S.~Ranu, and H.~Bhandari,
  ``{NetClus}: A scalable framework for locating top-k sites for placement of
  trajectory-aware services,'' \emph{ICDE}, 2017.

\bibitem{MILERUN}
J.~Cui, M.~Wang, H.~Li, and Y.~Cai, ``Place your next branch with mile-run:
  Min-dist location selection over user movement,'' \emph{Information
  Sciences}, vol. 463-464, pp. 1--20, 2018.

\bibitem{PINOCCHIO}
M.~{Wang}, H.~{Li}, J.~{Cui}, K.~{Deng}, S.~S. {Bhowmick}, and Z.~{Dong},
  ``Pinocchio: Probabilistic influence-based location selection over moving
  objects,'' \emph{TKDE}, vol.~28, no.~11, pp. 3068--3082, Nov 2016.

\bibitem{TODS:OLQRN}
Z.~Chen, Y.~Liu, R.~C.-W. Wong, J.~Xiong, G.~Mai, and C.~Long, ``Optimal
  location queries in road networks,'' \emph{ACM Trans. Database Syst.},
  vol.~40, no.~3, pp. 17:1--17:41, Oct. 2015.

\bibitem{Qi:2012:BBM:2483739.2483747}
J.~Qi, Z.~Xu, Y.~Xue, and Z.~Wen, ``A branch and bound method for min-dist
  location selection queries,'' in \emph{ADC '12}, 2012, pp. 51--60.

\bibitem{VLDB:OLQRN}
B.~Yao, X.~Xiao, F.~Li, and Y.~Wu, ``Dynamic monitoring of optimal locations in
  road network databases,'' \emph{VLDB Journal}, vol.~23, no.~5, pp. 697--720,
  Oct 2014.

\bibitem{Khan:2018:GOAL}
A.~K. M. M.~R. Khan, L.~Kulik, E.~Tanin, H.~Hua, and T.~Hashem, ``Efficient
  computation of the optimal accessible location for a group of mobile
  agents,'' \emph{ACM Trans. Spatial Algorithms Syst.}, vol.~4, no.~4, pp.
  10:1--10:32, Sep 2018.

\bibitem{TaxiExp}
C.~{Chen}, D.~{Zhang}, L.~{Wang}, X.~{Ma}, X.~{Han}, and E.~{Sha}, ``Taxi exp:
  A novel framework for city-wide package express shipping via taxi crowd
  sourcing,'' in \emph{UIC-ATC-ScalCom}, 2014, pp. 244--251.

\bibitem{Wong:2009:VLDB:MBRNN}
R.~C.-W. Wong, M.~T. \"{O}zsu, P.~S. Yu, A.~W.-C. Fu, and L.~Liu, ``Efficient
  method for maximizing bichromatic reverse nearest neighbor,'' \emph{Proc.
  VLDB Endow.}, vol.~2, no.~1, pp. 1126--1137, Aug 2009.

\bibitem{OMPQuery2015}
D.~YanEmail, Z.~Zhao, and W.~Ng, ``Efficient processing of optimal meeting
  point queries in euclidean space and road networks,'' \emph{Knowledge and
  Information Systems}, vol.~42, no.~2, pp. 1319--351, Feb. 2015.

\bibitem{CollectiveKOptimal}
F.~Chen, H.~Lin, J.~Qi, P.~Li, and Y.~Gao, ``Collective-k optimal location
  selection,'' 07 2017, pp. 339--356.

\bibitem{li2013trajectory}
X.~Li, V.~{\v{C}}eikute, C.~S. Jensen, and K.-L. Tan, ``Trajectory based
  optimal segment computation in road network databases,'' in \emph{Advances in
  Geographic Information Systems}, 2013, pp. 396--399.

\bibitem{hodgson1981location}
M.~J. Hodgson, ``The location of public facilities intermediate to the journey
  to work,'' \emph{European J.\ of Operational Research}, vol.~6, no.~2, pp.
  199--204, 1981.

\bibitem{berman1992optimal}
O.~Berman, R.~C. Larson, and N.~Fouska, ``Optimal location of discretionary
  service facilities,'' \emph{Transportation Science}, vol.~26, no.~3, pp.
  201--211, 1992.

\bibitem{berman1995locatingDiscretionary}
O.~Berman, D.~Krass, and C.~W. Xu, ``Locating discretionary service facilities
  based on probabilistic customer flows,'' \emph{Transportation Science},
  vol.~29, no.~3, pp. 276--290, 1995.

\bibitem{berman1995locatingMain}
O.~Berman, D.~Bertsimas, and R.~C. Larson, ``Locating discretionary service
  facilities, ii: maximizing market size, minimizing inconvenience,''
  \emph{Operations Research}, vol.~43, no.~4, pp. 623--632, 1995.

\bibitem{berman1995locating}
O.~Berman, D.~Krass, and C.~W. Xu, ``Locating flow-intercepting facilities: New
  approaches and results,'' \emph{Annals of Operations Research}, vol.~60,
  no.~1, pp. 121--143, 1995.

\bibitem{berman2002generalized}
O.~Berman and D.~Krass, ``The generalized maximal covering location problem,''
  \emph{Computers \& Operations Research}, vol.~29, no.~6, pp. 563--581, 2002.

\bibitem{berman1998flow}
------, ``Flow intercepting spatial interaction model: a new approach to
  optimal location of competitive facilities,'' \emph{Location Science},
  vol.~6, no.~1, pp. 41--65, 1998.

\bibitem{FIFLPSurvey}
M.~Boccia, A.~Sforza, and C.~Sterle, ``Flow intercepting facility location:
  Problems, models and heuristics,'' \emph{J.\ Mathematical Modelling and
  Algorithms}, vol.~8, no.~1, pp. 35--79, 2009.

\bibitem{gonzalez1985clustering}
T.~F. Gonzalez, ``Clustering to minimize the maximum intercluster distance,''
  \emph{Theoretical Computer Science}, vol.~38, pp. 293--306, 1985.

\bibitem{rousseeuw1990finding}
P.~J. Rousseeuw and L.~Kaufman, \emph{Finding Groups in Data}.\hskip 1em plus
  0.5em minus 0.4em\relax Wiley, 1990.

\bibitem{guha1999kmedian}
M.~Charikar, S.~Guha, E.~Tardos, and D.~B. Shmoys, ``A constant-factor
  approximation algorithm for the k-median problem,'' in \emph{STOC}, 1999, pp.
  1--10.

\bibitem{jain1999}
K.~Jain and V.~V. Vazirani, ``Approximation algorithms for metric facility
  location and k-median problems using the primal-dual schema and lagrangian
  relaxation,'' \emph{JACM}, vol.~48, no.~2, pp. 274--296, 1999.

\bibitem{moses1999}
M.~Charikar and S.~Guha, ``Improved combinatorial algorithms for the facility
  location and k-median problems and lagrangian relaxation,'' \emph{FOCS},
  vol.~48, no.~2, pp. 378--388, 1999.

\bibitem{korupolu2000analysis}
M.~R. Korupolu, C.~G. Plaxton, and R.~Rajaraman, ``Analysis of a local search
  heuristic for facility location problems,'' \emph{J.\ Algorithms}, vol.~37,
  no.~1, pp. 146--188, 2000.

\bibitem{arya2004local}
V.~Arya, N.~Garg, R.~Khandekar, A.~Meyerson, K.~Munagala, and V.~Pandit,
  ``Local search heuristics for k-median and facility location problems,''
  \emph{SIAM J.\ Computing}, vol.~33, no.~3, pp. 544--562, 2004.

\bibitem{ng2002clarans}
R.~T. Ng and J.~Han, ``Clarans: A method for clustering objects for spatial
  data mining,'' \emph{TKDE}, vol.~14, no.~5, pp. 1003--1016, 2002.

\bibitem{map1}
Y.~Lou, C.~Zhang, Y.~Zheng, X.~Xie, W.~Wang, and Y.~Huang, ``Map-matching for
  low-sampling-rate {GPS} trajectories,'' in \emph{GIS}, 2009, pp. 352--361.

\bibitem{harshbhandari2016clustering}
H.~Bhandari, ``Clustering of sites in a road network based on coverage of
  trajectories,'' Master's thesis, Indian Institute of Technology, Kanpur,
  2016.

\bibitem{nemhauser1978analysis}
G.~L. Nemhauser, L.~A. Wolsey, and M.~L. Fisher, ``An analysis of
  approximations for maximizing submodular set functions,'' \emph{Mathematical
  Programming}, vol.~14, no.~1, pp. 265--294, 1978.

\bibitem{cab2}
J.~Yuan, Y.~Zheng, X.~Xie, and G.~Sun, ``Driving with knowledge from the
  physical world,'' in \emph{KDD}, 2011, pp. 316--324.

\bibitem{cab1}
J.~Yuan, Y.~Zheng, C.~Zhang, W.~Xie, X.~Xie, G.~Sun, and Y.~Huang, ``T-drive:
  driving directions based on taxi trajectories,'' in \emph{SIGSPATIAL GIS},
  2010.

\bibitem{algorithmsbook}
T.~H. Cormen, C.~E. Leiserson, R.~L. Rivest, and C.~Stein, \emph{Introduction
  to Algorithms}.\hskip 1em plus 0.5em minus 0.4em\relax MIT Press, 2009.

\end{thebibliography}

\pagebreak

\appendices

\section{Proofs of Theorems}

\subsection{Proof of Theorem~\ref{thm:mif approx bound} (Quality of \mif)}
\label{sec:mif approx bound}
 
\begin{proof}
	Let $\q=\{s_1,\dots,s_k\}$ be the set of $k$ sites returned by \mif, with a
	maximum inconvenience of $d$.  Let $\reptraj=\{T_1,\dots,T_k\}$ be the $k$
	representative trajectories chosen by \mif.  Assuming $\s=V$, it follows
	that each node in $T_i$ is a candidate site.  Since $s_i$ is the nearest
	candidate site to $T_i$, it must be that $s_i \in T_i$.

	Now, let us consider the case $k=1$.  Let $d$ be the maximum inconvenience
	due to the selection of the site $s_1 \in T_1$.  Now suppose $s_1^*$ is the
	site reported by an optimal algorithm with optimal maximum inconvenience
	value $d^*$. Hence, $\forall T_q \in \traj, \ \dr(T_q,s_1^*) \le d^*$.
	Therefore, there exists a site $s_1^\prime \in T_1$, such that
	$\dr(T_q,s_1^\prime) \le 2d^*$ for any trajectory $T_q \in \traj$, because
	there exists a path  from $s_1^\prime$ via $s_1^*$ to  trajectory $T_q $ of
	distance at most $2d^*$. Thus, the maximum inconvenience $d$ due to the site
	$s_1$ is  $d \le 2d^*$. 

	Now, consider the case $k \ge 2$.  Consider an optimal solution
	$\q^*=\{s_1^*,\dots,s_k^*\}$ with maximum inconvenience $d^*$. Let
	$\clus(s_i^*)$ denote the set of trajectories served by $s_i^*$.  Using the
	pigeon-hole principle, we conclude that at least two representative
	trajectories $T_p, T_q \in \reptraj, p<q$, must belong to one of the $k$
	clusters, say $\clus(s_i^*)$. Since the maximum inconvenience of the
	solution $\q$ is $d$, therefore, $\dr(T_q,s_p) \ge d$. Moreover, since
	$T_p,T_q \in \clus(s_i^*)$, we get $\dr(T_p,s_i^*)\le d^*, \ \dr(T_q,s_i^*)
	\le d^*$.  From this, we infer that there must be a site $s_p^\prime \in
	T_p$ such that $\dr(s_p^\prime,s_i^*) \le d^*$ and, therefore,
	$\dr(T_q,s_p^\prime) \le 2d^*$. Since the length of any trajectory is at
	most $L$, $\dr(s_p,s_p^\prime) \le L$. From this, we conclude that $d \le
	\dr(T_q,s_p) \le \dr(T_q,s_p^\prime)+\dr(s_p^\prime,s_p) \le 2d^*+L$. 
	\hfill{}
\end{proof}

\subsection{Proof of Theorem~\ref{thm:mif complexity} (Complexity of \mif)}
\label{sec:mif complexity proof}

\begin{proof}
	We assume that the number of road segments (edges) is $O(n)$ as the road
	networks are roughly planar. Thus, from any node $v_i\in V$, the distances
	to all other nodes in the network can be computed in $O(n\log n)$ time,
	using Dijkstra's shortest path algorithm \cite{algorithmsbook}.  
	
	We analyze the computation cost of any iteration as follows.  Recall that
	\nn map stores the trajectories in a sorted order based on their distance to
	the nearest facility in \q. Since $|\nn|=m$, hence to identify the
	trajectory $T_i$ at rank \trajthres, requires $O(1)$ time, using array
	implementation of \nn. 
	
	Then, for each node $v \in T_i$ (that $T_i$ passes through), the distances
	are computed to all other nodes in $V$. If the maximum number of nodes in
	any trajectory is $l$, i.e., $|T_i| \le l$, then the above distance
	computation step takes $O(l.n\log n)$ time. Thus, identifying the nearest
	candidate site to $T_i$, say $s_i$, requires $O(l.n\log n)$ time.  

	Following this, $s_i$ is added to \q. Then the distances are computed
	between $s_i$ and all other trajectories in \traj. To do this, we first
	compute distances between $s_i$ and all nodes in $V$, which requires
	$O(n\log n)$ time. Assume these distances are indexed on site-ids.  The
	distance between $s_i$ and any trajectory can be computed in $O(l^2)$ time,
	as there are at most $O(l^2)$ distance look-ups. Therefore, the distance
	between $s_i$ and all trajectories in \traj can be computed in $O(ml^2)$
	time. If the distance  of any trajectory $T_j \in \traj$ to its nearest
	facility in \q is more than that with $s_i$, then this value is updated in
	the \nn map.  This updation step takes $O(m)$ time over all the
	trajectories. Sorting the \nn map takes $O(m \log m)$ time. Summing up all
	these costs, over each of the total $k$ iterations, the total time
	complexity is $O(k.l.n\log n + k.m.l^2 + k.m\log m)$.

	Next, we analyze the space complexity.  The road network and candidate sites
	can be stored in $O(n)$ space. Storing the trajectories in \traj require
	$O(l.m)$ space.	Maintaining the \nn map requires $O(m)$ space.  During each
	iteration of the algorithm, the node-to-node distances are computed for each
	node of the previously chosen representative trajectory.  Storing these
	distance values require at most $O(ln)$ space. As these distance values are
	no longer used in subsequent iterations, they are discarded at the end of
	each iteration. Thus, the maintenance overhead of the node-to-node distances
	over the $k$ iterations is $O(l.n)$. Storing the sets \reptraj and \q
	require $O(k)=O(n)$ space. Hence, the total space complexity is $O(l(n+m))$.
	\hfill{}
\end{proof}

\subsection{Proof of Theorem~\ref{thm:netclus complexity} (Complexity of \nc)}
\label{sec:netclus complexity proof}

\begin{proof}
	Assuming the time required to answer a \tops query by \nc to be
	$O(t_{\tops})$, since $O(\log_2(\tau_{max}/\tau_{min}))$ \tops queries are
	executed, the total time is $O(\log_2(\tau_{max}/\tau_{min}). t_{\tops})$.
	The time complexity, $O(t_{\tops})$, of \nc is analyzed in
	\cite{tops_paper}, and is beyond the scope of this paper.

	Now, let us analyze the space complexity of \nc. As discussed above, if the
	index resolution parameter is $\eps>0$, then the total number of index
	instances is $t=1+\lfloor\log_{1+\eps}(\tau_{max}/\tau_{min})\rfloor$. For
	each index instance, the nodes in $V$ are divided into clusters which
	requires $O(n)$ space. In addition, for each cluster, we store the set of
	trajectories that pass through it. This cost is $O(m.l)$ where $m=|\traj|$
	and $l$ is the largest number of nodes in any trajectory. For each node in
	$V$, we store its distance to the cluster-center of the cluster it belongs
	to. This storage cost is $O(n)$ across all the nodes. Further, for each
	trajectory, we store its distance to the cluster-center of the clusters it
	passes through. Since a trajectory cannot pass through more than $l$
	clusters, this storage cost over all the trajectories is no more than
	$O(m.l)$. Therefore, the total space complexity is $O(t.(n+m.l))$.
	\hfill{}
\end{proof}

\subsection{Proof of Theorem~\ref{thm:hcc complexity} (Complexity of \hcc)}
\label{sec:hcc complexity}

\begin{proof}
	Let $n^\prime,m^\prime$ denote the number of sites and trajectories, that
	are produced after sampling of the sites and the trajectories, respectively.
	Thus, $n^\prime \le n$ and $m^\prime \le m$. We assume that $k \le
	n^\prime$.
	
	The algorithm is executed for $t$ trials, and in each trial, the maximum
	possible number of iterations is $\eta$. In a given iteration, the number of
	swaps that are scanned is $\nfs . k(n^\prime-k)$. To evaluate each swap, we
	need to compute the total inconvenience of all the $m^\prime$ trajectories
	with respect to the current set of $k$ medoids. This requires
	$O(k.m^\prime)$ time (assuming that the distance between each pair of site
	and trajectory is pre-computed). Hence, the total running time is
	$O(t.\eta.\nfs.k^2(n^\prime-k).m^\prime)$ time.

	Next, let us analyze the space complexity. Storing the sampled sites and
	trajectories require $O(n^\prime)$ and $O(m^\prime l)$ space where $l$ is
	the maximum number of nodes in any trajectory. Storing the pairwise distance
	between each pair of site and trajectory in the sampled space requires
	$O(m^\prime.n^\prime)$ space. To store the \nn maps, we need $O(m^\prime)$
	space. In order to store the current set of medoids, we need
	$O(k)=O(n^\prime)$ space. Summing up all the costs, we find that the total
	space complexity is $O(m^\prime.(n^\prime +l))$.
	\hfill{}
\end{proof}

\subsection{Proof of Theorem~\ref{thm:bound of incg for meantips} (Quality of \incg)}
\label{sec:great approx bound}

\begin{proof}
	It is easy to see that \incg returns the optimal solution for $k=1$ as it
	performs an exhaustive search over all the candidate sites in \s.
	
	Now, suppose $k \ge 2$.  For any given set of candidate sites $\q \subseteq
	\s$, consider a function $f(\q)=\totali(\fac)-\totali(\q)$.  Since
	$\totali(\q)$ is a non-increasing super-modular function
	(Th.~\ref{thm:supermodular}), it follows that $f(\q)$ is a non-decreasing
	sub-modular function. Further, if the existing set of facilities $\fac \neq
	\varnothing$, then the initial total inconvenience for $\q =\varnothing$,
	can be written as $\totali(\varnothing)=\totali(\fac)$. Thus,
	$f(\varnothing)=\totali(\fac)-\totali(\varnothing)=0$. It is known that the
	greedy heuristic offers an approximation bound of $1-1/e$ for any
	non-decreasing sub-modular function $f$ with $f(\varnothing)=0$
	\cite{nemhauser1978analysis}. Since $OPT$ is a set that minimizes $\totali$,
	from the definition of $f$, it follows that $OPT$ must maximize $f$ as
	$\totali(\fac)$ is a constant. Let $\q$ be the set of sites reported by
	\incg. Since \incg essentially mimics the greedy heuristic for
	non-decreasing sub-modular functions \cite{nemhauser1978analysis}, the same
	approximation bound is applicable for $f$. Therefore, $f(\q) \ge
	(1-1/e)f(OPT)$. From the definition of $f$, we can write $\totali(\q) \le
	(1-1/e)(\totali(OPT)-\totali(\fac))+\totali(\fac) \le
	(1-1/e)\totali(OPT)+\totali(\fac)/e$.	
	\hfill{}
\end{proof}

\subsection{Proof of Theorem~\ref{thm:complexity of incg for meantips} (Complexity of \incg)}
\label{sec:complexity of incg for meantips}

\begin{proof}
	The space required to store the trajectories and the nodes on the road
	network are $O(ml)$ and $O(n)$ respectively. Further, storing the distance
	values for each pair of site $s_i \in \s$ and trajectory $T_j \in \traj$
	requires $O(m.n)$ space.  Further, storing the $\nn()$ maps require $O(m)$
	space. Since $|\q| =k =O(n)$, the overall space complexity is $O(m.(n+l))$.
	
	We next analyze the time complexity. In each iteration of the \incg
	algorithm, we compute the total inconvenience $\totali(\q \cup \{s_i\})$ for
	each site $s_i \in \s \setminus \q$.  For each site $s_i$, this computation
	requires $O(m)$ time over all the trajectories. Once a site is added to
	$\q$, the $\nn$ maps are updated in $O(m)$ time. Thus, each iteration
	requires $O(m.n)$ time. The overall time complexity of \incg, running over
	$k$ iterations is, therefore, $O(k.m.n)$.
	\hfill{}
\end{proof}

\balance

\end{document}